\newtheorem{assump}{Assumption}
\newtheorem{theorem}{Theorem}
\newtheorem{prop}{Proposition}
\newcommand*{\indep}{%
	\mathbin{%
		\mathpalette{\@indep}{}%
	}%
}
\newcommand*{\nindep}{%
	\mathbin{
		\mathpalette{\@indep}{\not}
	}%
}
\newcommand*{\@indep}[2]{%
	\sbox0{$#1\perp\m@th$}
	\sbox2{$#1=$}
	\sbox4{$#1\vcenter{}$}
	\rlap{\copy0}
	\dimen@=\dimexpr\ht2-\ht4-.2pt\relax
	\kern\dimen@
	{#2}%
	\kern\dimen@
	\copy0 
} 
\newcommand*{\addFileDependency}[1]{
  \typeout{(#1)}
  \@addtofilelist{#1}
  \IfFileExists{#1}{}{\typeout{No file #1.}}
}
\newcommand*{\myexternaldocument}[1]{%
    \externaldocument{#1}%
    \addFileDependency{#1.tex}%
    \addFileDependency{#1.aux}%
}
\begin{document}

\maketitle


\section{Introduction}
\label{sec:intro}
Current demand for modelling complex interactions between variables, combined with the greater availability of high-dimensional discrete data, possibly showing a large number of zeros and measured on a small number of units, has led to an increased focus on structure learning for discrete data in high dimensional settings. {In some applications, directed graphs are preferable, as they translate naturally into domain-specific concepts. For instance, when modelling interactions among genes, an arrow between two nodes describes the direction of the information flow between the corresponding two genes. Hence, the problem often specializes in structure learning of DAGs.} 

Some solutions are nowadays available in the literature for learning (sparse) DAGs for discrete data. \cite{hadiji2015poisson} introduce a novel family of non-parametric Poisson graphical models, called Poisson Dependency Networks (PDN), trained using functional gradient ascent; \cite{park2015learning} define general Poisson DAG models which are identifiable from observational data, and present a polynomial-time algorithm that learns the Poisson DAG model under suitable regularity conditions.

The previously cited approaches, and, more broadly, typical approaches to structure learning of DAGs, usually assume no knowledge of the graph structure to be learned other than sparseness. However, in the context of learning gene networks, a wealth of information is available, usually stored in repositories such as KEGG \citep{kanehisa:2000},  about a myriad of interactions, reactions, and regulations. Such information is often identified piecemeal over extended periods and by a variety of researchers, and can therefore be not fully precise. Nevertheless, it allows ordering variables following directional relationships. 

{%
When the ordering of variables is known, then the strategy of neighbourhood recovery turns the problem of learning the structure of a DAG into a straightforward task. The graph selection problem is split into a sequence of feature selection problems by assuming that the conditional distribution of each variable given its precedents in the topological ordering follows the chosen distribution. To learn the structure of a DAG, it is sufficient to perform a (sparse) regression for each variable, treating all preceding variables as covariates. It is known that simply performing lasso-type $\ell_1$-penalized regressions yields consistency for both the coefficients and the sparsity pattern (the set of nonzero coefficients) in regression \citep{li2015sparsistency}, and thus,  yields consistency for the DAG structure. 

The idea of assuming a known ordering of the variables is not novel and several authors have considered decoupling the search over orderings from the graph estimation given the ordering \citep{friedman2003being}. However,  coming up with a good ordering of variables usually requires a significant amount of domain knowledge, which is not commonly available in many practical applications. As a consequence, various approaches exploiting the topological ordering of the variables implement, in different ways,  a search over the space of topological orderings. In the Gaussian setting, \cite{buhlmann2014cam} estimate a superset of the skeleton of the underlying DAG, then search a topological ordering using (restricted) maximum likelihood estimation based on an additive structural equation model with Gaussian errors, and finally, exploiting the estimated order of the variables, use sparse additive regression to estimate the functions in an additive structural equation model. \cite{10.5555/3020336.3020407} propose to learn a DAG by a search, not over the space of structures, but over the space of orderings, selecting for each ordering {  the best consistent network} (OS algorithm); \cite{schmidt2007learning} couple the OS algorithm with a sparsity-promoting $\ell_1$-regularization.


In this work, we propose one algorithm for structure learning of DAGs for count data which is not affected by the non-uniqueness of the topological ordering and overcomes some of the shortcomings induced by the use of penalized procedures. It is the case that penalization is scale-variant, a condition that often interferes with some of the filtering steps that are commonly performed in {the analysis of complex datasets, such as those arising in genomics}. Moreover,  it could suffer from the over-shrinking of small but significant covariate effects.  Our proposal, named { Or-PPGM (PC-based learning of Oriented Poisson Graphical Models)}, is based on a modification of the PC algorithm \citep{kalisch2007estimating, spirtes1993causation}.  In Or-PPGM, we assume to know whether a variable, $i$ say, comes before or after $j$ ($i\ne j$) in an ordering of the available variables that describes the fundamental mechanisms operative in the physical situation. {Furthermore,} we substitute penalized estimation of the local regressions with a testing procedure on the regression parameters, following the lines of the  PC algorithm. Provided that the assumed topological ordering belongs to the space of true topological orderings, we give a theoretical proof of convergence of Or-PPGM that shows that the proposed algorithm consistently estimates the edges of the underlying DAG, as the sample size $n\rightarrow \infty,$ irrespective of the choice of the topological ordering.  The iterative testing procedure performed within the PC algorithm allows to guarantee scale-invariance of the procedure and avoids over-shrinking of small effects. 
}

The paper is organized as follows. Some essential concepts on DAG models and Poisson DAG models are given in Section \ref{background}. Section \ref{proposedmethod} is devoted to the illustration of the proposed algorithm. We then provide statistical guarantees in Section \ref{statistical}, and, in Section \ref{Empiricalstudy-chap2},  experimental results that illustrate the performance of our methods in finite samples. {Section \ref{application} provides an application to gene expression data.} Some conclusions and remarks are provided in Section \ref{guidremarks}. Results needed to prove the main theorem in the paper, another structure learning algorithm and additional simulation results can be found in  Supplementary Material.

\section{Background on Poisson DAG models}\label{background}
In this section, we review, setting up the required notation, some essential concepts on DAG models and present  Poisson DAG models according to the model specification introduced by \cite{park2015learning}.  



Consider a $p$-dimensional random vector $\bold{X}=(X_1,\ldots,X_p)$ such that each random variable $X_s$ corresponds to a node of a directed graph $G=(V,E)$ with index set $V=\{1,2,\ldots,p\}$. 
A directed edge from node $k$ to node $j$ is denoted by $k\rightarrow j$, $k$ is called a parent of node $j$, and $j$ is a child of $k$. The set of parents of a vertex $j$, denoted $pa(j)$, consists of all parents of node $j;$ 
its descendants, i.e., nodes that can be reached from $j$ by repeatedly moving from parent to child, are denoted $de(j).$ Non-descendants of  $j$ are $nd(j) = V\backslash(\{j\}\cup de(j))$. 

A DAG is a directed graph that does not have any directed cycles. In other words, there is no pair $(j,k)$ such that there are directed paths from $j$ to $k$ and from $k$ to $j$.  A topological ordering $j_1,\ldots,j_p$ is an order of $p$ nodes such that there are no directed paths from $j_k$ to $j_t$ if $k>t$.

In a DAG, independence is encoded by the relation of d-separation, defined as in \cite{lauritzengraphical}.
A random vector $\bold{X}$ satisfies the local Markov property with respect to (w.r.t.) a DAG $G$ if 
$X_v \indep \bold{X}_{nd(v)\backslash pa(v)}|\bold{X}_{pa(v)}$   for every $v\in V,$ { where $\bold{X}_U=\{X_t, \, t\in U\}.$}  Similarly, $\bold{X}$ satisfies the global Markov property w.r.t. $G$ if 
$\bold{X}_A\indep \bold{X}_B|\bold{X}_C$
for all triples of pairwise disjoint subsets $A, B, C \subset V $ such that $C$ d-separates $A$ and $B$ in $G$, which we denote by $A \indep_G B | C.$
In this work, we make the assumption that the DAG $G$ is a perfect map, i.e., it satisfies the global Markov property and its reverse implication, known as faithfulness. A distribution  $P_\bold{X}$ is said to be faithful to graph $G$ if $\bold{X}_A\indep \bold{X}_B|\bold{X}_C\Rightarrow A\indep_G B|C,$ for all disjoint vertex sets $A,B,C.$

In the Poisson case, the distribution of $\mathbf{X}$ has the form 
\begin{eqnarray}\label{jointdist}
\mathbb{P}_{\boldsymbol{\theta}}(\bold{x})&=&\prod_{s=1}^{p}\mathbb{P}_{\boldsymbol{\theta}_s}(x_s|\bold{x}_{pa(s)})\\
&=&\exp\big\{\sum_{s=1}^p\sum_{t\in pa(s)}\theta_{st}x_sx_t-\sum_{s=1}^p\log(x_s!)-\sum_{s=1}^pe^{\sum_{t\in pa(s)}\theta_{st}x_t}\big\}.\nonumber
\end{eqnarray}
where { $\bold{x}$ is a realization of the random variable $\bold{X}$}, $\boldsymbol{\theta}_{s}=\{\theta_{st}|~ t\in pa(s)\}$, and $\boldsymbol{\theta}=\{\boldsymbol{\theta}_s,~ s\in V\}$ denotes the set of conditional dependence parameters of the local Poisson regression models characterizing the conditional densities $\mathbb{P}_{\boldsymbol{\theta}_s}(x_s|\bold{x}_{pa(s)})$, 

{\color{black}
$$\mathbb{P}_{\boldsymbol{\theta}_s}(x_s|\bold{x}_{pa(s)})\! = \! \exp\!\big\{\!\!\!\sum_{t\in {pa(s)}}\!\!\!\theta_{st}x_sx_t-\log(x_s!)-e^{\!\sum_{t\in {pa(s)}}\!\!\!\theta_{st}x_t}\big\}.$$
If we zero-pad the parameter $\boldsymbol{\theta}_s \in \mathbb{R}^{|pa(s)|}$ to include zero weights over $V\backslash\{\{s\}\cup pa(s)\}$, then the resulting parameter would lie in $\mathbb{R}^{p-1}$. Therefore, Poisson conditional densities can be written as,}
\begin{eqnarray}\label{dijoinprob}
\mathbb{P}_{\boldsymbol{\theta}_s}(x_s|\bold{x}_{pa(s)})\!& =& \! \exp\!\big\{\!\!\!\sum_{t\in {pa(s)}}\!\!\!\theta_{st}x_sx_t-\log(x_s!)-e^{\!\sum_{t\in {pa(s)}}\!\!\!\theta_{st}x_t}\big\}\\
&=&\exp\big\{x_{s}\langle\boldsymbol{\theta}_{s},\bold{x}_{V\backslash\{s\}}\rangle-\log(x_s!)-D(\langle\boldsymbol{\theta}_{s},\bold{x}_{V\backslash\{s\}}\rangle)\big\},\nonumber
\end{eqnarray}
where $\langle .,.\rangle$  denotes the dot product, and $D(\langle\boldsymbol{\theta}_{s},\bold{x}_{V\backslash\{s\}}\rangle)=e^{\sum_{t\in V\backslash\{s\}}\theta_{st}x_t}$.
This specification puts an edge from node $t$ to node $s$ if $\theta_{st}\ne 0$. A missing edge $t\rightarrow s$ corresponds to the condition $\theta_{st}= 0$, implying conditional independence of $X_s$ and $X_t$ given the parents of $s$, i.e., $X_s\indep X_t|\bold{x}_{pa(s)}$.  As we are only interested in the structure of the graph $G$, without loss of generality we have assumed that the local Poisson regression models characterizing the conditional densities $\mathbb{P}_{\boldsymbol{\theta}_s}(x_s|\bold{x}_{pa(s)})$ have zero intercept.  Specification \eqref{dijoinprob} is similar to that used in \cite{allen2013local} for the undirected version of Poisson graphical models. The only difference lies in the identification of the parameter space for $\boldsymbol{\theta}$  that guarantees the existence of the joint distribution.  While the distribution represented in \eqref{jointdist} is always a valid distribution, in the undirected case a joint distribution compatible with the local specifications exists only if all parameters assume non-positive values.

It is worth noting that, to have a perfect map, it is enough to assume faithfulness of the Poisson node conditional distributions to the graph $G$, as this guarantees faithfulness of the joint distributions thanks to the equivalence between local and global Markov property. 

\section{The Or-PPGM algorithm}\label{proposedmethod}
In this section, we tackle structure learning of DAGs, with the idea of exploiting available prior knowledge of the domain at hand to guide the search for the best structure. In particular, we will assume to know the topological ordering of variables. 
%
%
%
%
In what follows, we adopt the convention of using superscripts, e.g., $\bold{X}^{(1)},\ldots,\bold{X}^{(n)},$ to denote independent copies of the $p$-random vector  $\bold{X}$, where $\bold{X}^{(i)}= (X_{i1},\ldots,X_{ip})$. We denote with $\mathbb{X}=\{\bold{x}^{(1)},\ldots,\bold{x}^{(n)}\}$  the collection of $n$ observed samples drawn from the random vectors $\bold{X}^{(1)},\ldots,\bold{X}^{(n)}$, with $\bold{x}^{(i)}=(x_{i1},\ldots,x_{ip}),$ $ i=1,\ldots,n$. 

Let $i_1,i_2,\ldots,i_p$ indicate one of the possible topological orderings of the variables. The conditional distribution of each variable $X_{i_s}$  given its precedents, denoted $pre(i_s),$ in the topological ordering $i_1,i_2,\ldots,i_p$ can be written as
\begin{eqnarray}\label{orconddist}
\mathbb{P}_{\boldsymbol{\theta}_{i_s|pre(i_s)}}(x_{i_s}|\bold{x}_{pre(i_s)})&=& \exp\big\{x_{i_s} \langle\boldsymbol{\theta}_{i_s|pre(i_s)},\bold{x}_{pre(i_s)}\rangle -\log (x_{i_s}!)\nonumber\\
&&- D(\langle\boldsymbol{\theta}_{i_s|pre(i_s)},\bold{x}_{pre(i_s)}\rangle )\big\},
\end{eqnarray}
where $\boldsymbol{\theta}_{s|\bold{K}}=\{\theta_{st|\bold{K}}:~t\in \bold{K}\}$ 
denote the set of conditional parameters on conditional set $\bold{K}$.
Then, a rescaled negative node conditional log-likelihood formed by products of all the conditional distributions is as follows
\begin{eqnarray}\label{orloglikelihood}
l(\boldsymbol{\theta}_{i_s|pre(i_s)},\mathbb{X}_{i_s};\mathbb{X}_{pre(i_s)})
&&= -\frac{1}{n}\log \prod_{i=1}^{n}\mathbb{P}_{\boldsymbol{\theta}_{i_s|pre(i_s)}}(x_{ii_s}|\bold{x}_{pre(i_s)}^{(i)})\\
&&=\frac{1}{n}\sum_{i=1}^{n}\big[-x_{ii_s}\langle\boldsymbol{\theta}_{i_s|pre(i_s)},\bold{x}_{pre(i_s)}^{(i)}\rangle-\log (x_{ii_s}!)\nonumber\\ &&\quad -D(\langle\boldsymbol{\theta}_{i_s|pre(i_s)},\bold{x}_{pre(i_s)}^{(i)}\rangle)\big].\nonumber
\end{eqnarray}
The absence of an edge $t\rightarrow i_s$ implies $\theta_{i_st|pre(i_s)}$ to be equal to zero. 

{\color{black}  When the topological ordering is known, the expression of the joint distribution in \eqref{jointdist} suggests that  the structure of the network might be recovered from observed data  by disjointly maximizing the single factors in the log-likelihood $\ell(\boldsymbol{\theta},\mathbb{X})$ since the log-likelihood is decomposable as the sum of partial log-likelihoods
over all nodes. Therefore, structure learning could be based on solving local convex optimization problems. Each local estimated conditional dependence parameter  ${\boldsymbol{\hat\theta}_s}$ is then combined to form the global estimate.}

To tackle this problem, we propose a new algorithm, called Or-PPGM, based on a modification of the well-known PC algorithm \citep{kalisch2007estimating}.  
Here, we exploit the idea that the consistency of the PC algorithm ultimately depends upon the consistency of the tests of conditional independence employed in the learning process. In our case, consistent tests can be constructed from Wald-type tests on the parameters ${\theta}_{st|\bold{K}}$ (see also \cite{JMLR:v22:18-401}). We combine this idea with that of making use of topological ordering to determine the sequence of tests to be performed. Assuming that the order of variables is specified beforehand considerably reduces the number of conditional independence tests to be performed. Indeed, for each $s\in V$, it is sufficient to test if the data support the existence of the conditional independence relation $X_s\indep X_t|\bold{X}_{\bold{S}}$ only for  $t\in pre(s)$ and for any $\bold{S}\subseteq  pre(s)\backslash \{t\}$. {   In detail, we assume  that the  distribution of each variable $X_{s},$  conditional to all possible subsets  of variables $\mathbf{X}_\bold{K}, \, \bold{K}\subseteq pre(s)$ is a Poisson distribution:
\begin{equation*}
    X_s| {\bold{x}_{\bold{K}}}\sim \text{Pois}\big(\exp\big\{\sum_{t\in \bold{K}}\theta_{st|\bold{K}}x_t\big\}\big).
\end{equation*}

Then, the algorithm starts from the complete DAG obtained by directing all edges of a complete undirected graph as suggested by the topological ordering. At each level of the cardinality of the conditioning variable set $\bold{S}$, we test, at some pre-specified significance level, the null hypothesis $H_0: {\theta}_{st|\bold{K}}=0$, where $\bold{S}=\bold{K}\backslash\{s\}$. If the null hypothesis is not rejected, the edge $t\rightarrow s$ is considered to be absent from the graph. We note that the cardinality of the set $\bold{S}$  increases from 0 to $\min\{ord(s)-1,m\}$, where $ord(s)$ is the position of node $s$ in the topological ordering and $m$ an upper bound on the cardinality of conditional sets. 
{\color{black} It is worth noting that the value of $m$ is chosen based on prior knowledge about the sparsity of the graph. In the case of no prior knowledge, it will be set to $p-2$.}
For a description of the conditional independence test, as well as the definition of an appropriate test statistic, we refer readers to \cite{JMLR:v22:18-401}. The pseudo-code of the Or-PPGM algorithm is given in Algorithm \ref{Or-PPGMpseudocode}.
\begin{algorithm}
{	\fontsize{9.5}{10}\selectfont
		\caption{\label{Or-PPGMpseudocode} The Or-PPGM algorithm. }
		\begin{algorithmic}[1]
			\hrule
			\vskip 2pt
			\State{\textbf{Input}:} $n$ independent realizations of the $p$-random vector $\bold{X}$,  $\bold{x}^{(1)},\bold{x}^{(2)},\ldots,\bold{x}^{(n)}$; a topological ordering $Ord$, (and a stopping level $m$).
			\State{\textbf{Output}:} An estimated DAG $\hat{G}$.
			\State{} Form the complete undirected graph $\tilde{G}$ on the vertex set $V$.
			\State{} Orient edges on $\tilde{G}$ respecting the topological ordering to form DAG $G'$. 
			\State{} $l=-1$;  $\quad \hat{G}=G'$
			\State{} \textbf{repeat}
			\State{} \quad $l=l+1$
			\State{} \quad \textbf{for} all vertices $s\in V$, \textbf{do}
			\State{} \quad\quad let $\bold{K}_s = pa(s)$
			\State{} \quad \textbf{end for}
			\State{} \quad \textbf{repeat}
			\State{} \quad\quad
			Select a (new) edge $t\rightarrow s$ in $\hat{G}$ such that 
			\State{} \quad\quad$|\bold{K}_s\backslash\{t\}|\ge l$.
			\State{} \quad\quad\textbf{repeat}
			\State{} \quad\quad\quad choose a (new) set $\bold{S}\subset \bold{K}_s\backslash\{t\}$ with $|\bold{S}|=l$.
			\State{} \quad\quad\quad\textbf{if} $H_0: {\theta}_{st|\bold{K}}=0$ not rejected
			\State{} \quad\quad\quad\quad delete edge $t\rightarrow s$ from $\hat{G}$
			\State{} \quad\quad\quad \textbf{end if}
			\State{} \quad\quad\textbf{until} edge $t\rightarrow s$ is deleted or all $\bold{S}\subset \bold{K}_s\backslash\{t\}$ with $|\bold{S}|=l$ have been considered.
			
			\State{} \quad\textbf{until} all edge $t\rightarrow s$ in $\hat{G}$ such that $|\bold{K}_s\backslash\{t\}|\ge l$ and 
			$\bold{S}\subset \bold{K}_s\backslash\{t\}$ with $|\bold{S}|=l$ have been tested for conditional independence.
			\State{} \textbf{until} $l=m$ or for each edge $t\rightarrow s$ in $\hat{G}$: $|\bold{K}_s\backslash\{t\}|< l$.
		\end{algorithmic}
		\hrule
	}
\end{algorithm}

{

}
%

\section{Statistical Guarantees}\label{statistical}

In this section, we address the property of statistical consistency of Or-PPGM. {In detail, we study}  the limiting behaviour of our estimation procedure as the sample size $n$, and the model size $p$ go to infinity. 
In what follows, we derive uniform consistency of our estimators explicitly as a function of the sample size, $n$, the number of nodes, $p$, (and of  $m$) by assuming that the true distribution is faithful to the graph. 
We acknowledge that our results are based on the work of \cite{yang2012graphical}  for exponential family models, and leverage the proof of Lemma 4 in \cite{kalisch2007estimating} in the proof of our main theorem.  

For the readers' convenience, before stating the main result, we summarize some notation that will be used throughout this proof. Given a vector $u\in \mathbb{R}^p$, and a parameter $q\in[0,\infty]$, we write $\|u\|_q$ to denote the usual $\ell_q$ norm. Given a matrix $A\in \mathbb{R}^{p\times p}$, denote the largest and smallest eigenvalues as $\Lambda_{\max}(A)$, $\Lambda_{\min}(A)$, respectively. We use $|||A|||_2= \sqrt{\Lambda_{\max}(A^TA)}$ to denote the spectral norm, corresponding to the largest singular value of $A$,
and the $\ell_\infty$ matrix norm is defined as
$|||A|||_\infty=\max_{i=1,\ldots,a}\sum_{j=1}^{a}|A_{i,j}|.$

\subsection{Assumptions}
We will begin by stating the assumptions that underlie our analysis, and then give a precise statement of the main results. 

Denote the population Fisher information matrix and the sample Fisher information matrix corresponding to the  covariates in model { \eqref{dijoinprob} with $\bold{K}= V\backslash\{s\}$ } as follows
$I_s(\boldsymbol{\theta}_s)=- \mathbb{E}_{\boldsymbol{\theta}}\left(\nabla^2 \log\left(\mathbb{P}_{\boldsymbol{\theta}_s}(X_s|\bold{X}_{V\backslash\{s\}})\right)\right),$
and
$Q_s(\boldsymbol{\theta}_s)= \nabla^2 l(\boldsymbol{\theta}_s,\bold{X}_s;\bold{X}_{V\backslash \{s\}}).$
We note that we will consider the problem of maximum likelihood on a closed and bounded dish $\boldsymbol{\Theta}\subset \mathbb{R}^{(p-1)}$. {\color{black}  For $\boldsymbol{\theta}_{s|\bold{K}}\in \mathbb{R}^{|\bold{K}|}$, we can immerse $\boldsymbol{\theta}_{s|\bold{K}}$ into $\boldsymbol{\Theta}\subset \mathbb{R}^{(p-1)}$ by zero-pad $\boldsymbol{\theta}_{s|\bold{K}}$ to include zero weights over $V\backslash\{\bold{K}\cup\{s\}\}$. }
\begin{assump}\label{assum1}
	The coefficients $\boldsymbol{\theta}_{s|\bold{K}}\in \boldsymbol{\Theta}$ for  all sets $\bold{K}\subseteq V\backslash\{s\}$ and all $s\in V$ have an upper bound norm, 
	$\max_{s,t,\bold{K}}|\theta_{st|\bold{K}}|\le M,$ and  a lower bound norm, 
	$\min_{s,t,\bold{K}}|\theta_{st|\bold{K}}|\ge c,~\forall t\in \bold{K}.$
 \end{assump}
\begin{assump}\label{assum2} The Fisher information  matrix corresponding to the  covariates in model  \eqref{dijoinprob} with $\bold{K}= V\backslash\{s\}$ has bounded eigenvalues, i.e., there exists a constant $\lambda_{\min}>0$ such that
	$\Lambda_{\min}(I_s(\boldsymbol{\theta}_s))\ge \lambda_{\min}, ~\forall~\boldsymbol{\theta}_s\in \boldsymbol{\Theta}.$
	Moreover, we require that
	$\Lambda_{\max}\bigg(\mathbb{E}_{\boldsymbol{\theta}}\left( \bold{X}_{V\backslash \{s\}}^T\bold{X}_{V\backslash \{s\}}\right)\bigg)\le \lambda_{\max}, \forall 	s\in V,\forall~\boldsymbol{\theta}\in \boldsymbol{\Theta},$
	where  $\lambda_{\max}$ is some constant such that $\lambda_{\max} <\infty$.
\end{assump}

	Assumption~\ref{assum1} simply bounds the effects of covariates in all local models. In other words, we consider that the parameters $\theta_{st|\bold{K}}$ belong to a compact set bounded by $M$. {\color{black} Note that the upper bound value $M$ can be arbitrarily large. Hence, this assumption does not limit the general applicability of the method.} Being the expected value of the rescaled negative log-likelihood twice differentiable, the lower
	bound on the eigenvalues of the Fisher information matrix in  Assumption~\ref{assum2} guarantees strong convexity in all partial models. 	Condition on the upper eigenvalue of the covariance matrix guarantees that the relevant covariates do not become overly dependent, a requirement which is commonly adopted in these settings.
 
 It is worth noting that for a given topological ordering,  consistency of the proposed algorithm requires that condition in Assumption \ref{assum1}  is satisfied only on all subsets  $\bold{K}\subseteq pre(i_s)$. However, the topological ordering may not be unique and, as a consequence, different topological orderings may lead to different results.  To prove consistency uniformly over all topological orderings, a stronger assumption is needed, that requires the condition in Assumption \ref{assum1} to be satisfied on all subsets $\bold{K}\subseteq V\backslash \{s\}$. This is the solution adopted here.

\begin{assump}\label{tail}
	Suppose $\bold{X}$ is a $p$-random vector with node conditional distribution specified in \eqref{dijoinprob}. Then, for any   positive constant $\delta$, there exists some constant $c_1>0$, such that
	$\mathbb{P}_{\boldsymbol{\theta}_s}({X_s}\ge \delta\log n)\le c_1 n^{-\delta}, ~\forall~ s\in V, ~\forall~\boldsymbol{\theta}_s\in {\boldsymbol{\Theta}}.$
	
\end{assump}

\begin{assump}\label{innerprod}
	Suppose $\bold{X}$ is a $p$-random vector with node conditional distribution specified in \eqref{dijoinprob}. Then, for any $\boldsymbol{\theta}\in {\boldsymbol{\Theta}},$ there exists some positive constants $\nu,c_2,$ and $\gamma<1/3$, such that
	$\mathbb{P}_{\boldsymbol{\theta}}(\nu+\langle\boldsymbol{\theta},\bold{X}\rangle \ge \gamma\log n)\le c_2 \kappa(n,\gamma),$
	where $\kappa(n,\gamma)=o_p(n^{a})$ for some $a < -1$.
\end{assump}
	The condition on the marginal distribution in Assumption \ref{tail} guarantees that the considered variables do not have heavy tails,  a common condition permitting to achieve consistency. Assumption \ref{innerprod} specifies the parameter space on which we can prove the consistency of local estimators. 
	Compared to  Assumption 5 in \cite{yang2012graphical} and Condition 4 in \cite{yang2015graphical}, Assumption~\ref{innerprod} appears to be much weaker. Indeed,  \cite{yang2012graphical} require $\gamma<\dfrac{1}{4}$ and $\|\boldsymbol{\theta\|_2}\le \dfrac{\log n}{18\log(\max\{n,p\})}$, whereas we only require $\gamma<\dfrac{1}{3}$ and no specified bound is put on $\|\boldsymbol{\theta}\|_2$ (since the negative elements of $\boldsymbol{\theta}$ can be arbitrarily small). Moreover, Condition 4 in \cite{yang2012graphical} is written in analytical form, i.e., a form more restrictive than the probability form here employed.

	When conditional dependencies are all positive, a condition also known as ``additive relationship'' among variables, Assumption 	\ref{innerprod} also { implies} the sparsity of the graphs.



\subsection{Consistency of the Or-PPGM algorithm}
{  
DAG models can be defined only up to their Markov equivalence class, a set consisting of all DAGs encoding the same set of conditional independence.  However,  Poisson DAG models in \eqref{dijoinprob}  are identifiable, as shown in Appendix, Theorem \ref{identify}, where we provide an alternative proof of identifiability benefiting from the ideas developed in the work of \cite{peters2013identifiability}, and avoiding a redundant condition in \cite{park2015learning} (see Theorem 3.1). 
Identifiability has important consequences in our setting. Indeed, it guarantees that the true graph is unique, and, consequently, that Or-PPGM converges to the true unique graph irrespective of which ordering among the true existing ones is chosen to inform the algorithm.
}

\begin{theorem} \label{mainresult}
	Assume \ref{assum1}- \ref{innerprod}. Denote by $\hat{G}(\alpha_n)$ the estimator resulting from Algorithm \ref{Or-PPGMpseudocode}, and by $G$ the true graph. Then, there exists a numerical sequence $\alpha_n\longrightarrow 0$, such that 
	$\mathbb{P}_{\boldsymbol{\theta}}(\hat{G}(\alpha_n)=G)=1, ~\forall~\boldsymbol{\theta}\in {\Omega(\boldsymbol{\Theta})},$
	when $n\longrightarrow \infty$, where $\Omega(\boldsymbol{\Theta})$ is the space such that the faithfulness assumption is satisfied.
\end{theorem}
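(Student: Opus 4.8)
The plan is to reduce the event $\{\hat{G}(\alpha_n)=G\}$ to the event that every conditional independence test performed by Algorithm \ref{Or-PPGMpseudocode} returns the correct decision, and then to bound the total error probability by a union bound. Because the topological ordering is supplied as input, the algorithm never has to orient edges, and by the identifiability result (Theorem \ref{identify}) the true $G$ is the unique DAG compatible with the ordering and the conditional independence structure; consistency therefore reduces to correct skeleton recovery. It thus suffices to show that, for every node $s$, every $t\in pre(s)$, and every conditioning set $\mathbf{S}\subseteq pre(s)\setminus\{t\}$ with $|\mathbf{S}|\le m$, the Wald-type test of $H_0:\theta_{st|\mathbf{K}}=0$ (with $\mathbf{K}=\mathbf{S}\cup\{t\}$) correctly decides whether $t\to s$ belongs to $G$. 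Faithfulness over $\Omega(\boldsymbol{\Theta})$ makes the absence of the edge equivalent to $\theta_{st|\mathbf{K}}=0$ for the relevant $\mathbf{K}$, while the lower bound in Assumption~\ref{assum1} guarantees that a present edge satisfies $|\theta_{st|\mathbf{K}}|\ge c>0$, supplying the detectability gap that drives the power of the test.

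The core analytic step is to establish, uniformly over $\boldsymbol{\theta}\in\boldsymbol{\Theta}$ and over all admissible conditioning sets, that each local maximum-likelihood estimator $\hat{\boldsymbol{\theta}}_{s|\mathbf{K}}$ minimizing \eqref{orloglikelihood} is $\sqrt{n}$-consistent and asymptotically normal, with a consistently estimable variance. Here the machinery of \cite{yang2012graphical} for node-conditional exponential family models enters: Assumption~\ref{assum2} (strong convexity through the lower bound on $\Lambda_{\min}(I_s)$, together with the bounded design second moment) yields a locally well-conditioned optimization problem, while Assumptions~\ref{tail} and~\ref{innerprod} supply the tail control needed to bound the deviation of the sample Fisher information $Q_s(\boldsymbol{\theta}_s)$ from its population counterpart $I_s(\boldsymbol{\theta}_s)$ and to control the score for the unbounded Poisson covariates. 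I would first show that $|||Q_s-I_s|||_\infty\to 0$ at an exponential-in-$n$ rate (up to the $\log n$ factors induced by Assumption~\ref{tail}), then combine this with a Taylor expansion of the score around the truth to obtain the limiting distribution of $\hat{\boldsymbol{\theta}}_{s|\mathbf{K}}$, and hence of the Wald statistic $W_{st|\mathbf{K}}=n\,\hat{\theta}_{st|\mathbf{K}}^2/\hat{v}_{st|\mathbf{K}}$, where $\hat{v}_{st|\mathbf{K}}$ is the estimated asymptotic variance read off from the inverse sample Fisher information.

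With the distributional control in hand I would treat the two error types separately. Under $H_0$ the statistic $W_{st|\mathbf{K}}$ is asymptotically $\chi^2_1$, so letting the critical value diverge slowly (equivalently, $\alpha_n\to 0$ at a suitable rate) drives the Type I error to zero; under the alternative the gap $|\theta_{st|\mathbf{K}}|\ge c$ forces $W_{st|\mathbf{K}}$ to grow like $n c^2/\hat{v}_{st|\mathbf{K}}$, so the same diverging critical value still yields vanishing Type II error. Mirroring the proof of Lemma~4 in \cite{kalisch2007estimating}, I would calibrate $\alpha_n$ so that the per-test error probability decays faster than the inverse of the number of tests. Since the ordering restricts tests to $t\in pre(s)$ with $|\mathbf{S}|\le m$, the total number of tests is $O(p^{m+2})$, a polynomial factor absorbed by the exponential concentration rate; a union bound over all tests then gives $\mathbb{P}_{\boldsymbol{\theta}}(\hat{G}(\alpha_n)\neq G)\to 0$ uniformly over $\Omega(\boldsymbol{\Theta})$, which is the claim.

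The main obstacle I anticipate is the uniformity of the concentration for the sample Fisher information and the score across the unbounded Poisson variables: unlike the bounded or sub-Gaussian designs of standard PC analyses, the covariates enter through the exponential mean $D(\langle\boldsymbol{\theta},\mathbf{x}\rangle)$, so the relevant moments can be large and must be tamed simultaneously over all $O(p^{m+2})$ conditioning sets and over the growing dimension $p$. Assumptions~\ref{tail} and~\ref{innerprod} are calibrated precisely for this: the former truncates $X_s$ at $\delta\log n$ with polynomially small overflow probability, and the latter keeps $\nu+\langle\boldsymbol{\theta},\mathbf{X}\rangle$ below $\gamma\log n$ with $\gamma<1/3$, ensuring the exponential terms grow no faster than a controllable power of $n$. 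Getting the bookkeeping of these $\log n$ factors to interact correctly with the union bound, so that the product of the number of tests and the per-test tail still vanishes, is the delicate part of the argument.
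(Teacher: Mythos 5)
Your proposal follows essentially the same route as the paper's proof: reduce the event $\{\hat{G}(\alpha_n)=G\}$ to the absence of type~I/type~II errors in the Wald-type tests of $H_0:\theta_{st|\mathbf{K}}=0$, bound each per-test error by nonasymptotic concentration of the local estimators $\hat{\theta}_{st|\mathbf{K}}$ (the paper's Theorem~C.6 in the Supplementary, built on the machinery of \cite{yang2012graphical} together with the tail control of Assumptions~\ref{tail}--\ref{innerprod}), calibrate $\alpha_n=2(1-\Phi(n^b))$ so that the signal lower bound of Assumption~\ref{assum1} separates null from alternative, and close with a union bound over the $O(p^{m+2})$ tests in the spirit of Lemma~4 of \cite{kalisch2007estimating}. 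The only cosmetic deviation is your appeal to the asymptotic $\chi^2_1$ null law of the Wald statistic, which the paper never needs --- its argument is purely concentration-based, exactly as in the fallback calibration you describe --- so the two proofs coincide in substance.
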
 
{   {\bf Proof.} See Appendix, Section \ref{proof_theorem1}.}

{  
The proof of the above-given Theorem \ref{mainresult} does not depend on which topological order is considered. This implies that, even if for different topological orderings $T_1,T_2,\ldots,T_k$, Algorithm \ref{Or-PPGMpseudocode} performs different sequences of tests $S_1,S_2,\ldots,S_k$, resulting respectively in estimated graphs $\hat{G}^{T_1}(\alpha_n), \hat{G}^{T_2}(\alpha_n), \ldots, \hat{G}^{T_k}(\alpha_n),$   there exists a numerical sequence $\alpha_n\rightarrow 0$, such that the estimators $\hat{G}^{T}(\alpha_n),~ T=T_1,T_2,\ldots, T_k$ {converge} to the true unique graph.	
}

{\color{black}It is worth noting that for structure learning of undirected graphs, \cite{JMLR:v22:18-401} derived statistical guarantees  based on the assumption that the node-wise data generating process belongs to the truncated Poisson distribution. In the case of Poisson node conditional distributions, a proof of consistency of PC-LPGM with proper test statistic can be provided in the situation of “competitive relationships” between variables, and it is still an unsolved question in the case of unrestricted conditional interaction parameters. Here, we consider DAGs, a situation that guarantees the existence of a joint distribution without the need of restricting conditional interaction parameters, i.e., considering both positive and negative parameters. Moreover, in \cite{JMLR:v22:18-401}, for each pair of nodes $s$ and $t$, we test $\theta_{st|\bold{K}} = 0$, where $\bold{K}$ could be all possible subsets of $V\backslash\{s\}$. Here, for each ordered pair of nodes $s$ and $t$, we test $\theta_{st|\bold{K}} = 0$, with $\bold{K}$ as a subset of $pre(s)$. Therefore, 
the number of conditional independent tests performing during the run of PC procedure reduces. This difference ensures the validation of the proof of Theorem \ref{mainresult} when moving from undirected graphs to DAGs. }

%
\section{Empirical study}\label{Empiricalstudy-chap2}
\noindent
Here, we empirically evaluate the ability of our proposal to retrieve the true DAG. As a measure of the ability to recover the true structure of the graphs,  we  adopt three criteria including Precision $P$;  Recall $R$; and their harmonic mean, known as $F_1$-score, respectively defined as 
$$ P=\frac{TP}{TP+FP},\, R=\frac{TP}{TP+FN},\, F_1=2  \frac{P .  R}{P+R},$$
where TP (true positive), FP (false positive), and FN (false negative) refer to the { number of} inferred edges \citep{liu2010stability}. 

 {   We also aim to compare Or-PPGM to possible contestants. To evaluate the effect of limiting the cardinality of the conditional set, we  consider a variant of our proposal, that we call Oriented-Local Poisson Graphical Models (Or-LPGM), that  for each $i_s \in V$ fixes the set of parents of node $i_s$ to be  
$$\hat{pa}(i_s)=\{t\in pre(i_s) \text{ such that } H_0: ~ {\theta}_{i_st|pre(i_s)}=0 \text{ is rejected}\}.$$
Moreover, we compare Or-PPGM to several popular competitors. }As competitors, we consider structure learning algorithms for both Poisson and non-Poisson variables. {   Some of the considered competitors are adaptations to our specific setting of established algorithms and are, therefore, firstly scrutinised in this simulation exercise.} In detail, as representatives of algorithms for Poisson data, we consider: i) one variant of the K2 algorithm \citep{cooper1992bayesian}, PKBIC, able to deal with Poisson data and based on a scoring criterion frequently used in model selection (see Supplementary Material, Section B for details); ii) the PDN (Poisson Dependency Networks) algorithm in \cite{hadiji2015poisson}; iii) the overdispersion scoring (ODS) algorithm in \cite{park2015learning}. It is worth noting that  PKBIC is indeed a new structure learning algorithm for Poisson data, whose consistency is proved in Supplementary Material, Section B. Moreover, we consider a structure learning method dealing with the class of categorical data, namely the Max Min Hill Climbing (MMHC) algorithm \citep{tsamardinos2006max}. To apply such algorithms, we categorize our data using Gaussian mixture models on log-transformed data shifted by 1 \citep{fraley2002model}. 
{ Finally, taking into account that structure learning for discrete data is usually performed by employing methods for continuous data after suitable data transformation, we consider two representatives of  approaches based on the Gaussian assumption,  that are, the PC algorithm \citep{kalisch2007estimating}, and the Bayesian network structure learning \citep{kuipers2022efficient} using BGe score, applied to  log-transformed  data shifted by 1. }



\subsection{Data generation}
\noindent
For two different cardinalities,  $p=10$ and $p=100$,  we consider three graphs of different structure:
{  (i) a scale-free graph, in which the node degree distribution follows a power law; (ii) a hub graph, where each node is connected to one of the hub nodes; (iii) {\color{black} an Erdos-Renyi graph, where the presence of the edges 
is drawn from   {independent and identically distributed} Bernoulli random variables. 
To construct the scale-free and Erdos-Renyi graphs, we employed the R package \textit{igraph} \citep{csardi2006igraph}. For the scale-free graphs, we followed the Barabasi-Albert model with parameter $power=0.01, zero.appeal=p$. For the Erdos-Renyi graphs, we followed the Erdos-Renyi model with probability to draw one edge between two vertices $\gamma=0.2$ for $p=10$ and $\gamma=0.02$ for $p=100$. }To construct the hub graphs, we assumed 2 hub nodes for $p= 10$, and 5 hub nodes for $p=100$.  To convert them into DAGs, we fixed a topological ordering for each graph by taking a permutation of considered variables. Once the order was defined, undirected edges were oriented to form a DAG. }
See Figure~\ref{DAGtypes10} and~\ref{DAGtypes100} for plots of the three chosen DAGs for $p=10$ and $p=100$, respectively.

\begin{figure}[htbp]
	\begin{center}
		\includegraphics[width = 1\linewidth, height=0.3\textheight]{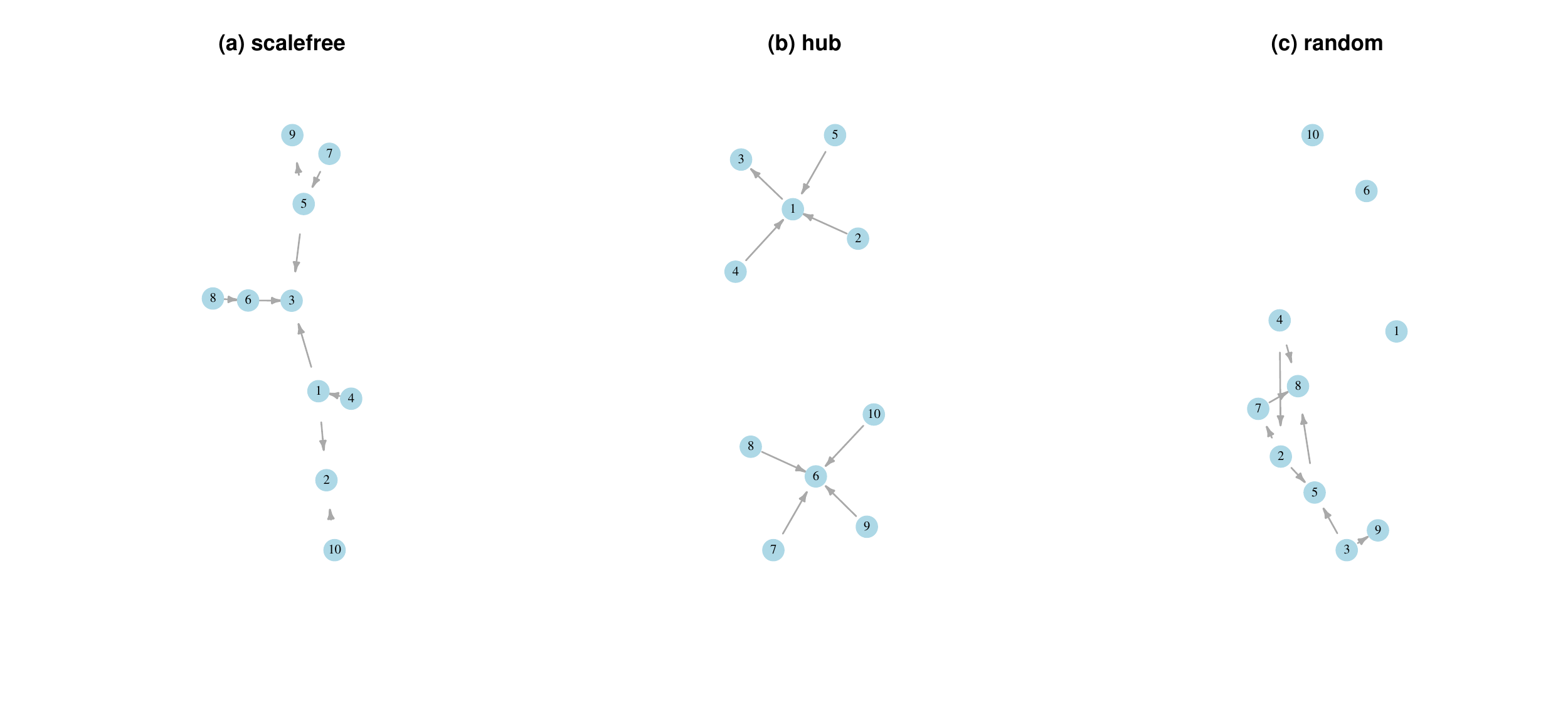}
		\vspace{-2cm}
		\caption{\scriptsize The graph structures for  $p=10$ employed in the simulation studies: (a) scale-free; (b)  hub; (c)  Erdos-Renyi graph.}
		\label{DAGtypes10}
	\end{center}
\end{figure} 
\begin{figure}[htbp]
	\begin{center}
		\includegraphics[width = 1\linewidth, height=0.28\textheight]{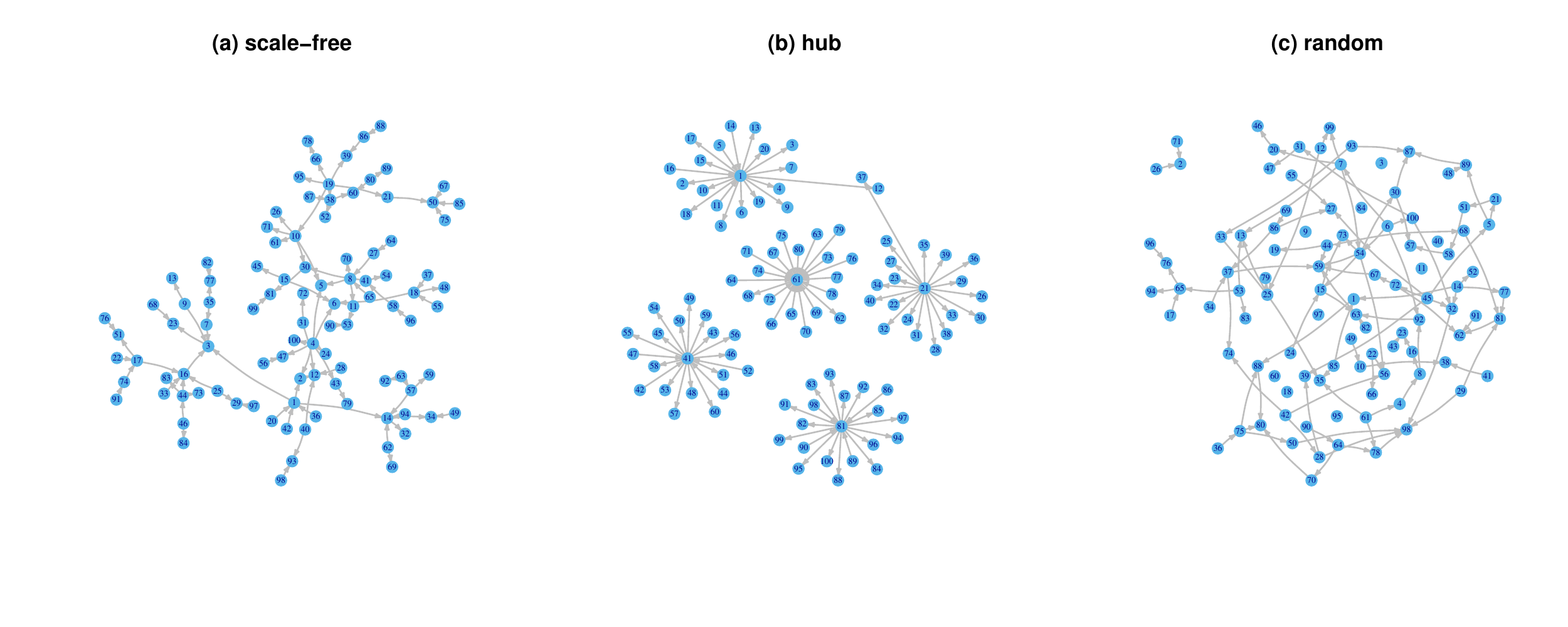}
		\vspace{-2cm}
		\caption{\scriptsize The graph structures for  $p=100$ employed in the simulation studies: (a) scale-free; (b)  hub; (c)  random graph.}
		\label{DAGtypes100}
	\end{center}
\end{figure}
To simulate data, we first construct an adjacency matrix $Adj=(\theta_{ij})$ as follows:
\begin{itemize}
	\item [1.] fill in the adjacency matrix $Adj$ with zeros;
	\item [2.] replace every entry corresponding to a directed edge by one;
	\item[3.] replace each entry equal to 1 with an independent realization from a Uniform random variable $U([-0.5, 0.5])$, representing the true values of parameter $\theta_{st}$. 
\end{itemize}
This yields a matrix $Adj$ whose entries are either zeros or in the range $[-0.5,0.5]$, representing positive and negative relations among variables. 
For each DAG corresponding to an adjacency matrix $Adj$, 50 datasets are sampled for four sample sizes, $n=100,200,500,1000$ with $p=10$, and $n=200,500,1000, 2000$ with $p=100$  as follows. The realization of the first random variable $X_{i_1}$ in the topological ordering $i_1,i_2,\ldots,i_p$ is sampled from a
$ \text{Pois}(\exp\{\theta_1\}),$ where
the default value of $\theta_1$ is 0. Realizations of the following random variables are recursively sampled from
$$X^{(t)}_{i_j}\sim \text{Pois}(\exp\{\sum_{k=i_1}^{i_{(j-1)}}\theta_{i_jk}x_{tk}\}).$$

\subsection{Learning algorithms}\label{guidcompals}

\noindent
Acronyms of the considered algorithms are listed below, along with specifications, if needed, of tuning parameters. In this study, besides the topological ordering, we also specify an additional input,  the upper limit for the cardinality of conditional sets, $m$, which in this study was set to $m=8$ for $p=10$ and $m=3$ for $p=100$, respectively. 

\begin{itemize}
\item[-] {\bf Or-PPGM}:  PC-based learning of Oriented Poisson Graphical Models (Section \ref{proposedmethod});
	\item[-] {\bf PKBIC}: variant of K2 tailored on Poisson data based on the use of BIC (Supplementary Material, Section B); 
	
	\item[-] {\bf Or-LPGM}: Oriented Local Poisson Graphical Model, variant of Or-PPGM with no restriction on cardinality of the conditioning set (Section \ref{Empiricalstudy-chap2});
	\item[-]{\bf PDN}: Poisson Dependency Networks algorithm  \citep{hadiji2015poisson} with n.trees = 20;
	\item[-]{\bf ODS}: Overdispersion Scoring (ODS) algorithm \citep{park2015learning} with $k$-fold cross validation ($k=10$);
	\item[-] {\bf MMHC}: Max Min Hill Climbing algorithm \citep{tsamardinos2006max} applied to data categorized by mixture models, using $\chi^2$ tests of independence.
	\item[-] {\bf PC}: PC algorithm \citep{kalisch2007estimating}  applied to log-transformed data, using Gaussian conditional independent tests.
\item[-] {\bf GBiDAG}: {\color{black} Bayesian network structure learning \citep{kuipers2022efficient} with an iterative order MCMC algorithm on an expanded search space
 using BGe score, using the order as an input, and applied to log-transformed data.}
\end{itemize}

We note that  ODS, PDN and MMHC employ a preliminary step aimed to estimate the topological ordering. This makes the comparison with our algorithms not completely fair. Nevertheless, we decided to consider these algorithms in our numerical studies to get a measure of the impact of the knowledge of the true topological ordering. 

{\color{black} It is also worth noting that the PC algorithm returns PDAGs that consist of both directed and undirected edges. In this case, we borrow the idea of  \cite{dor1992simple} to extend a PDAG to DAG. 
This procedure is guaranteed to find a solution for CPDAGs but not for more general PDAGs, as a directed extension of the PDAG may not exist, and the procedure will pick only one DAG from the equivalence class. However, we can prove that the Poisson DAG  is identifiable (see Appendix, Theorem \ref{identify}), i.e., there is a unique DAG equivalent to the set of conditional independence relations between considered variables. Hence, there is no problem if the procedure picks only one DAG from the equivalence class because the equivalence class has only one element.
For details of the algorithm, we refer the interested reader to the paper by \cite{dor1992simple}.}

\subsection{Results}\label{guidresults}
\noindent
For the two considered vertex cardinalities,  $p=\{10, 100\}$, and the chosen sample sizes,  $n=\{100,200,500,1000,2000\}$, Table \ref{ttable10-chap2} and Table \ref{ttable100-chap2} report, respectively,  Monte Carlo means of TP, FP, FN, P, R and $F_1$ score for each considered method. Each value is computed as an average of the 150 values obtained by simulating 50 samples for each of the three networks. Results disaggregated by network types are given in Supplementary Material, Section D, Tables D.1, and Tables D.2. These results indicate that the proposed algorithm (Or-PPGM), along with Or-LPGM and the modification of K2 (PKBIC) described in Supplementary Material, Section B, outperforms, on average, Gaussian-based competitors (GBiDAG, PC), category-based competitors (MMHC), as well as the state-of-the-art algorithms that are specifically designed for Poisson graphical models (ODS, PDN).

{When $p=10,$ the algorithms PKBIC, Or-PPGM and Or-LPGM reach the highest $F_1$ score, followed by the ODS, GBiDAG, and the PC algorithms. When $n \ge 1000$, the three first algorithms recover almost all edges, see Figure \ref{DAG10F1}.} A closer look at the Precision $P$ and Recall $R$ plot (see Figure D.1 in Supplementary, Section D.2) provides further insight into the behaviour of considered methods. The PKBIC, Or-LPGM and Or-PPGM algorithms always reach the highest Precision $P$ and Recall $R$. 

{\color{black} It is interesting to note that the performance of  PKBIC, Or-PPGM and Or-LPGM appears to be far better than that of the competing algorithms employing the Poisson assumption (PDN and ODS). The use of topological ordering overcomes the inaccuracies of the first step of the ODS algorithm, i.e., the identification of the order of variables, as well as the uncertainties in recovering the direction of interactions in PDN.} 

\begin{figure}[http]
	\begin{center}
		\includegraphics[width = 0.95\linewidth, height=0.43\textheight]{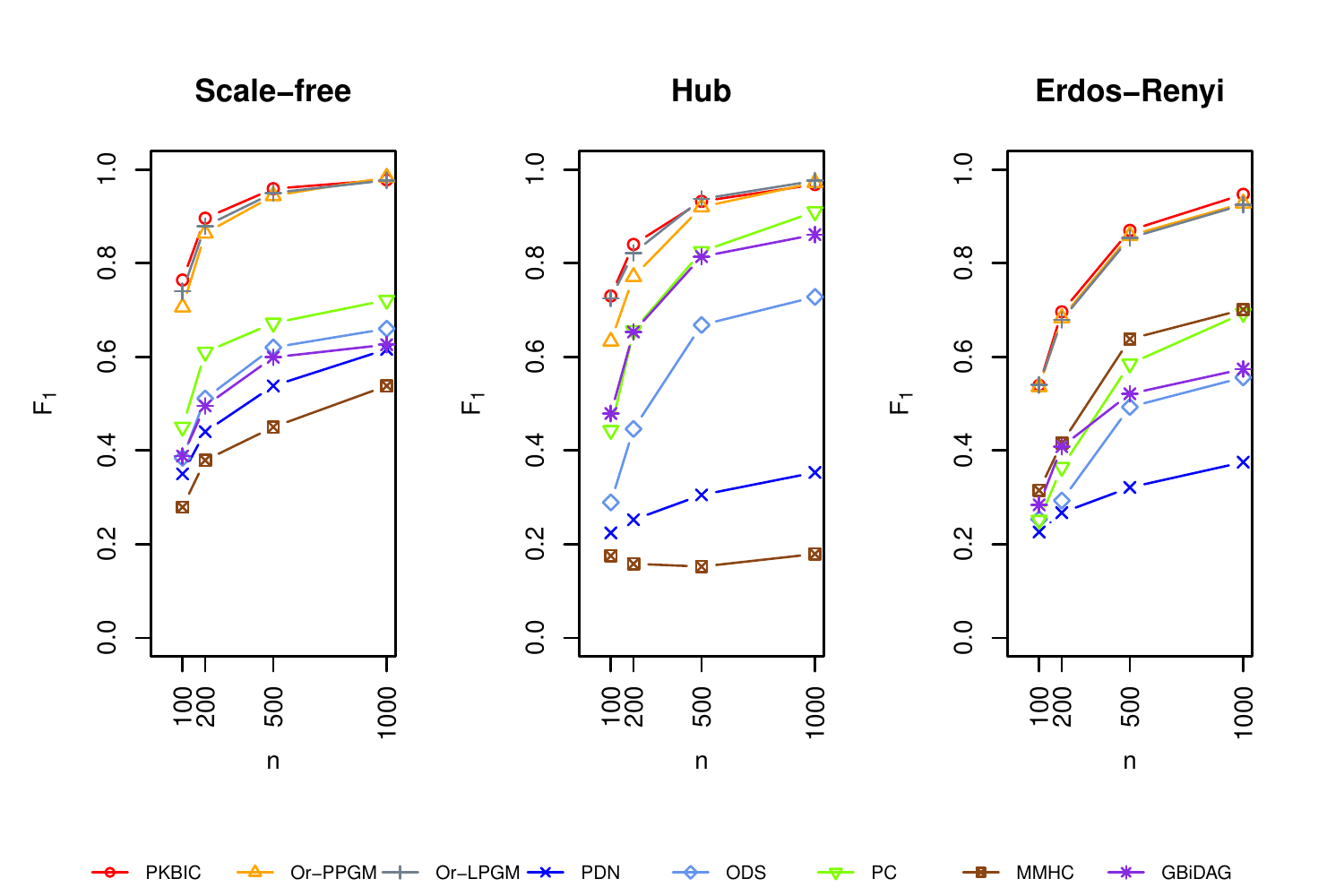}
		\caption{\scriptsize $F_1$-score of the considered algorithms: PKBIC; Or-PPGM; Or-LPGM; PDN; ODS; MMHC;  PC; and BiDAG for the three types of graphs in Figure \ref{DAGtypes10} with $p=10$ and sample sizes $n=100,200,500,1000.$}
		\label{DAG10F1}
	\end{center}
\end{figure}


{\color{black}When considering other methods,  category-based methods (MMHC), and Gaussian-based methods (GBiDAG, PC), both perform less accurately than the three leading methods, i.e., PKBIC, Or-PPGM and Or-LPGM}. {\color{black} Moreover, the GBiDAG is the closest method to our proposal, i.e., using the topological ordering as an input to search for the underlying structure. However, this algorithm works well only for the hub graph with $p=10$. This result can be explained by {the loss of information due to the data transformation, an approach can be ill-suited, possibly leading to wrong inferences in some circumstances \citep{gallopin2013hierarchical}. Another variant of BiDAG that uses BDe score with categorical representation proved to be sensitive to the categorisation of the data, and in particular, not effective when the employed categorisation was that of MMHC. 

}} 

{Results for the high dimensional setting ($p=100$) are somehow comparable to the ones of the previous setting, as it can be seen in Figure~\ref{DAG100F1}, and Figure D.2 in Supplementary, Section D.2. The performance of the considered algorithms are clustered into two different groups. In detail, PKBIC, Or-PPGM, and  Or-LPGM still rank as the top three best algorithms, with Or-LPGM scoring as the best-performing one for the highest sample size. Overall,  their $F_1$ scores become already reasonable when $n$ approaches 1000 observations.}
\begin{figure}[http]
	\begin{center}
		\includegraphics[width = 0.95\linewidth, height=0.43\textheight]{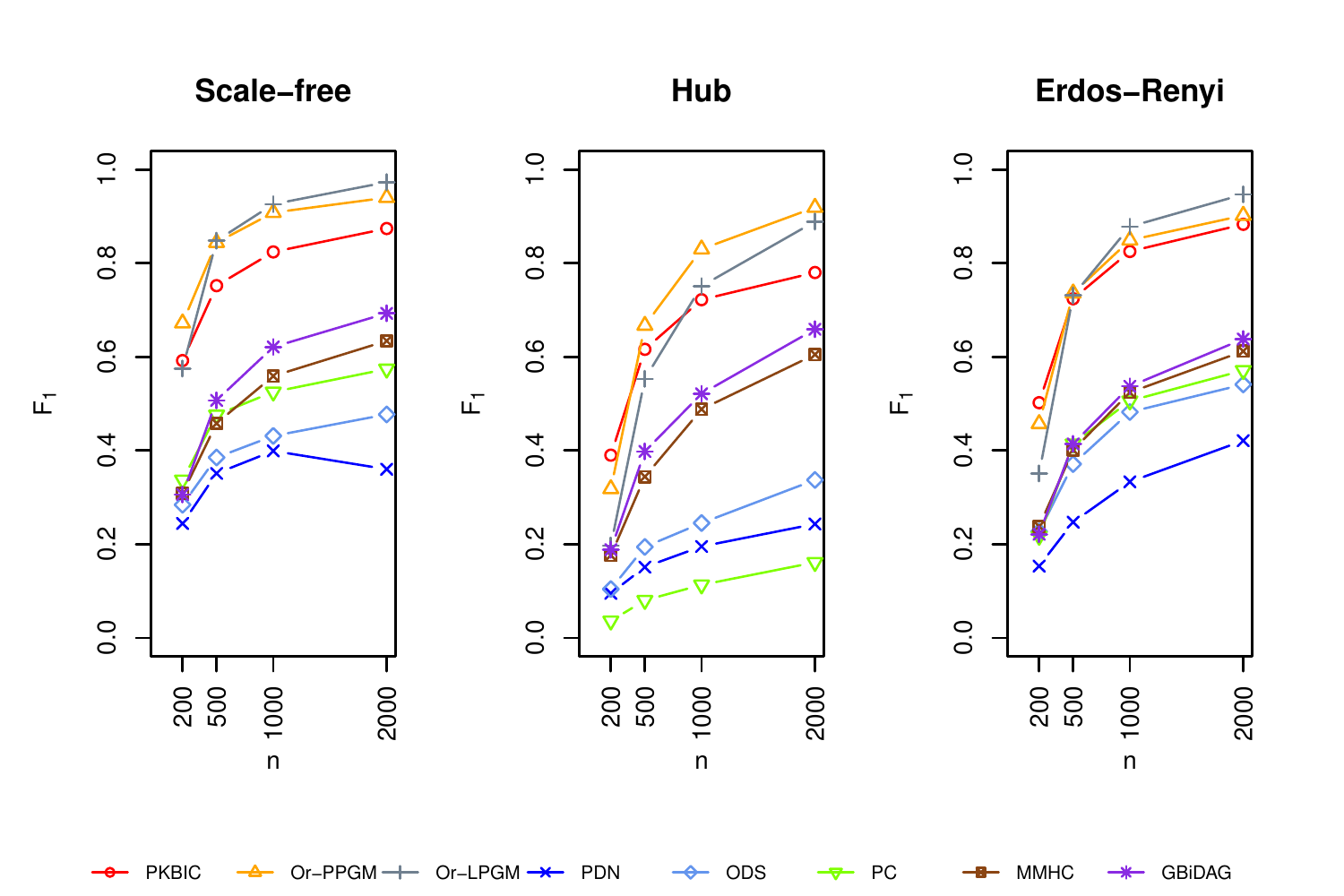}
		\caption{\scriptsize $F_1$-score of the considered algorithms: PKBIC; Or-PPGM; Or-LPGM; PDN; ODS; MMHC;  PC; and BiDAG for the three types of graphs in Figure \ref{DAGtypes100} with $p=100$ and sample sizes $n=200,500,1000,2000.$}
		\label{DAG100F1}
	\end{center}
\end{figure}


{\color{black} We also need to stress the good performances of  Or-PPGM related to the difference between penalization and restriction of the conditional sets. In the PDN  algorithm, as well as in the ODS algorithm, a prediction model is fitted locally on all other variables, using a series of independent penalized regressions. In contrast, Or-PPGM controls the number of variables in the conditional sets for node $s$, which is progressively increased from 0 to $\min\{m,ord(s)-1\}$. 

As a final remark, we note that the performances of  ODS are overall less accurate than expected. A reason for it is that ODS uses the LPGM model \citep{allen2013local} to search the candidate parent sets for each node.  As a consequence, the performance of ODS is highly dependent on the result obtained by the LPGM algorithm. However, this result depends on the tuning of its parameters ($\beta$, $\gamma$, $sth$, etc). Here, we used the best combination of parameters that we managed to find in \cite{JMLR:v22:18-401}, i.e., $B=50,$ $nlambda=20$, $\frac{\lambda_{min}}{\lambda_{max}}=0.01$, $\gamma = 10^{-6}$, $sth=0.6$, $\beta=0.1$ for $p=10$ and $\beta=0.05$ for $p=100$. }

{
\section{Results on Non-small cell lung cancer data}\label{application}

Here, we show an application of our proposed algorithm to the problem of learning gene interactions starting from gene expression measurements on a set of lung cancer cells.

{\color{black} Specifically, we aim at reconstructing the connected part of the manually curated network in Figure 2c of \cite{xue2020rapid} from gene expression data, exploiting the topological ordering deriving from the non-small cell lung cancer \citep{kanehisa:2000}, which is directly connected to the scope of the original analysis.} 

Briefly, the data consists of gene expression measurements of individual cells by \textit{RNA sequencing}, which yields discrete counts as a measure of the activity of each gene. We followed the filtering procedure described in the original publication \citep[see][for details]{xue2020rapid}.

\cite{xue2020rapid} identified 10 different clusters of cells based on their sensitivity to treatment. We selected only the cells in clusters 1, 3, 4, 5, and 10 as described in Figure 1 of \cite{xue2020rapid}, which leads to a total of $n=5505$ cells. These clusters correspond to the cells that showed resistance to the treatment and are of particular biological interest. 

{\color{black} The network in Figure 2c of \cite{xue2020rapid} presents a total of 11 genes, of which, only 8 belong to a connected component of the non-small cell lung cancer pathways, and hence further considered for the analysis, see Figure \ref{KRAS}a. The topological ordering of the considered genes is based on the KEGG pathway database. 
It is worth noting that these 8 genes belong to the initial (upstream) part of the signaling pathway, i.e., the genes that come earlier in the directional flow of information that does not depend on downstream signaling, which alleviates concerns with confounding from other genes in the pathway.
}
}

\begin{figure}[htbp]
	\begin{center}
		\includegraphics[width = 0.95\linewidth, height=0.45\textheight]{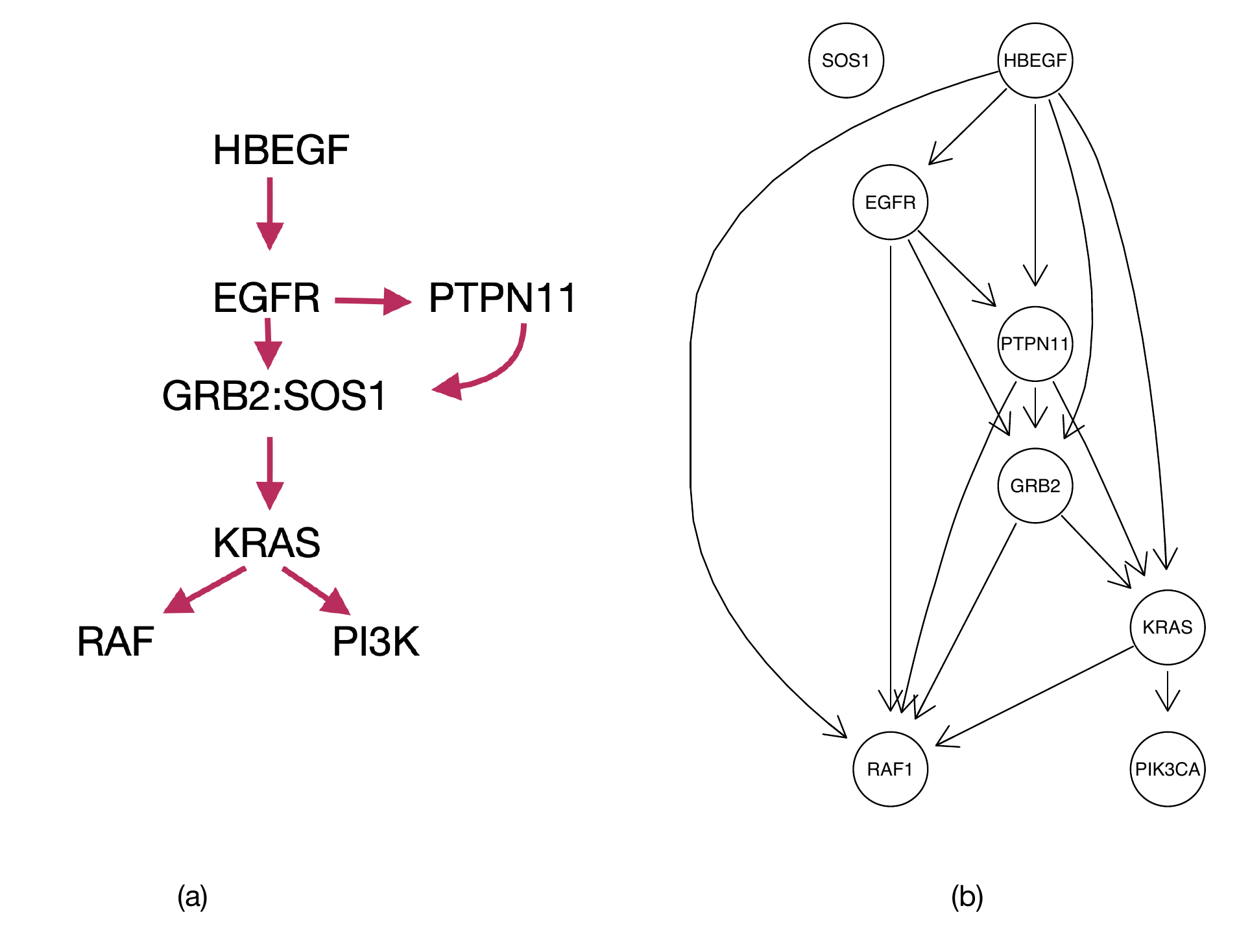}
		\caption{\scriptsize (a) Non-small cell lung cancer network manually curated by \cite{xue2020rapid}; (b) Non-small cell lung cancer network estimated by Or-PPGM algorithm. }
		\label{KRAS}
	\end{center}
\end{figure}

 After removing values that are more than three standard deviations away from the mean, we applied the Or-PPGM algorithm, with a significance level of $5\%$, and the results are shown in Figure \ref{KRAS}b.
By visually comparing our estimated DAG with the manually curated network of \cite{xue2020rapid}, we confirm that our algorithm can reconstruct, from gene expression data, a biologically meaningful structure that confirms several known biological processes. {\color{black} For instance,  the expression of heparin-binding epidermal growth factor (HBEGF) mRNA encodes a ligand of the epidermal growth factor receptor (EGFR) \citep{lemmon2010cell}. In detail, the ligand drives changes in EGFR depending on the expression of secreted HBEGF, and EGFR plays a potential role in mediating adaptation \citep{xue2020rapid}. This is consistent with EGFR being a descendant of HBEGF. Moreover, by activating EGFR, the secretion of HBEGF affects a  population of cells in an autocrine and/or paracrine fashion, which drives nucleotide exchange to activate RAS. Indeed, \cite{xue2020rapid} showed that stimulation with recombinant EGF induced KRAS activation in sorted quiescent cells and enhanced signalling in an EGFR- and PTPN11-dependent manner. This is coherent with the path from HBEGF through EGFR and PTPN11 to KRAS. Aside from this, it is well known that GRB2 mediates the EGF-dependent activation of guanine nucleotide exchange on RAS \citep{gale1993grb2}}.
The fact that our algorithm reconstructs this known signalling pathway holds promise for novel biological insight that could be provided by inspecting other, lesser-known, gene interactions.


\section{Conclusions and remarks}\label{guidremarks}
\noindent
We have considered structure learning of DAGs for count data in a scenario where we know one possible topological ordering of the variables. We have proposed and compared various guided structure learning algorithms that owe their attractiveness to the improvement in accuracy and the reduction of computational costs due to exploitation of the topological ordering, an ingredient that considerably reduces the search space.     For the new proposals, estimators enjoy strong statistical guarantees under assumptions considerably weaker than those employed in related works. Following the empirical comparison with several different approaches, our proposals appear to be promising algorithms as far as prediction accuracy is concerned.

{\color{black} Here, we consider the probabilistic interpretation of DAGs, which differs from the causal interpretation (see \cite{dawid2010beware} for more details). Indeed, in the applications that motivate this work, we cannot assume that the set of observed variables is causally sufficient, that is every common direct cause of the observed variables is also observed. Taking this under consideration, we can not guarantee full appropriateness of a  causal interpretation (see also \cite{pearl2000models}). Nonetheless, as we show in Section 6, learning DAGs is informative and aids the interpretation of the results compared to undirected graphs, ultimately generating causal hypotheses that can be further explored with subsequent experiments.}

{   It is worth remarking that when $p$ is small (such as $p=10$), Or-LPGM performs similarly to Or-PPGM while the average of the runtime of Or-PPGM is around 12 times that of Or-LPGM (see Supplementary, Table D.1). Hence, in low-dimensional regimes, Or-LPGM is the worth-considering variant. However, when the number of variables is large (for example $p=100$), and the sample size is not large enough (for example $n=200$), the performance of Or-LPGM is overall less accurate than Or-PPGM (see Supplementary, Table D.2), an effect due to  inclusion of covariates which are spuriously related to the outcome reduces the estimated residual variance when the variance is unknown. 
Here, Or-PPGM is the preferable solution. }

{\color{black} Or-PPGM makes the stronger assumption that $X_s$ conditional on all possible subsets of its precedents follows a Poisson distribution, while Or-LPGM relaxes this assumption, requiring that $X_s$ conditional on its precedents is Poisson. However, the stronger assumption of Or-PPGM did not negatively affect the performances of the algorithm on simulations built on Or-LPGM assumptions.}

An important side effect of our empirical study was shedding some light on the effect of data transformation finalized to the use of structure learning under model specifications irrespective of the discrete nature of the data, such as those for continuous or categorical data.  We have noticed that making the data continuous by log transformation is better than categorizing them when the PC algorithm is used and that mixture-based categorization is better than cut points-based categorization with K2.  This is an important empirical conclusion that we draw from this study.

{\color{black} Results from Empirical study showed that the two considered variants of BiDAG \citep{kuipers2022efficient} that use BDe or BGe scores perform less accurately than the proposed method because of the loss of information due to the data transformation. In future work, it will be worth considering the BiDAG with the score defined in the PKBIC algorithm, an approach does not suffer from the loss of information due to the data transformation.}

Overall, our exploration has consolidated avenues for learning DAGs, the properties and applications of which leave much room for future research. For example,  the topological ordering may be misspecified, or only a partial order on the set of nodes might be specified due to several reasons.  How to tackle these and other extensions of our setting is the core of our current research.

\begin{center}
	\fontsize{9}{9}\selectfont
		\begin{longtable}{l| l |  r r r r r r r}
			\caption{\label{ttable10-chap2} Monte Carlo marginal means of $TP, FP, FN, P, R$, and $F_1$ score obtained by simulating 50 samples from each of the three networks shown in Figure \ref{DAGtypes10} ($p=10$). The levels of significance of tests $\alpha=2(1-\Phi(n^{0.15}))$ .}
			\\
			\toprule
			$n$	& Algorithm & $TP$ & $FP$ & $FN$ & $P$ & $R$ &$F_1$  \\
			\midrule
			\endfirsthead
			\multicolumn{8}{c}%
			{{\bfseries \tablename\ \thetable{} -- continued from previous page}} \\
			\toprule
			$n$	& Algorithm & $TP$ & $FP$ & $FN$ & $P$ & $R$ &$F_1$  \\
			\midrule
			\endhead
			
100&PKBIC & 5.133 & 1.393 & 3.200 & 0.791 & 0.613 & 0.678 \\ 
  &Or-PPGM & 4.573 & 1.380 & 3.760 & 0.774 & 0.546 & 0.625 \\ 
  &Or-LPGM & 5.200 & 1.740 & 3.133 & 0.758 & 0.621 & 0.669 \\ 
  &PDN & 6.187 & 32.613 & 2.147 & 0.164 & 0.738 & 0.267 \\ 
  &ODS & 1.791 & 0.721 & 6.581 & 0.786 & 0.211 & 0.315 \\ 
  &PC & 2.734 & 2.604 & 5.626 & 0.511 & 0.325 & 0.392 \\ 
  &MMHC & 1.723 & 3.088 & 6.635 & 0.381 & 0.205 & 0.260 \\
  &GBiDAG & 3.080 & 4.094 & 5.283 & 0.434 & 0.368 & 0.392 \\ 
   
  & & & & & & \\
200  &PKBIC & 6.293 & 0.693 & 2.040 & 0.907 & 0.750 & 0.811 \\ 
  &Or-PPGM & 5.853 & 0.680 & 2.480 & 0.904 & 0.698 & 0.773 \\ 
  &Or-LPGM & 6.173 & 0.867 & 2.160 & 0.887 & 0.737 & 0.794 \\ 
  &PDN & 6.820 & 28.820 & 1.513 & 0.201 & 0.814 & 0.320 \\ 
  &ODS & 2.667 & 1.236 & 5.681 & 0.714 & 0.315 & 0.422 \\ 
  &PC & 4.062 & 2.000 & 4.283 & 0.654 & 0.484 & 0.550 \\ 
  &MMHC &  2.329 & 3.859 & 6.007 & 0.392 & 0.276 & 0.319 \\
  &GBiDAG & 4.139 & 3.368 & 4.194 & 0.559 & 0.497 & 0.521 \\
  & & & & & & \\
500  &PKBIC & 7.593 & 0.520 & 0.740 & 0.940 & 0.909 & 0.920 \\ 
  &Or-PPGM & 7.340 & 0.433 & 0.993 & 0.947 & 0.879 & 0.907 \\ 
  &Or-LPGM & 7.387 & 0.387 & 0.947 & 0.955 & 0.884 & 0.914 \\ 
  &PDN & 7.067 & 22.180 & 1.267 & 0.258 & 0.844 & 0.388 \\ 
  &ODS & 4.347 & 1.813 & 3.987 & 0.714 & 0.520 & 0.593 \\ 
  &PC & 5.450 & 1.839 & 2.886 & 0.746 & 0.654 & 0.694  \\ 
  &MMHC & 3.338 & 4.365 & 5.000 & 0.444 & 0.398 & 0.417 \\
  &GBiDAG & 5.351 & 2.932 & 2.986 & 0.656 & 0.643 & 0.646 \\ 
  & & & & & & \\
1000  &PKBIC & 8.093 & 0.353 & 0.240 & 0.962 & 0.970 & 0.964 \\ 
  &Or-PPGM & 7.907 & 0.180 & 0.427 & 0.979 & 0.948 & 0.961 \\ 
  &Or-LPGM & 7.880 & 0.180 & 0.453 & 0.980 & 0.944 & 0.959 \\ 
  &PDN & 7.307 & 17.927 & 1.027 & 0.309 & 0.873 & 0.448 \\ 
  &ODS & 5.213 & 2.527 & 3.120 & 0.681 & 0.625 & 0.648 \\ 
  &PC & 6.233 & 1.513 & 2.100 & 0.805 & 0.749 & 0.775 \\ 
  &MMHC & 4.000 & 4.327 & 4.340 & 0.484 & 0.477 & 0.478 \\
  &GBiDAG & 5.799 & 2.772 & 2.537 & 0.681 & 0.698 & 0.688 \\
			
			\hline
		\end{longtable}
\end{center}

\begin{center}
	\fontsize{9}{9}\selectfont
		\begin{longtable}{l| l |  r r r r r r r}
			\caption{\label{ttable100-chap2} Monte Carlo marginal means of $TP, FP, FN, P, R$, and $F_1$ score obtained by simulating 50 samples from each of the three networks shown in Figure \ref{DAGtypes100} ($p=100$). The levels of significance of tests $\alpha=2(1-\Phi(n^{0.2}))$ for $n=500,1000, 2000$, and $\alpha=2(1-\Phi(n^{0.225}))$ for $n=200$.}
			\\
			\toprule
			$n$	& Algorithm & $TP$ & $FP$ & $FN$ & $P$ & $R$ &$F_1$  \\
			\midrule
			\endfirsthead
			\multicolumn{8}{c}%
			{{\bfseries \tablename\ \thetable{} -- continued from previous page}} \\
			\toprule
			$n$	& Algorithm & $TP$ & $FP$ & $FN$ & $P$ & $R$ &$F_1$  \\
			\midrule
			\endhead
			
			200  &PKBIC & 60.220 & 81.113 & 40.780 & 0.424 & 0.595 & 0.495 \\ 
  &Or-PPGM & 35.307 & 5.320 & 65.693 & 0.854 & 0.349 & 0.482 \\ 
  &Or-LPGM & 26.107 & 6.340 & 74.893 & 0.780 & 0.258 & 0.374 \\ 
  &PDN & 50.580 & 547.487 & 50.420 & 0.128 & 0.498 & 0.164 \\ 
  &ODS & 23.413 & 91.947 & 77.587 & 0.201 & 0.229 & 0.205 \\ 
  &PC & 14.399 & 16.601 & 86.895 & 0.403 & 0.141 & 0.205  \\ 
  &MMHC & 27.527 & 98.873 & 73.473 & 0.216 & 0.272 & 0.241 \\ 
  &GBiDAG& 36.553 & 167.740 & 64.447 & 0.178 & 0.362 & 0.238\\

    & & & & & & \\
  500&PKBIC & 80.933 & 49.520 & 20.067 & 0.619 & 0.800 & 0.697 \\ 
  &Or-PPGM & 63.180 & 3.333 & 37.820 & 0.950 & 0.625 & 0.749 \\ 
  &Or-LPGM & 58.813 & 2.493 & 42.187 & 0.956 & 0.580 & 0.711 \\ 
  &PDN & 64.773 & 419.773 & 36.227 & 0.216 & 0.637 & 0.250 \\ 
  &ODS & 36.040 & 84.007 & 64.960 & 0.298 & 0.353 & 0.317 \\ 
  &PC & 26.693 & 26.127 & 74.307 & 0.444 & 0.259 & 0.323 \\ 
  &MMHC & 47.993 & 89.340 & 53.007 & 0.348 & 0.474 & 0.401 \\ 
  &GBiDAG & 57.773 & 103.533 & 43.227 & 0.358 & 0.573 & 0.440 \\
    & & & & & & \\
  1000&PKBIC & 89.000 & 34.685 & 12.013 & 0.719 & 0.879 & 0.790 \\ 
  &Or-PPGM & 77.658 & 1.208 & 23.356 & 0.985 & 0.769 & 0.862 \\ 
  &Or-LPGM & 76.153 & 0.200 & 24.847 & 0.997 & 0.751 & 0.852 \\ 
  &PDN & 69.713 & 323.793 & 31.287 & 0.286 & 0.685 & 0.309 \\ 
  &ODS & 45.413 & 83.947 & 55.587 & 0.355 & 0.444 & 0.386 \\ 
  &PC & 34.087 & 32.827 & 66.913 & 0.459 & 0.331 & 0.381 \\ 
  &MMHC & 62.447 & 74.787 & 38.553 & 0.455 & 0.618 & 0.524 \\ 
  &GBiDAG & 69.173 & 77.180 & 31.827 & 0.473 & 0.686 & 0.559 \\ 
    & & & & & & \\
  2000&PKBIC & 91.833 & 23.713 & 9.167 & 0.794 & 0.906 & 0.846 \\ 
  &Or-PPGM & 87.273 & 1.493 & 13.727 & 0.984 & 0.865 & 0.920 \\ 
  &Or-LPGM & 89.267 & 0.020 & 11.733 & 1.000 & 0.882 & 0.936 \\ 
  &PDN & 67.733 & 237.620 & 33.267 & 0.338 & 0.664 & 0.341 \\ 
  &ODS & 54.767 & 82.687 & 46.233 & 0.398 & 0.538 & 0.452 \\ 
  &PC & 41.400 & 39.180 & 59.600 & 0.478 & 0.402 & 0.435 \\ 
  &MMHC & 72.100 & 60.813 & 28.900 & 0.544 & 0.714 & 0.617 \\ 
  &GBiDAG & 78.420 & 57.313 & 22.580 & 0.579 & 0.778 & 0.663  \\ 
			\hline
			
		\end{longtable}
\end{center}

\section{Software}
\label{sec5}

The methods presented in this article are available in the \textit{learn2count} R package, available at \url{https://github.com/drisso/learn2count}. The code to reproduce the analyses of this paper is available at \url{https://github.com/kimhuenguyen/guided_structure_learning}.

\appendix
\setcounter{prop}{0}
    \renewcommand{\theprop}{\Alph{section}\arabic{prop}}
    
    \setcounter{theorem}{0}
    \renewcommand{\thetheorem}{\Alph{section}\arabic{theorem}}

\section*{Appendix}
\section{Identifiability}\label{identifiabe-chap3}
\noindent

In what follows, we provide a proof of identifiability of models specified in Section \ref{background} of the main paper.
\begin{prop}\label{setcond}
	Let $\mathbf{X}$ be a $p$-random vector defined as in \eqref{jointdist} and $G=(V,E)$ be a DAG. Consider a variable $X_{j},~ j\in V$, and  {one of its parents} $k\in pa_G(j)$. For all set $S$ with $pa_G(j)\backslash \{k\}\subseteq  S\subseteq nd_G(j)\backslash \{k\},$ we have $X_j\nindep X_k|\mathbf{X}_{S}.$
\end{prop}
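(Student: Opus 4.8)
The plan is to collapse the conditioning set $S\cup\{k\}$ down to the parent set $pa_G(j)$ by means of the local Markov property, and then to read off the required dependence on $X_k$ directly from the explicit Poisson conditional mean, exploiting that $k\in pa_G(j)$ forces $\theta_{jk}\neq 0$. The argument is by contradiction: I assume $X_j\indep X_k|\mathbf{X}_S$ and show that the conditional law of $X_j$ genuinely varies with $x_k$.

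First I would record the set inclusions implied by the hypothesis $pa_G(j)\backslash\{k\}\subseteq S\subseteq nd_G(j)\backslash\{k\}$. Since $k\in pa_G(j)\subseteq nd_G(j)$, the augmented set satisfies
$$pa_G(j)\subseteq \{k\}\cup S\subseteq nd_G(j).$$
Thus $\{k\}\cup S$ contains all parents of $j$ while consisting only of non-descendants of $j$. Writing $A=(\{k\}\cup S)\backslash pa_G(j)\subseteq nd_G(j)\backslash pa_G(j)$, the local Markov property $X_j\indep \mathbf{X}_{nd_G(j)\backslash pa_G(j)}\mid \mathbf{X}_{pa_G(j)}$ gives
$$\mathbb{P}(x_j\mid \mathbf{x}_{\{k\}\cup S})=\mathbb{P}(x_j\mid \mathbf{x}_{pa_G(j)}).$$
By \eqref{dijoinprob} the right-hand side is a Poisson mass with mean $\exp\{\sum_{t\in pa_G(j)}\theta_{jt}x_t\}$, which depends on $x_k$ through the factor $\exp\{\theta_{jk}x_k\}$ with $\theta_{jk}\neq 0$. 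Hence, for distinct $x_k\neq x_k'$ the conditional distributions of $X_j$ differ, so $\mathbb{P}(x_j\mid x_k,\mathbf{x}_S)$ is a genuine function of $x_k$.

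Finally I would convert this formal dependence into a statement about conditional dependence by checking that $X_k$ is non-degenerate given $\mathbf{X}_S$. Because every factor in \eqref{jointdist} is a Poisson mass with strictly positive mean $\exp\{\cdot\}>0$, the joint law has full support on $\mathbb{Z}_{\geq 0}^p$; marginalizing, $\mathbb{P}(X_k=a\mid \mathbf{X}_S=\mathbf{x}_S)>0$ for every integer $a\geq 0$. Consequently there exist values $x_k\neq x_k'$, each of positive conditional probability, at which the conditional law of $X_j$ genuinely differs, which is incompatible with $X_j\indep X_k\mid \mathbf{X}_S$. This yields $X_j\nindep X_k\mid\mathbf{X}_S$, as claimed.

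I expect the one delicate point to be this last non-degeneracy step: the formal dependence of the Poisson mean on $x_k$ would be vacuous if $X_k$ were almost surely constant given $\mathbf{X}_S$, so the full-support property of the Poisson DAG model must be invoked to rule this out. The reduction via the local Markov property, by contrast, is routine once the inclusions $pa_G(j)\subseteq\{k\}\cup S\subseteq nd_G(j)$ are in place.
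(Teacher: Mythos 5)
Your proof is correct, but it follows a genuinely different route from the paper's. The paper's own proof is a one-liner by faithfulness: since $k\to j$ is an edge of $G$, the single-edge path $k\rightarrow j$ is d-connecting given any $S$ with $j,k\notin S$, so the contrapositive of the faithfulness assumption (which the paper imposes globally, as $G$ is assumed to be a perfect map) immediately yields $X_j\nindep X_k|\mathbf{X}_{S}$. You instead avoid faithfulness altogether: you use only the DAG factorization \eqref{jointdist} (hence the local Markov property, which holds automatically for any distribution factorizing over $G$) to collapse the conditioning set via $pa_G(j)\subseteq\{k\}\cup S\subseteq nd_G(j)$, and then exploit the explicit parametric form \eqref{dijoinprob}: the Poisson mean carries the factor $e^{\theta_{jk}x_k}$ with $\theta_{jk}\neq 0$ (which the paper's edge convention guarantees for $k\in pa_G(j)$), and the full support of the model on $\mathbb{Z}_{\geq 0}^p$ rules out degeneracy of $X_k$ given $\mathbf{X}_S$. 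What each approach buys: the paper's argument is shorter and applies to any faithful distribution, not just Poisson models, but it leans on faithfulness as an assumption; yours proves that this particular class of dependencies (a node and one of its parents, conditioned on a non-descendant set containing the remaining parents) holds \emph{automatically} in the Poisson DAG model --- a small ``faithfulness for free'' result --- at the price of being tied to the parametric form. Your non-degeneracy check via full support is also a legitimate detail the paper's terse proof glosses over; the hypothesis restricting $S$ to non-descendants is precisely what makes your local-Markov reduction (and, in the paper's version, the d-connection claim) work. No gap.
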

\begin{proof}
	{This proposition can be proved easily by using the definition of d-connection and the faithfulness assumption. Indeed, for a fixed node $j\in V$, for all $k\in pa_G(j)$ and for all set $S$ satisfies $pa_G(j)\backslash \{k\}\subseteq  S\subseteq nd_G(j)\backslash \{k\},$ there always exists the path $k\rightarrow j$ satisfies the definition of d-connection. Hence,} $X_j\nindep X_k|\mathbf{X}_{S}.$
\end{proof}
\begin{theorem}\label{identify}
	The Poisson DAG model defined as in Equation \ref{dijoinprob} is identifiable.
\end{theorem}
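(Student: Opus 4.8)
The goal is to show that the joint law $\mathbb{P}_{\boldsymbol{\theta}}$ in \eqref{jointdist} determines the pair $(G,\boldsymbol{\theta})$ uniquely; equivalently, that no DAG $G'\neq G$ equipped with its own log-linear Poisson conditionals can reproduce $\mathbb{P}_{\boldsymbol{\theta}}$. Conditional-independence information alone only pins $G$ down to its Markov equivalence class, so the orientations must be fixed by something special to the Poisson specification. The plan is therefore to isolate a \emph{dispersion asymmetry} induced by the log-link that distinguishes a node from its reverse, use it to identify the sinks of $G$, recover the parents of each sink from faithfulness together with Proposition~\ref{setcond}, and then peel sinks off and recurse by induction on $p$.

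The crux is the following lemma, which I would prove first. Fixing a node $k$ and reading off \eqref{jointdist}, I would write the full conditional as
\[
\mathbb{P}_{\boldsymbol{\theta}}(x_k\mid \mathbf{x}_{V\setminus\{k\}})\ \propto\ \frac{1}{x_k!}\exp\Big\{x_k\big(\langle\boldsymbol{\theta}_k,\mathbf{x}_{pa_G(k)}\rangle+\textstyle\sum_{c\in ch_G(k)}\theta_{ck}x_c\big)-\sum_{c\in ch_G(k)}e^{\theta_{ck}x_k+r_c}\Big\},
\]
where $r_c=\langle\boldsymbol{\theta}_c,\mathbf{x}_{pa_G(c)\setminus\{k\}}\rangle$ and every term not involving $x_k$ is absorbed in the proportionality. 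When $ch_G(k)=\emptyset$ the doubly-exponential terms vanish and this is exactly a Poisson pmf in $x_k$, so the full conditional is equidispersed. When $k$ has a child, $\theta_{ck}\neq 0$ forces $\sum_c e^{\theta_{ck}x_k+r_c}$ to be a strictly convex (nonaffine) function of $x_k$ that no Poisson pmf can absorb, and a short moment computation shows the conditional is then strictly overdispersed. Hence, for any distribution generated by a Poisson DAG, \emph{the full conditional of $X_j$ is Poisson if and only if $j$ is a sink of the generating graph}. Making this ``not Poisson'' direction rigorous, i.e. showing that the extra convex term rules out equidispersion uniformly over the values of the conditioning variables, is the main obstacle and is precisely where the log-link is essential.

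Granting the lemma, the set of sinks is a function of $\mathbb{P}_{\boldsymbol{\theta}}$ alone, so two generating DAGs $G,G'$ must share the same sinks. For a sink $j$ one has $nd_G(j)=V\setminus\{j\}$, so Proposition~\ref{setcond} gives $X_j\nindep X_k\mid \mathbf{X}_{V\setminus\{j,k\}}$ for every $k\in pa_G(j)$, whereas for $k\notin pa_G(j)$ the rate of $X_j$ in \eqref{dijoinprob} depends only on $\mathbf{x}_{pa_G(j)}$, yielding directly $X_j\indep X_k\mid \mathbf{X}_{V\setminus\{j,k\}}$; thus the parent set of each sink is read off from $\mathbb{P}_{\boldsymbol{\theta}}$ and coincides for $G$ and $G'$. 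Since a sink is never a parent, marginalising out all sinks leaves the factorisation $\prod_{s\ \text{non-sink}}\mathbb{P}_{\boldsymbol{\theta}_s}(x_s\mid\mathbf{x}_{pa_G(s)})$ intact, so the marginal is again a Poisson DAG model on the induced subgraph $G[V\setminus\{\text{sinks}\}]$. Applying the argument recursively (induction on $p$, the base case being the marginally Poisson root) forces all edges and their orientations to agree, hence $G=G'$.

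Finally, once the structure is fixed, parameter identifiability is immediate: conditional on $\mathbf{X}_{pa_G(s)}$, the law of $X_s$ in \eqref{dijoinprob} is a minimal exponential family with natural parameter $\langle\boldsymbol{\theta}_s,\cdot\rangle$, so distinct $\boldsymbol{\theta}_s$ give distinct conditionals, a fact made explicit by the strict convexity guaranteed by the lower eigenvalue bound in Assumption~\ref{assum2}. Combining uniqueness of the structure with uniqueness of the parameters establishes identifiability of the Poisson DAG model.
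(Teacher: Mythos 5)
Your proposal is correct in substance but takes a genuinely different route from the paper's proof. The paper argues by contradiction: assuming $G$ and $G'$ both encode the conditional independences of $\mathbb{P}_{\boldsymbol{\theta}}$, it repeatedly deletes nodes that are childless in $G$ and have identical parents in both graphs, and at the first disagreement node $j$ (childless in $G$) it splits into two cases. If every $G$-parent of $j$ is also a $G'$-parent ($U=\emptyset$), Proposition~\ref{setcond} applied to $G'$ contradicts the local Markov property of $G$; if some $G$-parent of $j$ is not a $G'$-parent ($U\neq\emptyset$), the law of total variance gives $\mathrm{Var}(X_j\mid\mathbf{X}_{pa_{G'}(j)})=\mathrm{Var}\big(f_j(\mathbf{X}_{pa_G(j)})\mid\mathbf{X}_{pa_{G'}(j)}\big)+\mathbb{E}(X_j\mid\mathbf{X}_{pa_{G'}(j)})>\mathbb{E}(X_j\mid\mathbf{X}_{pa_{G'}(j)})$, i.e.\ strict \emph{over}dispersion under the coarser conditioning, contradicting equidispersion of the $G'$-Poisson conditional. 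You instead identify the structure constructively: sinks are characterized by Poissonness of the \emph{full} conditionals, sink parents are recovered from conditional independence given $\mathbf{X}_{V\setminus\{j,k\}}$ together with Proposition~\ref{setcond}, and the argument recurses on the margin over non-sinks. Both proofs rest on the same two ingredients (the dispersion signature of the log-linear Poisson family, plus Proposition~\ref{setcond} under faithfulness), but your layer-by-layer sink recovery is closer to the algorithmic identification strategy of Park and Raskutti, and localizes the Poisson-specific step in one clean lemma, whereas the paper's case analysis needs only second moments under coarse conditioning and never has to analyze the full conditional.

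Two repairable inaccuracies. First, your dispersion claim for a non-sink has the wrong sign: the full conditional is the Poisson weight tilted by $\exp\{-\sum_{c\in ch_G(k)}e^{\theta_{ck}x_k+r_c}\}$, a strictly log-concave perturbation, so it is strictly \emph{under}dispersed, not overdispersed (overdispersion, as in the paper, arises under the opposite, coarser conditioning, by mixing). This is harmless because your lemma only needs ``not Poisson,'' and that closes rigorously via the ratio test: $(x_k+1)\,p(x_k+1\mid\cdot)/p(x_k\mid\cdot)=\exp\{\eta-\sum_c e^{r_c}(e^{\theta_{ck}}-1)e^{\theta_{ck}x_k}\}$ is nonconstant in $x_k$, since within each common rate $\theta_{ck}$ the coefficients $e^{r_c}(e^{\theta_{ck}}-1)$ all share the sign of $\theta_{ck}$ and cannot cancel; so the obstacle you flag as the crux does go through. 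Second, the recursion tacitly requires that the margin over the non-sinks remain faithful to the induced subgraph so that Proposition~\ref{setcond} applies at the next layer; this holds because the non-sink set is ancestral (d-separation in the induced subgraph agrees with d-separation in $G$, and marginal conditional independences coincide with joint ones), but it should be stated. Finally, your closing paragraph on parameter identifiability is not needed for Theorem~\ref{identify}, which concerns the graph, and the appeal to Assumption~\ref{assum2} there is superfluous: once $pa_G(s)$ is known, $\boldsymbol{\theta}_s$ is recovered by evaluating the log-rate at coordinate vectors.
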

\begin{proof}
	Assume there are two structure models as in \ref{dijoinprob} which both encode the same set of conditional independences, 
	one with graph $G$, and the other with graph $G'$. We will show that $G\equiv G'$. 
	
	Since DAGs do not contain any cycles, we can always find one node without any child. Indeed, assume to start at some node, and follow a directed path that contains the chosen node. After at most $ |V-1|$ steps, a node without any child is reached. Eliminating such a node from the graph leads to a new DAG. 
	
	We repeat this process  {on $G$ and $G'$} for all nodes that have: (i)  no children, (ii) the same parents in $G$ and $G'$. This process terminates with one of two possible outputs: (a) no nodes left; (b) a  {subset} of variables, which we call again $\mathbf{X}$,  two  {sub-graphs}, which we call again $G$ and $G'$, and a node $j$ that has no children in $G$ such that either $pa_G(j)\ne pa_{G'}(j)$ or $ch_{G'}(j)\ne \emptyset$.  If (a) occurs, the two graphs are identical and the result is proved.  In what follows, we consider the case that (b) occurs.
	
	For such a $j$ node, we have 
	\begin{equation}\label{indepp}
	X_j\indep X_{V\backslash (pa_G(j)\cup \{j\})}|\mathbf{X}_{pa_G(j)},
	\end{equation}
	thanks to the Markov properties with respect to $G$. To make our argument clear, we divide the set of parents $pa_G(j)$ into three disjoint partitions $W,Y,Z$ representing, respectively,  the set of common parents in both graphs; the set of parents in $G$ being a subset of children in $G'$; the set of parents in $G$ which are not parents in $G'$. Formalizing,
	\begin{itemize}
		\item $Z=pa_G(j)\cap pa_{G'}(j)$;
		\item $Y\subset pa_G(j)$ such that $ch_{G'}(j)=Y\cup T$;
		\item $W\subset pa_G(j)$ such that $W$ are not adjacent to $j$ in $G'$.
	\end{itemize}
	Thus, 
	\begin{eqnarray*}
		pa_G(j)=W\cup Y\cup Z, && ch_G(j)=\emptyset,\\
		pa_{G'}(j)= D\cup Z, && ch_{G'}(j)=T\cup Y,
	\end{eqnarray*}
	where $D$ is not adjacent to $j$ in $G$. Let $U=W\cup Y$ and consider the following two cases:
	
	\begin{itemize}
		\item $U=\emptyset$. Then, there exists a node $d\in D$ or a node $t\in T$, otherwise $j$ would have been discarded.
		\begin{itemize}
			\item If there exists a node $d\in D$,  \eqref{indepp} implies $X_j\indep X_d|\mathbf{X}_{Q}$, for $Q=Z\cup D\backslash\{d\}$, which contradicts Proposition \eqref{setcond} applied to $G'$.
			\item If $D=\{\emptyset\}$, and there exists a node $t\in T$, then  \eqref{indepp} implies $X_j\indep X_t|\mathbf{X}_{Q}$, for $Q=Z\cup pa_{G'}(t)\backslash\{j\}$, which contradicts Proposition \eqref{setcond} applied to $G'$.
		\end{itemize}
		\item $U\ne \emptyset$. We note that, within the structure of the graph $G'$, the Poisson assumption implies
		\begin{equation}\label{varmean1}
		\text{Var}\big(X_j|\mathbf{X}_{pa_{G'}(j)}\big)=\mathbb{E}\big(X_j|\mathbf{X}_{pa_{G'}(j)}\big).
		\end{equation}
		However, by applying the law of total variance we get
		\begin{eqnarray*}
			\text{Var}\big(X_j|\mathbf{X}_{pa_{G'}(j)}\big)&=& \text{Var}\big(\mathbb{E}(X_j|\mathbf{X}_{pa_{G'}(j)}\cup \mathbf{X}_{pa_G(j)})|\mathbf{X}_{pa_{G'}(j)}\big)\\
			&&+\mathbb{E}\big(\text{Var}(X_j|\mathbf{X}_{pa_{G'}(j)}\cup \mathbf{X}_{pa_G(j)})|\mathbf{X}_{pa_{G'}(j)}\big).
		\end{eqnarray*}
		By applying Property \eqref{indepp}  we can rewrite
		\begin{equation}\label{var}
		\text{Var}\big(X_j|\mathbf{X}_{pa_{G'}(j)}\big)= \text{Var}\big(\mathbb{E}(X_j| \mathbf{X}_{pa_G(j)})|\mathbf{X}_{pa_{G'}(j)}\big)+\mathbb{E}\big(\text{Var}(X_j| \mathbf{X}_{pa_G(j)})|\mathbf{X}_{pa_{G'}(j)}\big).
		\end{equation}
		Let $f_s(\mathbf{X}_{pa(s)})=\exp\{\sum_{t\in pa(s)}\theta_{st}X_t\}, \forall~ s\in V$. In graph $G$, we have $X_j|\mathbf{X}_{pa_G(j)}\sim \text{Pois}(f_j(\mathbf{X}_{pa_G(j)}))$, so that 
		$$\mathbb{E}(X_j| \mathbf{X}_{pa_G(j)})=\text{Var}(X_j| \mathbf{X}_{pa_G(j)}) =f_j(\mathbf{X}_{pa_G(j)}).$$
		Hence, from Equation \eqref{var}, we get
		\begin{eqnarray}\label{var1}
		\text{Var}\big(X_j|\mathbf{X}_{pa_{G'}(j)}\big)&=& \text{Var}\big(f_j(\mathbf{X}_{pa_G(j)})|\mathbf{X}_{pa_{G'}(j)}\big)+\mathbb{E}\big(\mathbb{E}(X_j| \mathbf{X}_{pa_G(j)})|\mathbf{X}_{pa_{G'}(j)}\big)\\
		&=& \text{Var}\big(f_j(\mathbf{X}_{pa_G(j)})|\mathbf{X}_{pa_{G'}(j)}\big)+\mathbb{E}\big(\mathbb{E}(X_j| \mathbf{X}_{pa_G(j)}\cup \mathbf{X}_{pa_{G'}(j)})|\mathbf{X}_{pa_{G'}(j)}\big)\nonumber\\
		&=& \text{Var}\big(f_j(\mathbf{X}_{pa_G(j)})|\mathbf{X}_{pa_{G'}(j)}\big)+\mathbb{E}\big(X_j|\mathbf{X}_{pa_{G'}(j)}\big),\nonumber
		\end{eqnarray}
		by applying \eqref{indepp}. Equation \eqref{var1} implies
		
		$$\text{Var}\big(X_j|\mathbf{X}_{pa_{G'}(j)}\big)>\mathbb{E}\big(X_j|\mathbf{X}_{pa_{G'}(j)}\big),$$
		since $\text{Var}\big(f_j(\mathbf{X}_{pa_G(j)})|\mathbf{X}_{pa_{G'}(j)}\big)>0$ in general, except at the root node. 
	\end{itemize}
\end{proof}
\section{Proof of Theorem 1}\label{proof_theorem1}

{ 
\begin{proof}
Given a topological ordering, let $\hat{\theta}_{st|\bold{K}}$, and $\theta^*_{st|\bold{K}}$ denote the estimated and true partial weights between $X_s$ and $X_t$ given $X_r, r\in \bold{S}$, where $\bold{S}=\bold{K}\backslash\{t\}\subseteq pre(s)$.  For a fixed-ordered pair of nodes $s,t$, the conditioning sets are elements of 
		$$K_{st}^m=\left\{\bold{S}\subseteq pre(s)\backslash \{t\}: |\bold{S}|\le \min \{ord(s)-1,m\}\right\}.$$
		The cardinality is bounded by 
		$$|K_{st}^m|\le C p^{\min \{ord(s)-1,m\}}\le C p^m,\quad\text{ for some } 0<C<\infty.$$
		Let $E_{st|\bold{K}}$ denote type I or type II errors occurring when testing $H_0:~ \theta_{st|\bold{K}}=0$. Thus
		\begin{equation}\label{error}
		E_{st|\bold{K}} =E_{st|\bold{K}}^I\cup E_{st|\bold{K}}^{II},
		\end{equation}
		in which, for $n$ large enough 
		\begin{itemize}
			\item type I error $E_{st|\bold{K}}^I$: $|Z_{st|\bold{K}}|> \Phi^{-1}(1-\alpha/2)$ and ${\theta^*_{st|\bold{K}}}=0$;
			\item type II error $E_{st|\bold{K}}^{II}$: $|Z_{st|\bold{K}}|\le \Phi^{-1}(1-\alpha/2)$ and ${\theta^*_{st|\bold{K}}}\ne 0$;
		\end{itemize}
		where $Z_{st|\bold{K}}=\displaystyle\dfrac{\sqrt{n}\hat{\theta}_{st|\bold{K}}}{\sqrt{ \left[ J(\hat{\boldsymbol{\theta}}_{s|\bold{K}})^{-1}\right]_{tt}}}$ was defined in \cite{JMLR:v22:18-401}, and $\alpha$ is a chosen significance level. 
		Consider  an arbitrary matrix $\boldsymbol{\theta}_{|\bold{K}}=\{\boldsymbol{\theta}_{s|\bold{K}}\}^T_{s\in V}\in \Omega(\boldsymbol{\Theta})$, such that $|\theta_{st|\bold{K}}|\ge \delta$, for some $\delta>0$. Let $\boldsymbol{\theta}^0_{|\bold{K}}$ be the matrix that has the same elements as $\boldsymbol{\theta}_{|\bold{K}}$ except $\theta_{st|\bold{K}}=\theta^0_{st|\bold{K}}=0$.
		Choose $\alpha_n=2(1-\Phi(n^b))$, { where $0<b<1/2$ will be chosen later}, then
		\begin{eqnarray}\label{error1}
		\sup_{s,t,\bold{K}\in K_{st}^m}\mathbb{P}_{\boldsymbol{\theta}^0_{|\bold{K}}}(E^I_{st|\bold{K}}) &=& \sup_{s,t,\bold{K}\in K_{st}^m}\mathbb{P}_{\boldsymbol{\theta}^0_{|\bold{K}}}\bigg(|\hat{\theta}_{st|\bold{K}}|>n^{b-1/2}\sqrt{ \left[ J(\hat{\boldsymbol{\theta}}_{s|\bold{K}})^{-1}\right]_{tt}}\bigg)\nonumber\\
		&=& \sup_{s,t,\bold{K}\in K_{st}^m}\mathbb{P}_{\boldsymbol{\theta}^0_{|\bold{K}}}\bigg(|\hat{\theta}_{st|\bold{K}}-\theta^0_{st|\bold{K}}|>n^{b-1/2}\sqrt{ \left[ J(\hat{\boldsymbol{\theta}}_{s|\bold{K}})^{-1}\right]_{tt}}\bigg)\nonumber\\
		&\le& \exp\{ -cn\}+c_2n\kappa(n,\gamma)+c_1n^{-2},
		\end{eqnarray}
		using Theorem C.6, Supplementary, and the fact that 
		$n^{b-1/2}\sqrt{ \left[ J(\hat{\boldsymbol{\theta}}_{s|\bold{K}})^{-1}\right]_{tt}}\longrightarrow 0$
		as $n\longrightarrow \infty.$ Furthermore, with the choice of $\alpha_n$ above, and $\delta\ge 2n^{b-1/2}\sqrt{ \left[ J(\hat{\boldsymbol{\theta}}_{s|\bold{K}})^{-1}\right]_{tt}}$,
		\begin{eqnarray*}
			\sup_{s,t,\bold{K}\in K_{st}^m}\mathbb{P}_{\boldsymbol{\theta}_{|\bold{K}}}(E^{II}_{st|\bold{K}}) &=& \sup_{s,t,\bold{K}\in K_{st}^m}\mathbb{P}_{\boldsymbol{\theta}_{|\bold{K}}}\bigg(|\hat{\theta}_{st|\bold{K}}|\le n^{b-1/2}\sqrt{ \left[ J(\hat{\boldsymbol{\theta}}_{s|\bold{K}})^{-1}\right]_{tt}}\bigg)\\
			&=& \sup_{s,t,\bold{K}\in K_{st}^m}\mathbb{P}_{\boldsymbol{\theta}_{|\bold{K}}}\bigg(|\theta_{st|\bold{K}}|-|\hat{\theta}_{st|\bold{K}}|\ge |\theta_{st|\bold{K}}|-n^{b-1/2}\sqrt{ \left[ J(\hat{\boldsymbol{\theta}}_{s|\bold{K}})^{-1}\right]_{tt}}\bigg)\\
			&\le& \sup_{s,t,\bold{K}\in K_{st}^m}\mathbb{P}_{\boldsymbol{\theta}_{|\bold{K}}}\bigg(|\theta_{st|\bold{K}}-\hat{\theta}_{st|\bold{K}}|\ge |\theta_{st|\bold{K}}|-n^{b-1/2}\sqrt{ \left[ J(\hat{\boldsymbol{\theta}}_{s|\bold{K}})^{-1}\right]_{tt}}\bigg)\\
			&\le& \sup_{s,t,\bold{K}\in K_{st}^m}\mathbb{P}_{\boldsymbol{\theta}_{|\bold{K}}}\bigg(|\hat{\theta}_{st|\bold{K}}-\theta_{st|\bold{K}}|\ge n^{b-1/2}\sqrt{ \left[ J(\hat{\boldsymbol{\theta}}_{s|\bold{K}})^{-1}\right]_{tt}}\bigg),
		\end{eqnarray*}
		Finally, by Theorem C.6, Supplementary, we then obtain
		\begin{equation}\label{error2}
		\sup_{s,t,\bold{K}\in K_{st}^m}\mathbb{P}_{\boldsymbol{\theta}_{|\bold{K}}}(E^{II}_{st|\bold{K}})
		\le \exp\{ -cn\}+c_2n\kappa(n,\gamma)+c_1n^{-2},
		\end{equation}
		as $n\longrightarrow\infty$. 	Now, by \eqref{error}-\eqref{error2}, we get
		\begin{eqnarray}\label{totalerror}
		&&\mathbb{P}_{\boldsymbol{\theta}}(\text{ a type I or II error occurs in testing procedure})\nonumber\\
		&&~\le \mathbb{P}_{\boldsymbol{\theta}_{|\bold{K}}}(\cup_{s,t,\bold{K}\in K_{st}^m}E_{st|\bold{K}})\nonumber\\
		&&~\le O_p(p^{m+2})\sup_{s,t,\bold{K}\in K_{st}^m}\mathbb{P}_{\boldsymbol{\theta}_{|\bold{K}}}(E_{st|\bold{K}})\nonumber\\
		&&~\le O_p(p^{m+2})\bigg[\exp\{ -cn\}+c_2n\kappa(n,\gamma)+c_1n^{-2}\bigg]\nonumber\\
		&&~\rightarrow 0\nonumber, 
		\end{eqnarray}
		as $n\longrightarrow\infty$.

\end{proof}}

\section*{Acknowledgements}
\noindent
D.R. was supported by the National Cancer Institute of the National Institutes of Health [2U24CA180996]. This work was supported in part by CZF2019-002443 (D.R. and T.K.H.N.) from the Chan Zuckerberg Initiative DAF, an advised fund of the Silicon Valley Community Foundation.

\bibliography{smj-template}

\begin{thebibliography}{29}
\providecommand{\natexlab}[1]{#1}
\providecommand{\url}[1]{\texttt{#1}}
\expandafter\ifx\csname urlstyle\endcsname\relax
  \providecommand{\doi}[1]{doi: #1}\else
  \providecommand{\doi}{doi: \begingroup \urlstyle{rm}\Url}\fi

\bibitem[Allen and Liu(2013)]{allen2013local}
{\rm Allen, G. {\rm and} Liu, Z.} (2013).
\newblock A local {P}oisson graphical model for inferring networks from
  sequencing data.
\newblock \emph{NanoBioscience, IEEE Transactions on}, {\bf 12}\penalty0 (3),
  \penalty0 189--198.

\bibitem[B{\"u}hlmann et~al.(2014)B{\"u}hlmann, Peters, and
  Ernest]{buhlmann2014cam}
{\rm B{\"u}hlmann, P., Peters, J., {\rm and} Ernest, J.} (2014).
\newblock {CAM:} causal additive models, high-dimensional order search and
  penalized regression.
\newblock \emph{The Annals of Statistics}, {\bf 42}\penalty0 (6), \penalty0
  2526--2556.

\bibitem[Cooper and Herskovits(1992)]{cooper1992bayesian}
{\rm Cooper, G.~F. {\rm and} Herskovits, E.} (1992).
\newblock A bayesian method for the induction of probabilistic networks from
  data.
\newblock \emph{Machine learning}, {\bf 9}\penalty0 (4), \penalty0 309--347.

\bibitem[Csardi et~al.(2006)Csardi, Nepusz, et~al.]{csardi2006igraph}
{\rm Csardi, G., Nepusz, T., et~al.} (2006).
\newblock The igraph software package for complex network research.
\newblock \emph{InterJournal, complex systems}, {\bf 1695}\penalty0 (5),
  \penalty0 1--9.

\bibitem[Dawid(2010)]{dawid2010beware}
{\rm Dawid, A.~P.} (2010).
\newblock Beware of the dag!
\newblock In \emph{Causality: objectives and assessment}, pages 59--86. PMLR.

\bibitem[Dor and Tarsi(1992)]{dor1992simple}
{\rm Dor, D. {\rm and} Tarsi, M.} (1992).
\newblock A simple algorithm to construct a consistent extension of a partially
  oriented graph.
\newblock \emph{Technicial Report R-185, Cognitive Systems Laboratory, UCLA}.

\bibitem[Fraley and Raftery(2002)]{fraley2002model}
{\rm Fraley, C. {\rm and} Raftery, A.~E.} (2002).
\newblock Model-based clustering, discriminant analysis, and density
  estimation.
\newblock \emph{Journal of the American Statistical Association}, {\bf
  97}\penalty0 (458), \penalty0 611--631.

\bibitem[Friedman and Koller(2003)]{friedman2003being}
{\rm Friedman, N. {\rm and} Koller, D.} (2003).
\newblock Being bayesian about network structure. a bayesian approach to
  structure discovery in bayesian networks.
\newblock \emph{Machine learning}, {\bf 50}, \penalty0 95--125.

\bibitem[Gale et~al.(1993)Gale, Kaplan, Lowenstein, Schlessinger, and
  Bar-Sagi]{gale1993grb2}
{\rm Gale, N.~W., Kaplan, S., Lowenstein, E.~J., Schlessinger, J., {\rm and}
  Bar-Sagi, D.} (1993).
\newblock Grb2 mediates the egf-dependent activation of guanine nucleotide
  exchange on ras.
\newblock \emph{Nature}, {\bf 363}\penalty0 (6424), \penalty0 88--92.

\bibitem[Gallopin et~al.(2013)Gallopin, Rau, and
  Jaffr{\'e}zic]{gallopin2013hierarchical}
{\rm Gallopin, M., Rau, A., {\rm and} Jaffr{\'e}zic, F.} (2013).
\newblock A hierarchical poisson log-normal model for network inference from
  rna sequencing data.
\newblock \emph{PloS one}, {\bf 8}\penalty0 (10), \penalty0 e77503.

\bibitem[Hadiji et~al.(2015)Hadiji, Molina, Natarajan, and
  Kersting]{hadiji2015poisson}
{\rm Hadiji, F., Molina, A., Natarajan, S., {\rm and} Kersting, K.} (2015).
\newblock Poisson dependency networks: {G}radient boosted models for
  multivariate count data.
\newblock \emph{Machine Learning}, {\bf 100}\penalty0 (2-3), \penalty0
  477--507.

\bibitem[Kalisch and B{\"u}hlmann(2007)]{kalisch2007estimating}
{\rm Kalisch, M. {\rm and} B{\"u}hlmann, P.} (2007).
\newblock Estimating high-dimensional directed acyclic graphs with the
  {PC}-algorithm.
\newblock \emph{Journal of Machine Learning Research}, {\bf 8}\penalty0 (Mar),
  \penalty0 613--636.

\bibitem[Kanehisa and Goto(2000)]{kanehisa:2000}
{\rm Kanehisa, M. {\rm and} Goto, S.} (2000).
\newblock {KEGG:} {K}yoto {E}ncyclopedia of {G}enes and {G}enomes.
\newblock \emph{Nucleic Acids Research}, {\bf 28}\penalty0 (1), \penalty0
  27--30.
\newblock ISSN 0305-1048.
\newblock URL \url{http://www.ncbi.nlm.nih.gov/pubmed/10592173}.

\bibitem[Kuipers et~al.(2022)Kuipers, Suter, and Moffa]{kuipers2022efficient}
{\rm Kuipers, J., Suter, P., {\rm and} Moffa, G.} (2022).
\newblock Efficient sampling and structure learning of bayesian networks.
\newblock \emph{Journal of Computational and Graphical Statistics}, {\bf
  31}\penalty0 (3), \penalty0 639--650.

\bibitem[Lauritzen(1996)]{lauritzengraphical}
{\rm Lauritzen, S.~L.} (1996).
\newblock \emph{Graphical {M}odels}, volume~17.
\newblock Clarendon Press, Oxford.

\bibitem[Lemmon and Schlessinger(2010)]{lemmon2010cell}
{\rm Lemmon, M.~A. {\rm and} Schlessinger, J.} (2010).
\newblock Cell signaling by receptor tyrosine kinases.
\newblock \emph{Cell}, {\bf 141}\penalty0 (7), \penalty0 1117--1134.

\bibitem[Li et~al.(2015)Li, Scarlett, Ravikumar, and
  Cevher]{li2015sparsistency}
{\rm Li, Y.-H., Scarlett, J., Ravikumar, P., {\rm and} Cevher, V.} (2015).
\newblock Sparsistency of $\ell_1$-regularized m-estimators.
\newblock In \emph{Artificial Intelligence and Statistics}, pages 644--652.
  PMLR.

\bibitem[Liu et~al.(2010)Liu, Roeder, and Wasserman]{liu2010stability}
{\rm Liu, H., Roeder, K., {\rm and} Wasserman, L.} (2010).
\newblock Stability approach to regularization selection (stars) for high
  dimensional graphical models.
\newblock In \emph{Advances in neural information processing systems}, pages
  1432--1440.

\bibitem[Nguyen and Chiogna(2021)]{JMLR:v22:18-401}
{\rm Nguyen, T. K.~H. {\rm and} Chiogna, M.} (2021).
\newblock Structure learning of undirected graphical models for count data.
\newblock \emph{Journal of Machine Learning Research}, {\bf 22}\penalty0 (50),
  \penalty0 1--53.
\newblock URL \url{http://jmlr.org/papers/v22/18-401.html}.

\bibitem[Park and Raskutti(2015)]{park2015learning}
{\rm Park, G. {\rm and} Raskutti, G.} (2015).
\newblock Learning large-scale {P}oisson {DAG} models based on overdispersion
  scoring.
\newblock In \emph{Advances in Neural Information Processing Systems}, pages
  631--639.

\bibitem[Pearl et~al.(2000)]{pearl2000models}
{\rm Pearl, J. et~al.} (2000).
\newblock Models, reasoning and inference.
\newblock \emph{Cambridge, UK: CambridgeUniversityPress}, {\bf 19}\penalty0
  (2), \penalty0 3.

\bibitem[Peters and B{\"u}hlmann(2013)]{peters2013identifiability}
{\rm Peters, J. {\rm and} B{\"u}hlmann, P.} (2013).
\newblock Identifiability of {G}aussian structural equation models with equal
  error variances.
\newblock \emph{Biometrika}, {\bf 101}\penalty0 (1), \penalty0 219--228.

\bibitem[Schmidt et~al.(2007)Schmidt, Niculescu-Mizil, and
  Murphy]{schmidt2007learning}
{\rm Schmidt, M., Niculescu-Mizil, A., {\rm and} Murphy, K.} (2007).
\newblock Learning graphical model structure using $\ell_1$-regularization
  paths.
\newblock In \emph{AAAI}, volume~7, pages 1278--1283.

\bibitem[Spirtes et~al.(1993)Spirtes, Glymour, and
  Scheines]{spirtes1993causation}
{\rm Spirtes, P., Glymour, C., {\rm and} Scheines, R.} (1993).
\newblock Causation, prediction, and search. 1993.
\newblock \emph{Lecture Notes in Statistics}, {\bf 81}.

\bibitem[Teyssier and Koller(2005)]{10.5555/3020336.3020407}
{\rm Teyssier, M. {\rm and} Koller, D.} (2005).
\newblock Ordering-based search: A simple and effective algorithm for learning
  bayesian networks.
\newblock In \emph{Proceedings of the Twenty-First Conference on Uncertainty in
  Artificial Intelligence}, UAI'05, page 584–590, Arlington, Virginia, USA,
  2005. AUAI Press.
\newblock ISBN 0974903914.

\bibitem[Tsamardinos et~al.(2006)Tsamardinos, Brown, and
  Aliferis]{tsamardinos2006max}
{\rm Tsamardinos, I., Brown, L.~E., {\rm and} Aliferis, C.~F.} (2006).
\newblock The max-min hill-climbing {B}ayesian network structure learning
  algorithm.
\newblock \emph{Machine learning}, {\bf 65}\penalty0 (1), \penalty0 31--78.

\bibitem[Xue et~al.(2020)Xue, Zhao, Aronowitz, Mai, Vides, Qeriqi, Kim, Li,
  de~Stanchina, Mazutis, Risso, and Lito]{xue2020rapid}
{\rm Xue, J.~Y., Zhao, Y., Aronowitz, J., Mai, T.~T., Vides, A., Qeriqi, B.,
  Kim, D., Li, C., de~Stanchina, E., Mazutis, L., Risso, D., {\rm and} Lito,
  P.} (2020).
\newblock Rapid non-uniform adaptation to conformation-specific kras (g12c)
  inhibition.
\newblock \emph{Nature}, {\bf 577}\penalty0 (7790), \penalty0 421--425.

\bibitem[Yang et~al.(2012)Yang, Allen, Liu, and Ravikumar]{yang2012graphical}
{\rm Yang, E., Allen, G., Liu, Z., {\rm and} Ravikumar, P.~K.} (2012).
\newblock Graphical models via generalized linear models.
\newblock In \emph{Advances in Neural Information Processing Systems}, pages
  1358--1366.

\bibitem[Yang et~al.(2015)Yang, Ravikumar, Allen, and Liu]{yang2015graphical}
{\rm Yang, E., Ravikumar, P., Allen, G.~I., {\rm and} Liu, Z.} (2015).
\newblock Graphical models via univariate exponential family distributions.
\newblock \emph{Journal of Machine Learning Research}, {\bf 16}\penalty0 (1),
  \penalty0 3813--3847.

\end{thebibliography}


\begin{thebibliography}{9}
\providecommand{\natexlab}[1]{#1}
\providecommand{\url}[1]{\texttt{#1}}
\expandafter\ifx\csname urlstyle\endcsname\relax
  \providecommand{\doi}[1]{doi: #1}\else
  \providecommand{\doi}{doi: \begingroup \urlstyle{rm}\Url}\fi

\bibitem[Akaike(1974)]{akaike1974new}
H.~Akaike.
\newblock A new look at the statistical model identification.
\newblock \emph{IEEE transactions on automatic control}, 19\penalty0
  (6):\penalty0 716--723, 1974.

\bibitem[Chickering(2002)]{chickering2002optimal}
D.~M. Chickering.
\newblock Optimal structure identification with greedy search.
\newblock \emph{Journal of machine learning research}, 3\penalty0
  (Nov):\penalty0 507--554, 2002.

\bibitem[Cooper and Herskovits(1992)]{cooper1992bayesian}
G.~F. Cooper and E.~Herskovits.
\newblock A bayesian method for the induction of probabilistic networks from
  data.
\newblock \emph{Machine learning}, 9\penalty0 (4):\penalty0 309--347, 1992.

\bibitem[Haughton et~al.(1988)]{haughton1988choice}
D.~M. Haughton et~al.
\newblock On the choice of a model to fit data from an exponential family.
\newblock \emph{The Annals of Statistics}, 16\penalty0 (1):\penalty0 342--355,
  1988.

\bibitem[Hoeffding(1963)]{hoeffding1963probability}
W.~Hoeffding.
\newblock Probability inequalities for sums of bounded random variables.
\newblock \emph{Journal of the American statistical association}, 58\penalty0
  (301):\penalty0 13--30, 1963.

\bibitem[Johnson and Robinson(1981)]{johnson1981eigenvalue}
C.~R. Johnson and H.~A. Robinson.
\newblock Eigenvalue inequalities for principal submatrices.
\newblock \emph{Linear Algebra and its Applications}, 37:\penalty0 11--22,
  1981.

\bibitem[Nguyen and Chiogna(2021)]{JMLR:v22:18-401}
T~K~H Nguyen and M.~Chiogna.
\newblock Structure learning of undirected graphical models for count data.
\newblock \emph{Journal of Machine Learning Research}, 22\penalty0
  (50):\penalty0 1--53, 2021.
\newblock URL \url{http://jmlr.org/papers/v22/18-401.html}.

\bibitem[Schwarz(1978)]{schwarz1978estimating}
G.~Schwarz.
\newblock Estimating the dimension of a model.
\newblock \emph{The Annals of Statistics}, 6\penalty0 (2):\penalty0 461--464,
  1978.

\bibitem[Wagenmakers and Farrell(2004)]{wagenmakers2004aic}
E.~J. Wagenmakers and S.~Farrell.
\newblock {AIC} model selection using {A}kaike weights.
\newblock \emph{Psychonomic Bulletin \& Review}, 11\penalty0 (1):\penalty0
  192--196, 2004.

\end{thebibliography}

\end{document}



\def\spacingset#1{\renewcommand{\baselinestretch}%
	{#1}\small\normalsize} \spacingset{1}

\setcounter{nx}{1}


\if0\blind
{
	\title{\large\bf  Guided structure learning of DAGs for count data \\
        SUPPLEMENTARY MATERIAL}
	\author{Thi Kim Hue Nguyen \\
		Department of Statistical Sciences, University of Padova\\
		and \\
		Monica Chiogna \\
		Department of Statistical Sciences ``Paolo Fortunati'', University of Bologna\\
		and \\
		Davide Risso\\
		Department of Statistical Sciences, University of Padova\\
		and \\
        Erika Banzato\\
        Department of Statistical Sciences, University of Padova\\
		}
	\maketitle
} \fi

\if1\blind
{
	\bigskip
	\bigskip
	\bigskip
	\begin{center}
		{\LARGE\bf Guided structure learning of DAGs for count data}
	\end{center}
	\medskip
} \fi

\bigskip



\setcounter{section}{0}
\renewcommand{\thesection}{\Alph{section}}


%
%
%
\section{Equivalence of the conditional dependence tests}
Here, we prove the test whether $\theta_{st|\bold{K}} \neq 0$ is equivalent to the conditional dependence test. 
\begin{dl}
Assume the order of variables is specified beforehand, and for each $s\in V$, assume
$X_s| \bold{x}_\bold{K}\sim \text{Pois}\big(\exp\big\{\sum_{t\in \bold{K}}\theta_{st|\bold{K}}x_t\big\}\big),~ \forall~ \bold{K} \subseteq  pre(s)$,  then  the test whether $\theta_{st|\bold{K}} \neq 0$ is equivalent to the conditional dependence test  $X_s \indep X_t|\bold{X}_\bold{S},$ where $ \bold{S}=\bold{K}\backslash\{t\}$.  
\end{dl}
\begin{proof}
Indeed, we combine the idea of Wald-type tests on the parameters ${\theta}_{st|\bold{K}}$  (see \cite{JMLR:v22:18-401}) with that of making use of the topological ordering to determine the sequence of tests to be performed. In detail, for each $s\in V$, we assume
$X_s| {\bold{x}_{\bold{K}}}\sim \text{Pois}\big(\exp\big\{\sum_{t\in \bold{K}}\theta_{st|\bold{K}}x_t\big\}\big), ~\forall ~\bold{K} \subseteq  pre(s)$,  
i.e., 
$$
\mathbb{P}_{\boldsymbol{\theta}_{s|\bold{K}}}(x_{s}|\bold{x}_{\bold{K}})= \exp\big\{\sum_{t\in \bold{K}}\theta_{st|\bold{K}}x_sx_t-\log(x_s!)-e^{\sum_{t\in \bold{K}}\theta_{st|\bold{K}}x_t}\big\}.
$$
 It is easy to see that $\theta_{st|\bold{K}}=0$ is equivalent to the conditional independence relation. Indeed, 
 \begin{itemize}
     \item if $\theta_{st|\bold{K}}=0$  we have
$$
\mathbb{P}_{\boldsymbol{\theta}}(x_{s},x_t|\bold{x}_{\bold{S}})=\mathbb{P}_{\boldsymbol{\theta}}(x_{s}|x_t,\bold{x}_{\bold{S}})\mathbb{P}_{\boldsymbol{\theta}}(x_{t}|\bold{x}_{\bold{S}})=\mathbb{P}_{\boldsymbol{\theta}}(x_{s}|\bold{x}_{\bold{S}})\mathbb{P}_{\boldsymbol{\theta}}(x_{t}|\bold{x}_{\bold{S}}), $$
$\forall~ \bold{x}_{\bold{K}\cup\{s\}}\in \{0,1,2,\ldots\}^{|\bold{K}|+1}.$
Therefore, $X_s\perp X_t|\bold{X}_{\bold{S}}$.
\item if there exists $t\in \bold{K}$ such that  $X_s\perp X_t|\bold{X}_{\bold{S}}$ but $\theta_{st|\bold{K}}\ne 0$, we have 
$$\mathbb{P}_{\boldsymbol{\theta}}(x_{s}x_t|\bold{x}_{\bold{S}})=\mathbb{P}_{\boldsymbol{\theta}}(x_{s}|\bold{x}_{\bold{S}})\mathbb{P}_{\boldsymbol{\theta}}(x_{t}|\bold{x}_{\bold{S}}), \quad \forall~ \bold{x}_{\bold{K}\cup\{s\}}\in \{0,1,2,\ldots\}^{|\bold{K}|+1}.$$ 
Consider $x_j=0, \; \forall j\in \bold{K}\backslash \{{t}\}$, and $x_t=1$, then
$$\mathbb{P}_{\boldsymbol{\theta}}(x_{s},x_t|\bold{x}_{\bold{S}})=\mathbb{P}_{\boldsymbol{\theta}}(x_{s}|x_t,\bold{x}_{\bold{S}})\mathbb{P}_{\boldsymbol{\theta}}(x_{t}|\bold{x}_{\bold{S}})= \exp\{-e^{\theta_{st|\bold{K}}}\} \mathbb{P}_{\boldsymbol{\theta}}(x_{t}|\bold{x}_{\bold{S}})\ne \mathbb{P}_{\boldsymbol{\theta}}(x_{s}|\bold{x}_{\bold{S}})\mathbb{P}_{\boldsymbol{\theta}}(x_{t}|\bold{x}_{\bold{S}}),
$$
contradictory with $X_s\perp X_t|\bold{X}_{\bold{S}}$. Hence, $\theta_{st|\bold{K}}= 0.$

 \end{itemize}
    
\end{proof}


	
	
	
	
		

\section{The PK2 algorithm}\label{PK2}
%
%
\noindent
K2 \citep{cooper1992bayesian} is a structure learning algorithm  which owes its popularity to its computational efficiency  and simplicity. Technically speaking, it is defined only for  categorical variables; a natural choice for using K2 on non-categorical data is categorization, a choice that highly depends on the algorithm chosen to categorize. In our empirical study, we develop a modification of K2, named PK2 (Poisson based on K2 algorithm), that substitutes the K2 score with an alternative scoring criterion respectful of the nature of the data.

As alternative scoring criteria, we consider the Bayesian information criterion (BIC) \citep{schwarz1978estimating} and the Akaike information criterion (AIC) \citep{akaike1974new}, that both balance  goodness of fit and model parsimony. 
%
%
%
%
PK2 works in two steps: (i) forward phase, and (ii) backward phase. In the forward phase,
we employ the same search strategy as in the K2 algorithm. Then, the result from the first phase is taken as the input of the second phase. Here, we use the score to prune the estimated graph by deleting edges that gain larger scores. 

%
%
%
%
%
The asymptotic property of PK2 is proved by 
results of \cite{chickering2002optimal}, that states that under the assumption of a generative distribution  perfect w.r.t. the structure from which the data was generated, the greedy search  identifies the true structure up to an equivalent class as the number of observations goes to infinity when a consistent scoring criterion is used. 

As both BIC and AIC are consistent scoring criteria \citep{haughton1988choice}, and the perfect property is guaranteed by Markov and Faithfulness assumption,  Lemma 10 of \cite{chickering2002optimal} holds for  PK2. Moreover, the Poisson model is identifiable (see Appendix A). Therefore, the estimator in PK2 algorithm  convergences asymptotically to the true graph.

{We note that PKAIC (PK2 using AIC) usually performs more poorly than the PKBIC (PK2 using BIC). This result is not surprising since the PKAIC using the AIC criterion  penalizes  less strongly than  the BIC one. As a consequence, the PKAIC results in graphs with more edges than PKBIC does,  see  \cite{wagenmakers2004aic}. For this reason, we consider only PKBIC in the main paper.}

			

\section{Proofs}\label{proofs} 
Here, we provide proofs of results needed to show consistency of our algorithm. 
We begin by introducing  results for the case $\mathbf{K}=V\backslash\{s\}$. Then, the same results for general case  {$\mathbf{K}\subset V\backslash\{s\}$} are deduced.

Before going into details, we first prove the following Lemma, used in the proof of Theorem \ref{dl1}.
{
\begin{bd}\label{Fisher}
	Assume 2 - 5. Then, for any $\delta>0$, we have
	\begin{eqnarray*}
		\mathbb{P}_{\boldsymbol{\theta}}\left(\Lambda_{\max}\big(\frac{1}{n}\sum_{i=1}^{n} (\mathbf{X}_{V\backslash \{s\}}^{(i)})^T\mathbf{X}_{V\backslash \{s\}}^{(i)}\big)\le \lambda_{\max}+\delta\right)&\ge&1- \exp\left\{-c_1'\frac{\delta^2n}{p^2n^{2\gamma}\log^4 n}+c_2'\log p\right\}\\
		&&~-c_2n\kappa(n,\gamma)-c_1n^{-2}\\
		\mathbb{P}_{\boldsymbol{\theta}}\left(\Lambda_{\min}\left(Q_s(\boldsymbol{\theta_s})\right)\ge \lambda_{\min}-\delta\right)&\ge& 1-\exp\left\{-c_1'\frac{\delta^2n}{p^2n^{2\gamma}\log^4 n}+c_2'\log p\right\}\\
		&&~-c_2n\kappa(n,\gamma)-c_1n^{-2}.
	\end{eqnarray*}
\end{bd}
}
\begin{proof}
	We have
	\begin{eqnarray*}
		\Lambda_{\min}(I_s(\boldsymbol{\theta}_s))&=&\min_{\|\mathbf{y}\|_2=1}\mathbf{y}I_s(\boldsymbol{\theta}_s)\mathbf{y}^T\\
		&=&\min_{\|\mathbf{y}\|_2=1}\left\{\mathbf{y}Q_s(\boldsymbol{\theta}_s)\mathbf{y}^T+\mathbf{y}(I_s(\boldsymbol{\theta}_s)-Q_s(\boldsymbol{\theta}_s))\mathbf{y}^T\right\}\\
		&\le& \mathbf{y}Q_s(\boldsymbol{\theta}_s)\mathbf{y}^T+\mathbf{y}(I_s(\boldsymbol{\theta}_s)-Q_s(\boldsymbol{\theta}_s))\mathbf{y}^T,
	\end{eqnarray*}
	where $\mathbf{y}\in \mathbb{R}^{p-1}$ is an arbitrary vector with unit norm. Hence,
	\begin{equation}\label{Hessian}
	\Lambda_{\min}(Q_s(\boldsymbol{\theta}_s))\ge \Lambda_{\min}(I_s(\boldsymbol{\theta}_s))-\max_{\|\mathbf{y}\|_2=1}\mathbf{y}\big(I_s(\boldsymbol{\theta}_s)-Q_s(\boldsymbol{\theta}_s)\big)\mathbf{y}^T\ge \lambda_{\min}-|||I_s(\boldsymbol{\theta}_s)-Q_s(\boldsymbol{\theta}_s)|||_2.
	\end{equation}
	It remains to control the spectral norm $|||I_s(\boldsymbol{\theta}_s)-Q_s(\boldsymbol{\theta}_s)|||_2$.
	The $(j,k)$ element of the matrix $Z^n=Q_s(\boldsymbol{\theta}_s)-I_s(\boldsymbol{\theta}_s)$ can be written as
	\begin{eqnarray*}
		Z_{jk}^n(\boldsymbol{\theta}_s)&=&\frac{1}{n}\sum_{i=1}^nD(\langle\boldsymbol{\theta}_s,{X}^{(i)}_{V\backslash \{s\}}\rangle ){X}_{ij}X_{ik}-\mathbb{E}_{\boldsymbol{\theta}}\left(D\big(\langle\boldsymbol{\theta}_s,{X}_{V\backslash \{s\}}\rangle \big){X}_jX_k\right)\\
		&=&\frac{1}{n}\sum_{i=1}^n Y_i-\mathbb{E}_{\boldsymbol{\theta}}\left(\frac{1}{n}\sum_{i=1}^n Y_i\right),
	\end{eqnarray*}
	where $Y_i=D(\langle\boldsymbol{\theta}_s,{X}^{(i)}_{V\backslash \{s\}}\rangle ){X}_{ij}X_{ik},~i=1,\ldots,n$ are independent. 
	
	Let $\zeta_1=\left\{\max_iX_{it}<3\log n\right\},$ and
	$\zeta_4=\big\{\max_i\langle\boldsymbol{\theta}_s,\mathbf{X}_{V\backslash \{s\}}^{(i)}\rangle < \gamma\log n \big\}$. 
	From Assumption 4, we get
	$$\mathbb{P}_{\boldsymbol{\theta}}(\zeta^c_1)\le c_1nn^{-3}=c_1n^{-2}.$$ Moreover, by Assumption 5, we have
	\begin{eqnarray*}
		\mathbb{P}_{\boldsymbol{\theta}}\big(\zeta_4^c\big)
		&=& \mathbb{P}_{\boldsymbol{\theta}}\big(\max_i\langle\boldsymbol{\theta}_s,\mathbf{X}_{V\backslash \{s\}}^{(i)}\rangle \ge \gamma\log n\big)\\
		&=&\mathbb{P}_{\boldsymbol{\theta}}\big(\max_i\langle\boldsymbol{\theta}_s,\mathbf{X}_{V\backslash \{s\}}^{(i)}\rangle \ge \gamma\log n,\zeta_1\big)+\mathbb{P}_{\boldsymbol{\theta}}\big(\max_i\langle\boldsymbol{\theta}_s,\mathbf{X}_{V\backslash \{s\}}^{(i)}\rangle \ge \gamma\log n,\zeta_1^c\big)\\
		&\le&n\mathbb{P}_{\boldsymbol{\theta}}\big(\langle\boldsymbol{\theta}_s,\mathbf{X}_{V\backslash \{s\}}^{(i)}\rangle \ge \gamma\log n,\zeta_1\big)+\mathbb{P}_{\boldsymbol{\theta}}(\zeta_1^c)\\
		&\le&n\mathbb{P}_{\boldsymbol{\theta}}\big(\nu+\langle\boldsymbol{\theta}_s,\mathbf{X}_{V\backslash \{s\}}^{(i)}\rangle\ge \gamma\log n\big)+\mathbb{P}_{\boldsymbol{\theta}}(\zeta_1^c)\\
		&\le&c_2n\kappa(n,\gamma)+c_1n^{-2}.
	\end{eqnarray*}
	Conditioning on $\zeta_1,\zeta_4$, we get
	$$|Y_i|\le 9n^\gamma\log^2n.$$
	Then, by the Azuma-Hoeffding inequality  \citep[Theorem 2 in][]{hoeffding1963probability}, for any $\epsilon>0$, we obtain
	$$\mathbb{P}_{\boldsymbol{\theta}}\left((Z_{ij}^n)^2\ge \epsilon^2|\zeta_1,\zeta_4\right)=\mathbb{P}_{\boldsymbol{\theta}}\left(|Z_{ij}^n|\ge \epsilon|\zeta_1,\zeta_4\right)\le 2\exp\left(-\frac{\epsilon^2n}{18n^{2\gamma}\log^4n}\right).$$
	Let {$\epsilon=\delta/p$}, then 
	{
	\begin{eqnarray}\label{Fisherdistance}
	\mathbb{P}_{\boldsymbol{\theta}}\left(|||I_s(\boldsymbol{\theta}_s)-Q_s(\boldsymbol{\theta}_s)|||_2\ge \delta\right)&\le&\mathbb{P}_{\boldsymbol{\theta}}\bigg(\big(\sum_{j,k\ne s}(Z_{jk}^n)^2\big)^{1/2}\ge \delta \bigg)\\
	&\le&\mathbb{P}_{\boldsymbol{\theta}}\bigg(\big(\sum_{j,k\ne s}(Z_{jk}^n)^2\big)^{1/2}\ge \delta|\zeta_1,\zeta_4 \bigg)+\mathbb{P}_{\boldsymbol{\theta}}(\zeta_4^c)+\mathbb{P}_{\boldsymbol{\theta}}(\zeta_1^c)\nonumber\\
	&\le& 2p^2\exp\left\{-\frac{\delta^2n}{18p^2n^{2\gamma}\log^4 n}\right\}+c_2n\kappa(n,\gamma)+c_1n^{-2}\nonumber\\
	&\le& \exp\left\{-c_1'\frac{\delta^2n}{p^2n^{2\gamma}\log^4 n}+c_2'\log p\right\}+c_2n\kappa(n,\gamma)+c_1n^{-2}.\nonumber
	\end{eqnarray}
	From Equation \eqref{Hessian} and \eqref{Fisherdistance}, we have
	$$\mathbb{P}_{\boldsymbol{\theta}}\left(\Lambda_{\min}(Q_s(\boldsymbol{\theta}_s))\ge \lambda_{\min}-\delta\right)\ge 1-\exp\left\{-c_1'\frac{\delta^2n}{p^2n^{2\gamma}\log^4 n}+c_2'\log p\right\}-c_2n\kappa(n,\gamma)-c_1n^{-2}.$$
	Similarly,
	\begin{eqnarray*}
		\mathbb{P}_{\boldsymbol{\theta}}\left(\Lambda_{\max}\left[\frac{1}{n}\sum_{i=1}^{n} (\mathbf{X}_{V\backslash \{s\}}^{(i)})^T\mathbf{X}_{V\backslash \{s\}}^{(i)}\bigg]\le \lambda_{\max}+\delta\right]\right)&\ge&  1-\exp\left\{-c_1'\frac{\delta^2n}{p^2n^{2\gamma}\log^4 n}+c_2'\log p\right\}\\
		&&~-c_2n\kappa(n,\gamma)-c_1n^{-2}.
	\end{eqnarray*}}
\end{proof}

\begin{md}\label{pro1}
	Assume 1- 5. Then, for any $\delta>0$
	$$\mathbb{P}_{\boldsymbol{\theta}}(\|\nabla l(\boldsymbol{\theta}_s,\mathbf{X}_s;\mathbf{X}_{V\backslash \{s\}})\|_{\infty}\ge \delta)\le \exp(-c_3 n)+c_2n\kappa(n,\gamma)+c_1n^{-2},~\forall~ \boldsymbol{\theta}\in \boldsymbol{\Theta},$$
	when  $n\longrightarrow\infty$.
\end{md}

\begin{proof}
	The $t$-partial derivative of the node conditional log-likelihood $l(\boldsymbol{\theta}_s,\mathbf{X}_s;\mathbf{X}_{V\backslash \{s\}})$ is:
	\begin{eqnarray*}
		W_t=\nabla_t l(\boldsymbol{\theta}_s,\mathbf{X}_s;\mathbf{X}_{V\backslash \{s\}}) &=& \frac{1}{n}\sum_{i=1}^{n}\left[-X_{is}X_{it}+X_{it}D(\langle\boldsymbol{\theta}_s,\mathbf{X}_{V\backslash \{s\}}^{(i)}\rangle )\right]
	\end{eqnarray*}
	Let $V_{is}(t)= X_{is}X_{it}-X_{it}D(\langle\boldsymbol{\theta}_s,\mathbf{X}_{V\backslash \{s\}}^{(i)}\rangle )$. We have,
	\begin{eqnarray}\label{derivative}
	\mathbb{P}_{\boldsymbol{\theta}}(\|W\|_\infty >\delta)&=& \mathbb{P}_{\boldsymbol{\theta}}(\max_{t\in V\backslash \{s\}}|\nabla_t l(\boldsymbol{\theta}_s,X_s,\mathbf{X}_{V\backslash\{s\}})|> \delta)\\
	&\le&\mathbb{P}_{\boldsymbol{\theta}}\bigg( \max_{t\in V\backslash \{s\}}\bigg|\dfrac{1}{n}\sum_{i=1}^{n}V_{is}(t)\bigg|>\delta|\zeta_1,\zeta_2\bigg)+\mathbb{P}_{\boldsymbol{\theta}}(\zeta_1^c)+\mathbb{P}_{\boldsymbol{\theta}}(\zeta_2^c)\nonumber\\
	&\le& p \left[\mathbb{P}_{\boldsymbol{\theta}}\left(\dfrac{1}{n}\sum_{i=1}^{n}V_{is}(t)>\delta|\zeta_1,\zeta_2\right)+\mathbb{P}_{\boldsymbol{\theta}}\left(-\dfrac{1}{n}\sum_{i=1}^{n}V_{is}(t)>\delta|\zeta_1,\zeta_2\right)\right]\nonumber\\
	&&~+\mathbb{P}_{\boldsymbol{\theta}}(\zeta_1^c)+\mathbb{P}_{\boldsymbol{\theta}}(\zeta_2^c)\nonumber\\
	&\le& p\left[ \dfrac{\mathbb{E}_{\boldsymbol{\theta}}\left[\prod_{i=1}^{n}\exp\left(hV_{is}(t)\right)|\zeta_1,\zeta_2\right]}{\exp(nh\delta)}+\dfrac{\mathbb{E}_{\boldsymbol{\theta}}\left[\prod_{i=1}^{n}\exp\left(-hV_{is}(t)\right)|\zeta_1,\zeta_2\right]}{\exp(nh\delta)}\right]\nonumber\\
	&&~+\mathbb{P}_{\boldsymbol{\theta}}(\zeta_1^c)+\mathbb{P}_{\boldsymbol{\theta}}(\zeta_2^c)\nonumber\\
	&=& p \left[\dfrac{\prod_{i=1}^{n}\mathbb{E}_{\boldsymbol{\theta}}\left[\exp\left(hV_{is}(t)\right)|\zeta_1,\zeta_2\right]}{\exp(nh\delta)}+\dfrac{\prod_{i=1}^{n}\mathbb{E}_{\boldsymbol{\theta}}\left[\exp\left(-hV_{is}(t)\right)|\zeta_1,\zeta_2\right]}{\exp(nh\delta)}\right]\nonumber\\
	&&~+\mathbb{P}_{\boldsymbol{\theta}}(\zeta_1^c)+\mathbb{P}_{\boldsymbol{\theta}}(\zeta_2^c)\nonumber\\
	&=& p\bigg[\exp\bigg\{ \sum_{i=1}^{n}\log\mathbb{E}_{\boldsymbol{\theta}}\left[\exp\left(hV_{is}(t)\right)|\zeta_1,\zeta_2\right]-nh\delta\bigg\}\nonumber\\
	&&+\exp\bigg\{\sum_{i=1}^{n}\log\mathbb{E}_{\boldsymbol{\theta}}\left[\exp\left(-hV_{is}(t)\right)|\zeta_1,\zeta_2\right] -nh\delta\bigg\}\bigg]+\mathbb{P}_{\boldsymbol{\theta}}(\zeta_1^c)+\mathbb{P}_{\boldsymbol{\theta}}(\zeta_2^c),\nonumber
	\end{eqnarray}
	for some $h>0$, where $\zeta_1=\left\{\max_iX_{it}<3\log n\right\},$
	$\zeta_2=\big\{\max_i\big(vhX_{it}+\langle\boldsymbol{\theta}_s,\mathbf{X}_{V\backslash \{s\}}^{(i)}\rangle\big) < \gamma\log n \big\}$. 
	From Lemma \ref{Fisher}, we have
	$\mathbb{P}_{\boldsymbol{\theta}}\big(\zeta_1^c\big)\le c_1n^{-2}.$
	Compute $\zeta_2$ similarly to $\zeta_4$ in Lemma \ref{Fisher}, we get
	
	\begin{eqnarray*}
		\mathbb{P}_{\boldsymbol{\theta}}\big(\zeta_2^c\big)
		&=& \mathbb{P}_{\boldsymbol{\theta}}\big(\max_i\big(vhX_{it}+\langle\boldsymbol{\theta}_s,\mathbf{X}_{V\backslash \{s\}}^{(i)}\rangle\big) \ge \gamma\log n\big)\\
		&\le&\mathbb{P}_{\boldsymbol{\theta}}\big(\max_i\big(vhX_{it}+\langle\boldsymbol{\theta}_s,\mathbf{X}_{V\backslash \{s\}}^{(i)}\rangle\big) \ge \gamma\log n,\zeta_1\big)+\mathbb{P}_{\boldsymbol{\theta}}(\zeta_1^c)\\
		&\le&\mathbb{P}_{\boldsymbol{\theta}}\big(\max_i\big(\nu+\langle\boldsymbol{\theta}_s,\mathbf{X}_{V\backslash \{s\}}^{(i)}\rangle\big) \ge \gamma\log n,\zeta_1\big)+\mathbb{P}_{\boldsymbol{\theta}}(\zeta_1^c)\\
		&\le&c_2n\kappa(n,\gamma)+c_1n^{-2},
	\end{eqnarray*}
	provided that  $|h|<\dfrac{\nu}{3\log n}$.
	We therefore need to compute $\sum_{i=1}^{n}\log\mathbb{E}_{\boldsymbol{\theta}}\left[\exp\left(hV_{is}(t)\right)|\zeta_1,\zeta_2\right]$, and $\sum_{i=1}^{n}\log\mathbb{E}_{\boldsymbol{\theta}}\left[\exp\left(-hV_{is}(t)\right)|\zeta_1,\zeta_2\right]$. 
	First, we have
	\begin{eqnarray}\label{moment}
	\mathbb{E}_{\boldsymbol{\theta}_s}\left[\exp(hV_{is}(t))|\mathbf{x}_{V\backslash \{s\}}^{(i)}\right] &=& \sum_{x_{is}=0}^{\infty}\exp\bigg\{h[x_{is}x_{it}-x_{it}D(\langle\boldsymbol{\theta}_s,\mathbf{x}_{V\backslash \{s\}}^{(i)}\rangle )]\nonumber\\
	&&~+x_{is} \langle\boldsymbol{\theta}_s,\mathbf{x}_{V\backslash \{s\}}^{(i)}\rangle
	- \log x_{is}!	- D(\langle\boldsymbol{\theta}_s,\mathbf{x}_{V\backslash \{s\}}^{(i)}\rangle)\bigg\}\nonumber\\
	&=&\sum_{x_{is}=0}^{\infty}\exp\bigg\{x_{is}[hx_{it}+\langle\boldsymbol{\theta}_s,\mathbf{x}_{V\backslash \{s\}}^{(i)}\rangle ]- \log x_{is}!\nonumber\\
	&& ~- hx_{it}D(\langle\boldsymbol{\theta}_s,\mathbf{x}_{V\backslash \{s\}}^{(i)}\rangle )- D(\langle\boldsymbol{\theta}_s,\mathbf{x}_{V\backslash \{s\}}^{(i)}\rangle)\bigg\}\nonumber\\
	&=& \exp\bigg\{D(hx_{it}+\langle\boldsymbol{\theta}_s,\mathbf{x}_{V\backslash \{s\}}^{(i)}\rangle)-D(\langle\boldsymbol{\theta}_s,\mathbf{x}_{V\backslash \{s\}}^{(i)}\rangle)\nonumber\\
	&&~ -hx_{it}D(\langle\boldsymbol{\theta}_s,\mathbf{x}_{V\backslash \{s\}}^{(i)}\rangle) \bigg\}\nonumber\\
	&=& \exp\left\{\frac{h^2}{2}(x_{it})^2D(vhx_{it}+\langle\boldsymbol{\theta}_s,\mathbf{x}_{V\backslash \{s\}}^{(i)}\rangle)\right\},\nonumber
	\end{eqnarray}
	$\text{for some } v\in [0,1],$ where we move from line 2 to line 3 by applying $e^\lambda=\sum_{x=0}^\infty \dfrac{e^x}{x!}$, and from line 3 to line 4 by using a Taylor expansion for function $D(.)$ at $\langle\boldsymbol{\theta}_s,\mathbf{x}_{V\backslash \{s\}}^{(i)}\rangle$. 
	Therefore,
	\begin{eqnarray}\label{pospart}
	&&\sum_{i=1}^{n}\log\mathbb{E}_{\boldsymbol{\theta}}\left[\exp\left(hV_{is}(t)\right)|\zeta_1,\zeta_2\right]\\
	&&~=\sum_{i=1}^{n}\log\mathbb{E}_{\boldsymbol{\theta}_{V\backslash \{s\}}}\bigg[
	\mathbb{E}_{\boldsymbol{\theta}_s}\left[\exp(hV_{is}(t))|\mathbf{x}_{V\backslash \{s\}}^{(i)}\right]|\zeta_1,\zeta_2\bigg]\nonumber\\
	&&~=\sum_{i=1}^{n}\log\mathbb{E}_{\boldsymbol{\theta}_{V\backslash \{s\}}}\bigg[\exp\left\{\frac{h^2}{2}(X_{it})^2D(vhX_{it}+\langle\boldsymbol{\theta}_s,\mathbf{X}_{V\backslash \{s\}}^{(i)}\rangle)\right\}|\zeta_1,\zeta_2\bigg]\nonumber\\
	&&~\le\sum_{i=1}^{n}\log\mathbb{E}_{\boldsymbol{\theta}_{V\backslash \{s\}}}\bigg[\exp\left\{\frac{h^2}{2}(3 \log n)^2D(\gamma\log n)\right\}|\zeta_1,\zeta_2\bigg]\nonumber\\
	&&~=n\left\{\frac{h^2}{2}(3 \log n)^2n^{\gamma}\right\}\nonumber\\
	\end{eqnarray}
	Similarly,
	\begin{eqnarray}\label{moment1}
	\mathbb{E}_{\boldsymbol{\theta}_s}\left[\exp(-hV_{is}(t))|\mathbf{x}_{V\backslash \{s\}}^{(i)}\right] 
	&=& \exp\left\{\frac{h^2}{2}(x_{it})^2D(-vhx_{it}+\langle\boldsymbol{\theta}_s,\mathbf{x}_{V\backslash \{s\}}^{(i)}\rangle)\right\}.\nonumber
	\end{eqnarray}
	Hence,
	\begin{eqnarray}\label{negpart}
	\sum_{i=1}^{n}\log\mathbb{E}_{\boldsymbol{\theta}}\left[\exp\left(-hV_{is}(t)\right)|\zeta_1,\zeta_2\right]&\le&
	n\left\{\frac{h^2}{2}(3 \log n)^2n^{\gamma}\right\}\nonumber\\
	\end{eqnarray}
	Let $h=\dfrac{\delta}{9n^{\gamma}\log^2n}$, from \eqref{derivative}--\eqref{negpart}, we get
	\begin{eqnarray*}
		\mathbb{P}_{\boldsymbol{\theta}}(\|W\|_\infty >\delta)&\le& p\bigg[\exp\bigg\{ n\left\{\frac{h^2}{2}(3 \log n)^2n^{\gamma}\right\}-nh\delta\bigg\}
		\\&&+\exp\bigg\{n\left\{\frac{h^2}{2}(3 \log n)^2n^{\gamma}\right\} -nh\delta\bigg\}\bigg]+c_1n^{-2}+c_2n\kappa(n,\gamma)\\
		&\le&  2p\bigg[\exp\bigg\{ \dfrac{-n\delta^2}{18n^\gamma\log^2n}\bigg\}\bigg]+c_1n^{-2}+c_2n\kappa(n,\gamma)\\
		&\le&\exp\{ -c_3n\}+c_1n^{-2}+c_2n\kappa(n,\gamma),
	\end{eqnarray*}
	provided that $p<\dfrac{1}{2}\exp\left(\dfrac{n^{1-\gamma}\delta^2}{36\log^2 n}\right)$.
	
\end{proof}

\begin{dl}\label{dl1}
	Assume 1- 5. Then, there exists a non-negative decreasing sequence $\delta_n\rightarrow 0$, such that
	$$\mathbb{P}_{\boldsymbol{\theta}}(\|\hat{\boldsymbol{\theta}}_s-\boldsymbol{\theta}_s\|_2\le \delta_n)\ge 1-\exp\{ -c_4n^{1-2\gamma}\}-c_1n^{-2}-c_2n\kappa(n,\gamma),~\forall~ \boldsymbol{\theta}\in \boldsymbol{\Theta},$$
	when $n\rightarrow \infty$.
\end{dl}

\begin{proof}
	For a fixed design $\mathbb{X}$,  define $G: \mathbb{R}^{p-1}\longrightarrow \mathbb{R}$ as
	$$G({\mathbf{u}}, \mathbb{X}_s;\mathbb{X}_{V\backslash\{s\}})=l(\boldsymbol{\theta}_s+{\mathbf{u}},\mathbb{X}_s;\mathbb{X}_{V\backslash \{s\}})- l(\boldsymbol{\theta}_s,\mathbb{X}_s;\mathbb{X}_{V\backslash \{s\}}).$$
	Then, $G(0, \mathbb{X}_s;\mathbb{X}_{V\backslash\{s\}})=0$. Moreover, let $\hat{\mathbf{u}}= \hat{\boldsymbol{\theta}}_s-\boldsymbol{\theta}_s$, we have $G(\hat{\mathbf{u}}, \mathbb{X}_s;\mathbb{X}_{V\backslash\{s\}})\le 0$.
	
	Given a value $\epsilon>0$, if $G({\mathbf{u}}, \mathbb{X}_s;\mathbb{X}_{V\backslash\{s\}})>0,~\forall \mathbf{u}\in \mathbb{R}^{p-1}$ such that $\|\mathbf{u}\|_2=\epsilon$, then $\|\hat{\mathbf{u}}\|_2\le \epsilon$, since $G(., \mathbb{X}_s;\mathbb{X}_{V\backslash\{s\}})$ is a convex function. Therefore,
	$$\mathbb{P}_{\boldsymbol{\theta}}\left(\|\hat{\boldsymbol{\theta}}_s-\boldsymbol{\theta}_s\|_2\le \epsilon\right)\ge\mathbb{P}_{\boldsymbol{\theta}}\left(G({\mathbf{u}}, {X}_s;\mathbf{X}_{V\backslash\{s\}})>0),~\forall \mathbf{u}\in \mathbb{R}^{p-1} \text{ such that } \|\mathbf{u}\|_2=\epsilon\right).$$
	Using Taylor expansion of the node conditional log-likelihood at $\boldsymbol{\theta}_s$, we have
	\begin{eqnarray}\label{funcg}
	G(\mathbf{u}) &=& l(\boldsymbol{\theta}_s+\mathbf{u},{X}_s;\mathbf{X}_{V\backslash \{s\}})- l(\boldsymbol{\theta}_s,{X}_s;\mathbf{X}_{V\backslash \{s\}})\\
	&=& \nabla l(\boldsymbol{\theta}_s,{X}_s;\mathbf{X}_{V\backslash \{s\}})) \mathbf{u}^T+\mathbf{u}[\nabla^2(l(\boldsymbol{\theta}_s+v\mathbf{u},{X}_s;\mathbf{X}_{V\backslash \{s\}})]\mathbf{u}^T\nonumber,
	\end{eqnarray}
	for some $v\in [0,1]$. Let 
	\begin{eqnarray*}
		q&=& \Lambda_{\min}(\nabla^2(l(\boldsymbol{\theta}_s+v\mathbf{u},{X}_s;\mathbf{X}_{V\backslash \{s\}})))\\
		&\ge& \min_{v\in[0,1]}\Lambda_{\min}(\nabla^2(l(\boldsymbol{\theta}_s+v\mathbf{u},{X}_s;\mathbf{X}_{V\backslash \{s\}})))\\
		&=& \min_{v\in[0,1]}\Lambda_{\min} \left[\frac{1}{n}\sum_{i=1}^{n} D(\langle\boldsymbol{\theta}_s+v\mathbf{u},\mathbf{X}^{(i)}_{V\backslash \{s\}}\rangle )(\mathbf{X}_{V\backslash \{s\}}^{(i)})^T\mathbf{X}_{V\backslash \{s\}}^{(i)}\right].
	\end{eqnarray*}
	Using Taylor expansion for $D(\langle\boldsymbol{\theta}_s+v\mathbf{u},\mathbf{X}^{(i)}_{V\backslash \{s\}}\rangle )$ at $\langle\boldsymbol{\theta}_s,\mathbf{X}^{(i)}_{V\backslash \{s\}}\rangle$, we have
	\begin{eqnarray*}
		&&\frac{1}{n}\sum_{i=1}^{n} D(\langle\boldsymbol{\theta}_s+v\mathbf{u},\mathbf{X}^{(i)}_{V\backslash \{s\}}\rangle )(\mathbf{X}_{V\backslash \{s\}}^{(i)})^T\mathbf{X}_{V\backslash \{s\}}^{(i)})\\ 
		&&=\quad\frac{1}{n}\sum_{i=1}^{n} D(\langle\boldsymbol{\theta}_s,\mathbf{X}^{(i)}_{V\backslash \{s\}}\rangle )(\mathbf{X}_{V\backslash \{s\}}^{(i)})^T\mathbf{X}_{V\backslash \{s\}}^{(i)}+\\
		&&\qquad
		\frac{1}{n}\sum_{i=1}^{n} D(\langle\boldsymbol{\theta}_s+v'\mathbf{u},\mathbf{X}^{(i)}_{V\backslash \{s\}}\rangle )[v\mathbf{u}(\mathbf{X}_{V\backslash \{s\}}^{(i)})^T][(\mathbf{X}_{V\backslash \{s\}}^{(i)})^T\mathbf{X}_{V\backslash \{s\}}^{(i)}],
	\end{eqnarray*}
	for some $v'\in [0,1]$. Hence, 
	\begin{eqnarray*}
		q&\ge& \Lambda_{\min}\left[\frac{1}{n}\sum_{i=1}^{n} D(\langle\boldsymbol{\theta}_s,\mathbf{X}^{(i)}_{V\backslash \{s\}}\rangle )(\mathbf{X}_{V\backslash \{s\}}^{(i)})^T\mathbf{X}_{V\backslash \{s\}}^{(i)}\right]\\
		&&-\max_{v'\in[0,1]}\Lambda_{\max}\bigg[\frac{1}{n}\sum_{i=1}^{n} D(\langle\boldsymbol{\theta}_s+v'\mathbf{u},\mathbf{X}^{(i)}_{V\backslash \{s\}}\rangle )[\mathbf{u}(\mathbf{X}_{V\backslash \{s\}}^{(i)})^T]\big[(\mathbf{X}_{V\backslash \{s\}}^{(i)})^T\mathbf{X}_{V\backslash \{s\}}^{(i)}\big]\bigg].
	\end{eqnarray*}
	Define a new event
	\begin{eqnarray*}
		\zeta_3 &=&\{ (\langle\boldsymbol{\theta}_s+v'\mathbf{u},\mathbf{X}^{(i)}_{V\backslash \{s\}}\rangle)\le \gamma\log n \}.
	\end{eqnarray*}
	Similarly to the event $\zeta_2$, we have $\mathbb{P}_{\boldsymbol{\theta}}(\zeta_3^c)\le c_2n\kappa(n,\gamma)+c_1n^{-2}$, provided that $\epsilon\le \dfrac{1}{3p\log n}$. As consequence, $|D(\langle\boldsymbol{\theta}_s+v'\mathbf{u},\mathbf{X}^{(i)}_{V\backslash \{s\}}\rangle)|\le n^{\gamma}$, with probability at least $1-c_2n\kappa(n,\gamma)-c_1n^{-2}$. Fixed $\delta=\dfrac{\lambda_{\min}}{8}$  in  Lemma \ref{Fisher}, and conditioned on $\zeta_1,\zeta_3$ we have

	\begin{eqnarray*}
		q&\ge& \lambda_{\min}-\delta-\max_{v'\in[0,1]}\Lambda_{\max}\bigg[\frac{1}{n}\sum_{i=1}^{n} D(\langle\boldsymbol{\theta}_s+v'\mathbf{u},\mathbf{X}^{(i)}_{V\backslash \{s\}}\rangle )[\mathbf{u}(\mathbf{X}_{V\backslash \{s\}}^{(i)})^T]\big[(\mathbf{X}_{V\backslash \{s\}}^{(i)})^T\mathbf{X}_{V\backslash \{s\}}^{(i)}\big]\bigg]\\
		&\ge& \lambda_{\min}-\delta-\max_{v'\in[0,1]}\big|D(\langle\boldsymbol{\theta}_{V\backslash \{s\}}+v'\mathbf{u},\mathbf{X}^{(i)}_{V\backslash \{s\}}\rangle )\big|\big|\mathbf{u}(\mathbf{X}_{V\backslash \{s\}}^{(i)})^T\big|\Lambda_{\max}\bigg[\frac{1}{n}\sum_{i=1}^{n} (\mathbf{X}_{V\backslash \{s\}}^{(i)})^T\mathbf{X}_{V\backslash \{s\}}^{(i)}\bigg]\\
		&\ge& \lambda_{\min}-2\delta- n^{\gamma}3\log n\sqrt{p}\|\mathbf{u}\|_2\lambda_{\max}\\
		&=& \lambda_{\min}-2\delta- 3\sqrt{p}\epsilon\lambda_{\max}n^{\gamma}\log n \\
		&\ge& \dfrac{\lambda_{\min}}{2},\quad \text{provided that } \epsilon< \dfrac{\lambda_{\min}}{12\sqrt{p}\lambda_{\max}n^{\gamma}\log n},
	\end{eqnarray*}
	with probability at least	$1- \exp\left\{-c_4n\right\}-c_2n\kappa(n,\gamma)-c_1n^{-2}$. Let $\delta=\dfrac{\lambda_{\min}}{2}\epsilon$ in Proposition \ref{pro1}, we have
	$$\nabla_t l(\boldsymbol{\theta}_s,{X}_s;\mathbf{X}_{V\backslash \{s\}}))>-\dfrac{\lambda_{\min}}{2}\epsilon,$$
	with probability at least $1-\exp\{ -c_3n\}-c_2n\kappa(n,\gamma)-c_1n^{-2}$, 
	provided that
	$p<\dfrac{1}{2}\exp\left(\dfrac{n^{1-\gamma}\lambda_{\min}^2\epsilon^2}{144\log^2 n}\right).$
	Combining with the inequality of $q$, we have
	\begin{eqnarray}
	G(\mathbf{u})&=& \nabla l(\boldsymbol{\theta}_s,{X}_s;\mathbf{X}_{V\backslash \{s\}})) \mathbf{u}^T+\mathbf{u}[\nabla^2(l(\boldsymbol{\theta}_s+v\mathbf{u},_s,{X}_s;\mathbf{X}_{V\backslash \{s\}}))]\mathbf{u}^T\nonumber\\
	&>& -\dfrac{\lambda_{\min}}{2}\epsilon^2+\dfrac{\lambda_{\min}}{2}\epsilon^2=0
	\end{eqnarray}
	provided that $ \epsilon< \dfrac{\lambda_{\min}}{12\sqrt{p}\lambda_{\max}n^{\gamma}\log n}$, and $p<\dfrac{1}{2}\exp\left(\dfrac{n^{1-\gamma}\lambda_{\min}^2\epsilon^2}{144\log^2 n}\right)$. It means that $\|\hat{\mathbf{u}}\|_2<\epsilon$.
	When $n\rightarrow\infty$ we can choose a non- negative decreasing sequence $\delta_n$ such that $\delta_n<\dfrac{\lambda_{\min}}{12\sqrt{p}\lambda_{\max}n^{\gamma}\log n}$, then
	\begin{eqnarray*}
		\mathbb{P}_{\boldsymbol{\theta}}(\|\hat{\boldsymbol{\theta}}_{V\backslash \{s\}}-\boldsymbol{\theta}_{V\backslash \{s\}}\|_2\le \delta_n)&\ge& 1-\exp\{ -c_4n\}-\exp\{ -c_3n\}-c_2n\kappa(n,\gamma)-c_1n^{-2}\\
		&=&1-\exp\{ -cn\}-c_2n\kappa(n,\gamma)-c_1n^{-2}
	\end{eqnarray*}
	when $n\rightarrow \infty$.
\end{proof}

Results for $\mathbf{K}\subset V$ are derived as following.\\
\begin{md}\label{pro11}
	Assume 1- 5. Then, $ \forall~\delta>0$, $\forall~s\in V$ and $\forall~\mathbf{K}\subseteq V\backslash\{s\}$,
	$$\mathbb{P}_{\boldsymbol{\theta}}(\|\nabla l(\boldsymbol{\theta}_{s|\mathbf{K}},{X}_s;{\mathbf{X}_{\mathbf{K}}})\|_{\infty}\ge \delta)\le \exp\{-c_3n\}+c_2\kappa(n,\gamma)+c_1n^{-2},$$
	$~\forall~ \boldsymbol{\theta_{s|\mathbf{K}}}\in \boldsymbol{\Theta},$ when  $n\rightarrow\infty$.
\end{md}

\begin{proof}
	The proof of Proposition \ref{pro11} follows the lines of Proposition \ref{pro1}. We note that the set of explanatory variables $\mathbf{X}_{\mathbf{K}}$ in the generalized linear model $X_s$ given $\mathbf{X}_{\mathbf{K}}$ does not include variables $X_t$, with $t\in V\backslash\{\mathbf{K}\cup\{s\}\}$. Suppose we zero-pad the true parameter $\boldsymbol{\theta}_{s|\mathbf{K}}\in\mathbb{R}^{|\mathbf{K}|}$ to include zero weights over $V\backslash\{\mathbf{K}\cup\{s\}\}$, then the resulting parameter would lie in $\mathbb{R}^{|p-1|}$. 
\end{proof}

\begin{cy}\label{cy1}
	When the maximum number of neighbours that one node is allowed to have is fixed, a control is operated on the cardinality of the set $\mathbf{K}$,  $|\mathbf{K}|\le m+1$. In this case, parameters $\theta_{st|\mathbf{K}}$ are estimated from models that are restricted on subsets of variables with their cardinalities less than or equal to $m+1$. Therefore, $p$ in Proposition \ref{pro1} is replaced by $m+1$. In detail, for all $s\in V$ and any $\delta>0$
	$$\mathbb{P}_{\boldsymbol{\theta}}(\|\nabla l(\boldsymbol{\theta}_{s|\mathbf{K}},{X}_s;{\mathbf{X}_{\mathbf{K}}})\|_{\infty}\ge \delta)\le \exp\{-c_3n\}+c_2\kappa(n,\gamma)+c_1n^{-2},$$
	$~\forall~ \boldsymbol{\theta_{s|\mathbf{K}}}\in \boldsymbol{\Theta},$ provided that $m<\dfrac{1}{2}\exp\left\{\dfrac{n^{1-\gamma}\delta^2}{36\log^2n}\right\}$.
\end{cy}


\noindent
\begin{dl}\label{dl2}
	Assume 1- 5. Then, $\forall~s\in V$ and $\forall~\mathbf{K}\subseteq V\backslash\{s\}$, there exists a non-negative decreasing sequence $\delta_n\rightarrow 0$, such that
	$$\mathbb{P}_{\boldsymbol{\theta}}(\|\hat{\boldsymbol{\theta}}_{s|\mathbf{K}}-\boldsymbol{\theta}_{s|\mathbf{K}}\|_2\le \delta_n)\ge 1-\exp\{ -cn\}-c_2n\kappa(n,\gamma)-c_1n^{-2}, ~\forall~ \boldsymbol{\theta}\in \boldsymbol{\Theta},$$
	when $n\rightarrow \infty$.		
	
\end{dl}
\begin{proof}
	Theorem \ref{dl2} is proved by following the lines of Theorem \ref{dl1}. The difference lies in the set of explanatory variables $\mathbf{X}_{\mathbf{K}}$. As consequence, it is necessary to control the spectral norm of the submatrices,   $Q_{s|\mathbf{K}}(\boldsymbol{\theta}_{s|\mathbf{K}})=\frac{1}{n}\sum_{i=1}^{n} D\left(\langle\boldsymbol{\theta}_{s|\mathbf{K}},{\mathbf{X}^{(i)}_{\mathbf{K}}}\rangle \right)\left({\mathbf{X}^{(i)}_{\mathbf{K}}}\right)^T{\mathbf{X}^{(i)}_{\mathbf{K}}}$ and
	$\left({\mathbf{X}^{(i)}_{\mathbf{K}}}\right)^T{\mathbf{X}^{(i)}_{\mathbf{K}}}$. Here, we employ well-known results on eigenvalue inequalities for a matrix and its submatrix \citep[see, for example, ][]{johnson1981eigenvalue}, that is,
	
	$$\Lambda_{\min}\left[Q_{s|\mathbf{K}}(\boldsymbol{\theta}_{s|\mathbf{K}})\right]\ge \Lambda_{\min}(Q_s(\boldsymbol{\theta}_s))\ge \lambda_{\min}-\delta,$$
	and,
	\begin{eqnarray*}
		\Lambda_{\max}\bigg[\frac{1}{n}\sum_{i=1}^{n}  \left({\mathbf{X}^{(i)}_{\mathbf{K}}}\right)^T{\mathbf{X}^{(i)}_{\mathbf{K}}}\bigg]\le	\Lambda_{\max}\bigg[\frac{1}{n}\sum_{i=1}^{n}  \left({\mathbf{X}^{(i)}_{V\backslash\{s\}}}\right)^T{\mathbf{X}^{(i)}_{V\backslash\{s\}}}\bigg]\le \lambda_{\max}+\delta.
	\end{eqnarray*}
	Then, by performing the same analysis as in the proof of Theorem \ref{dl1} and using the result of Proposition \ref{pro11}, we get the result.
\end{proof}

\begin{cy}\label{cy2}
	In the proof of Theorem \ref{dl2}, we only require  the uniform convergence of a submatrix (restricted on $K$), $Q_{s|\mathbf{K}}(\boldsymbol{\theta}_{s|\mathbf{K}})$, of the sample Fisher information matrix $Q_{s}(\boldsymbol{\theta}_{s})$. Therefore, when the maximum neighbourhood size  is known,   $|\mathbf{K}|\le m+1$, we have  convergence provided that $n>O_p\left(\log m\right)$. In detail, let $I_{s|\mathbf{K}}(\boldsymbol{\theta}_{s|\mathbf{K}})$ be the submatrix of $I_s(\boldsymbol{\theta}_s)$ indexed in $\mathbf{K}$, Equation \eqref{Fisherdistance} becomes
	\begin{eqnarray*}
		\mathbb{P}_{\boldsymbol{\theta}}\left(|||I_{s|\mathbf{K}}(\boldsymbol{\theta}_{s|\mathbf{K}})-Q_{s|\mathbf{K}}(\boldsymbol{\theta}_{s|\mathbf{K}})|||_2\ge \delta\right)&\le&\mathbb{P}_{\boldsymbol{\theta}}\left(\bigg(\sum_{j,k\in \mathbf{K}\backslash\{s\}}(Z_{jk}^n)^2\bigg)^{1/2}\ge \delta \right)\nonumber\\
		&\le& 2m^2\exp\left\{-\frac{\delta^2n}{36p^2n^{2\gamma}\log^4 n}\right\}+c_2\kappa(n,\gamma)+c_1n^{-2}\nonumber\\
		&\le& \exp\{-c_4n\}+c_2\kappa(n,\gamma)+c_1n^{-2},
	\end{eqnarray*}
	provided that $n>O_p\left(\log m\right)$.
\end{cy}
\newpage

\section{ Additional results in empirical study}
\makeatletter

\renewcommand*{\@Alph}[1]{%
	\ifcase#1\or D.1\or D.2\or D.3\or
	D\or E\or F\or G\or H\or I\or J\or
	K\or L\or M\or N\or O\or P\or R\or S\or\v S\or
	T\or U\or V\or W\or X\or
	Y\or Z\or\v Z\else\@ctrerr\fi
}
\makeatother
\renewcommand\thetable{\Alph{table}}
\renewcommand\thefigure{\Alph{figure}}
\subsection{Tables}
Table \ref{table10-chap2}, and Table \ref{table100-chap2} report $ TP, FP, FN, P, R$ and $F_1$ for each of  methods considered in Section 5 of the main paper and for the three types of networks. Two different graph dimensions, i.e., $p=10, 100$, and three graph structures (see Figure 1 and Figure 2 in the main paper) are considered.

\begin{center}
	\fontsize{6.2}{7}\selectfont
		
		\begin{longtable}{l| l | l r r r r r r r}
			\caption{Simulation results from 50 replicates of the DAGs shown in Figure 1 in of the main paper  for $p= 10$ variables with Poisson node conditional distribution. Monte Carlo means (standard deviations) are shown for $TP, FP, FN, P, R$, and $F_1$. The levels of significance of tests $\alpha=2(1-\Phi(n^{0.15}))$. }  \\
			\label{table10-chap2}\\
			\toprule
			Graph&$n$	& Algorithm & $TP$ & $FP$ & $FN$ & $P$ & $R$ &$F_1$&time \\
			\midrule
			\endfirsthead
			\multicolumn{9}{c}%
			{{\bfseries \tablename\ \thetable{} -- continued from previous page}} \\
			\toprule
			Graph&$n$	& Algorithm & $TP$ & $FP$ & $FN$ & $P$ & $R$ &$F_1$ &time \\
			\midrule	
			
			\endhead
			&100 &	PKBIC & 6.300(1.129) & 1.140(1.069) & 2.700(1.129) & 0.857(0.123) & 0.700(0.125) & 0.764(0.106) &  0.147 \\ 
  & & Or-PPGM & 5.520(1.147) & 1.060(1.219) & 3.480(1.147) & 0.859(0.151) & 0.613(0.127) & 0.706(0.114) &  0.880 \\ 
 & & Or-LPGM & 6.260(1.084) & 1.600(1.107) & 2.740(1.084) & 0.807(0.103) & 0.696(0.120) & 0.740(0.093) &  0.063 \\ 
& &  PDN & 7.680(0.741) & 27.740(4.716) & 1.320(0.741) & 0.221(0.037) & 0.853(0.082) & 0.350(0.051) &  0.103 \\ 
& &  ODS & 2.500(1.052) & 1.292(0.967) & 6.500(1.052) & 0.670(0.209) & 0.278(0.117) & 0.384(0.145) &  3.492  \\ 
& &  PC & 3.440(1.128) & 2.780(1.389) & 5.560(1.128) & 0.564(0.169) & 0.382(0.125) & 0.450(0.135) &  0.007 \\ 
 & & MMHC & 2.061(0.944) & 3.551(1.138) & 6.939(0.944) & 0.367(0.144) & 0.229(0.105) & 0.279(0.116) &  0.005 \\  
    & &GBiDAG & 3.380(1.469) & 4.900(1.644) & 5.620(1.469) & 0.409(0.161) & 0.376(0.163) & 0.388(0.157) & 2.728 \\

 &&& & & & &  \\
  Scale-free &200 &  PKBIC & 7.840(0.912) & 0.660(1.062) & 1.160(0.912) & 0.932(0.100) & 0.871(0.101) & 0.896(0.083) &  0.160 \\ 
 & & Or-PPGM & 7.300(1.313) & 0.500(0.909) & 1.700(1.313) & 0.947(0.088) & 0.811(0.146) & 0.864(0.097) &  0.501 \\ 
 & & Or-LPGM & 7.580(0.906) & 0.660(0.982) & 1.420(0.906) & 0.930(0.096) & 0.842(0.101) & 0.879(0.077) &  0.064 \\ 
 & & PDN & 8.380(0.697) & 21.160(4.157) & 0.620(0.697) & 0.289(0.051) & 0.931(0.077) & 0.440(0.063) &  0.106 \\ 
 & & ODS & 3.740(0.922) & 1.880(1.206) & 5.260(0.922) & 0.683(0.181) & 0.416(0.102) & 0.511(0.118) &  3.841 \\ 
 & & PC & 5.020(0.892) & 2.400(1.010) & 3.980(0.892) & 0.681(0.103) & 0.558(0.099) & 0.610(0.093) &  0.007 \\ 
 & &  MMHC& 3.160(1.113) & 4.460(1.388) & 5.840(1.113) & 0.417(0.137) & 0.351(0.124) & 0.379(0.126) &  0.007 \\
    & &GBiDAG & 4.396(1.540) & 4.354(1.768) & 4.604(1.540) & 0.507(0.178) & 0.488(0.171) & 0.495(0.170) & 2.629 \\ 
 &&&& & & & & & \\
 &500 & PKBIC & 8.780(0.418) & 0.540(0.706) & 0.220(0.418) & 0.947(0.067) & 0.976(0.046) & 0.959(0.044) &  0.204 \\ 
 & & Or-PPGM & 8.380(0.805) & 0.360(0.598) & 0.620(0.805) & 0.962(0.064) & 0.931(0.089) & 0.944(0.062) &  0.549 \\ 
 & & Or-LPGM & 8.480(0.544) & 0.380(0.567) & 0.520(0.544) & 0.961(0.058) & 0.942(0.060) & 0.950(0.043) &  0.033 \\ 
 & & PDN & 8.540(0.579) & 14.280(1.938) & 0.460(0.579) & 0.376(0.032) & 0.949(0.064) & 0.538(0.036) &  0.116 \\ 
 & & ODS & 5.100(1.035) & 2.360(1.258) & 3.900(1.035) & 0.692(0.142) & 0.567(0.115) & 0.620(0.118) &  4.717 \\ 
 & & PC & 5.820(0.850) & 2.500(0.863) & 3.180(0.850) & 0.701(0.096) & 0.647(0.094) & 0.672(0.092) &  0.008 \\ 
 & & MMHC & 4.040(1.340) & 4.820(1.224) & 4.960(1.340) & 0.453(0.137) & 0.449(0.149) & 0.450(0.142) &  0.008 \\ 
    & &GBiDAG & 5.560(1.431) & 4.080(1.805) & 3.440(1.431) & 0.584(0.158) & 0.618(0.159) & 0.599(0.155) & 2.139 \\ 
 &&&& & & & & & \\
 &1000 & PKBIC & 8.940(0.240) & 0.360(0.598) & 0.060(0.240) & 0.965(0.057) & 0.993(0.027) & 0.978(0.035) &  0.271 \\ 
 & & Or-PPGM & 8.820(0.482) & 0.120(0.328) & 0.180(0.482) & 0.988(0.033) & 0.980(0.054) & 0.983(0.032) &  0.956 \\ 
 & & Or-LPGM & 8.840(0.370) & 0.260(0.527) & 0.160(0.370) & 0.974(0.052) & 0.982(0.041) & 0.977(0.038) &  0.065 \\ 
 & & PDN & 8.660(0.479) & 10.480(1.359) & 0.340(0.479) & 0.454(0.034) & 0.962(0.053) & 0.616(0.036) &  0.137 \\ 
 & & ODS & 5.820(1.024) & 2.840(1.283) & 3.180(1.024) & 0.679(0.126) & 0.647(0.114) & 0.660(0.112) &  5.690 \\ 
 & & PC & 6.420(0.906) & 2.360(0.776) & 2.580(0.906) & 0.730(0.088) & 0.713(0.101) & 0.721(0.093) &  0.008 \\ 
 & & MMHC & 4.960(1.087) & 4.420(0.971) & 4.040(1.087) & 0.527(0.104) & 0.551(0.121) & 0.538(0.110) &  0.010 \\ 
  & &GBiDAG & 5.740(1.367) & 3.620(1.469) & 3.260(1.367) & 0.615(0.149) & 0.638(0.152) & 0.626(0.148) & 1.577 \\
 \hline
 &&&& & & & & & \\
 &100 &PKBIC & 5.480(1.035) & 1.500(1.111) & 2.520(1.035) & 0.800(0.134) & 0.685(0.129) & 0.730(0.104) &  0.149 \\ 
 & &  Or-PPGM & 4.600(1.678) & 1.580(1.230) & 3.400(1.678) & 0.755(0.178) & 0.575(0.210) & 0.633(0.183) &  0.914 \\ 
 & &  Or-LPGM & 5.640(1.025) & 1.920(1.469) & 2.360(1.025) & 0.768(0.144) & 0.705(0.128) & 0.725(0.104) &  0.064 \\ 
 & &  PDN & 5.180(1.101) & 33.300(4.022) & 2.820(1.101) & 0.135(0.029) & 0.647(0.138) & 0.224(0.047) &  0.104 \\ 
  & & ODS & 1.449(0.614) & 0.347(0.597) & 6.551(0.614) & 0.881(0.201) & 0.181(0.077) & 0.289(0.097) &  3.551  \\ 
  & & PC & 2.920(1.104) & 2.200(1.294) & 5.080(1.104) & 0.579(0.201) & 0.365(0.138) & 0.443(0.158) &  0.007 \\ 
  & &  MMHC& 1.122(0.400) & 3.732(1.225) & 6.878(0.400) & 0.243(0.090) & 0.140(0.050) & 0.175(0.058) &  0.005\\ 
  & &GBiDAG & 3.640(1.467) & 3.540(1.798) & 4.360(1.467) & 0.516(0.211) & 0.455(0.183) & 0.479(0.188) & 2.982 \\
 
   &&&& & & & & & \\
	Hub &200 & PKBIC & 6.300(0.814) & 0.680(0.819) & 1.700(0.814) & 0.911(0.103) & 0.787(0.102) & 0.840(0.083) &  0.166 \\ 
  & & Or-PPGM & 5.580(1.444) & 0.680(0.741) & 2.420(1.444) & 0.901(0.108) & 0.698(0.181) & 0.771(0.142) &  0.510 \\ 
  & & Or-LPGM & 6.280(0.858) & 0.980(0.937) & 1.720(0.858) & 0.877(0.111) & 0.785(0.107) & 0.822(0.082) &  0.065 \\ 
  & & PDN & 5.660(1.171) & 31.240(3.679) & 2.340(1.171) & 0.154(0.030) & 0.708(0.146) & 0.252(0.049) &  0.107 \\ 
  & & ODS & 2.560(0.951) & 0.700(0.789) & 5.440(0.951) & 0.810(0.193) & 0.320(0.119) & 0.446(0.136) &  4.002 \\ 
  & & PC & 4.640(1.102) & 1.400(0.808) & 3.360(1.102) & 0.766(0.138) & 0.580(0.138) & 0.656(0.133) &  0.008 \\ 
  & &MMHC & 1.122(0.726) & 5.082(1.272) & 6.878(0.726) & 0.185(0.105) & 0.140(0.091) & 0.158(0.096) &  0.007 \\ 
  & &GBiDAG & 5.082(1.320) & 2.490(1.570) & 2.918(1.320) & 0.683(0.181) & 0.635(0.165) & 0.653(0.161) & 2.404 \\ 
   &&&& & & & & & \\
  &500 & PKBIC & 7.460(0.503) & 0.560(0.705) & 0.540(0.503) & 0.935(0.077) & 0.932(0.063) & 0.932(0.057) &  0.217 \\ 
  & & Or-PPGM & 7.240(0.744) & 0.480(0.646) & 0.760(0.744) & 0.942(0.076) & 0.905(0.093) & 0.920(0.072) &  0.549 \\ 
  & & Or-LPGM & 7.380(0.635) & 0.360(0.663) & 0.620(0.635) & 0.959(0.072) & 0.922(0.079) & 0.937(0.059) &  0.037 \\ 
  & & PDN & 5.560(1.198) & 22.960(4.755) & 2.440(1.198) & 0.200(0.045) & 0.695(0.150) & 0.305(0.053) &  0.118 \\ 
  & & ODS & 4.680(1.096) & 1.280(1.089) & 3.320(1.096) & 0.795(0.164) & 0.585(0.137) & 0.668(0.136) &  5.239 \\ 
  & & PC & 6.300(1.129) & 1.000(1.294) & 1.700(1.129) & 0.869(0.162) & 0.787(0.141) & 0.825(0.147) &  0.009 \\ 
  & & MMHC & 1.188(0.641) & 6.354(1.041) & 6.812(0.641) & 0.157(0.072) & 0.148(0.080) & 0.152(0.075) &  0.008 \\ 
    & &GBiDAG & 6.520(1.015) & 1.520(1.216) & 1.480(1.015) & 0.818(0.131) & 0.815(0.127) & 0.814(0.119) & 1.779 \\
   &&&& & & & & & \\
  &1000 & PKBIC & 7.820(0.388) & 0.340(0.519) & 0.180(0.388) & 0.962(0.058) & 0.978(0.049) & 0.968(0.041) &  0.297 \\ 
  & & Or-PPGM & 7.700(0.463) & 0.140(0.351) & 0.300(0.463) & 0.984(0.040) & 0.963(0.058) & 0.972(0.037) &  0.999 \\ 
  & & Or-LPGM & 7.780(0.418) & 0.160(0.370) & 0.220(0.418) & 0.982(0.043) & 0.973(0.052) & 0.976(0.036) &  0.073 \\ 
  & & PDN & 5.800(1.370) & 18.880(3.910) & 2.200(1.370) & 0.238(0.044) & 0.725(0.171) & 0.353(0.065) &  0.138 \\ 
  & & ODS.3 & 5.660(0.872) & 1.940(1.346) & 2.340(0.872) & 0.762(0.140) & 0.708(0.109) & 0.728(0.104) &  7.062 \\ 
  & & PC & 7.120(1.003) & 0.540(1.110) & 0.880(1.003) & 0.933(0.131) & 0.890(0.125) & 0.910(0.125) &  0.009 \\ 
  & &MMHC & 1.468(1.080) & 6.660(1.069) & 6.532(1.080) & 0.176(0.113) & 0.184(0.135) & 0.179(0.123) &  0.009\\
  & &GBiDAG & 7.060(0.793) & 1.340(0.895) & 0.940(0.793) & 0.844(0.103) & 0.882(0.099) & 0.861(0.094) & 1.548 \\
  \hline
   &&&& & & & & & \\
   &100 &PKBIC & 3.620(1.210) & 1.540(1.014) & 4.380(1.210) & 0.717(0.158) & 0.452(0.151) & 0.539(0.146) &  0.134 \\ 
  & & Or-PPGM & 3.600(1.355) & 1.500(0.974) & 4.400(1.355) & 0.709(0.170) & 0.450(0.169) & 0.536(0.165) &  0.864 \\ 
  & & Or-LPGM & 3.700(1.344) & 1.700(1.182) & 4.300(1.344) & 0.699(0.183) & 0.462(0.168) & 0.540(0.165) &  0.063 \\ 
  & & PDN & 5.700(1.015) & 36.800(4.020) & 2.300(1.015) & 0.135(0.024) & 0.713(0.127) & 0.226(0.040) &  0.104 \\ 
  & & ODS & 1.250(0.568) & 0.438(0.619) & 6.750(0.568) & 0.816(0.242) & 0.156(0.071) & 0.253(0.089) &  3.458 \\ 
  & & PC & 1.590(0.751) & 2.897(0.995) & 6.410(0.751) & 0.355(0.129) & 0.199(0.094) & 0.251(0.106) &  0.006  \\ 
  & & MMHC &  1.894(0.699) & 2.043(1.122) & 6.106(0.699) & 0.516(0.208) & 0.237(0.087) & 0.315(0.105) &  0.005  \\ 
& &GBiDAG & 1.947(0.899) & 3.763(1.807) & 6.053(0.899) & 0.360(0.170) & 0.243(0.112) & 0.284(0.124) & 2.585 \\

   &&&& & & & & & \\
Erdos-Renyi  &200 & PKBIC & 4.740(1.226) & 0.740(0.944) & 3.260(1.226) & 0.879(0.141) & 0.593(0.153) & 0.696(0.132) &  0.140 \\ 
  & & Or-PPGM & 4.680(1.253) & 0.860(0.990) & 3.320(1.253) & 0.864(0.150) & 0.585(0.157) & 0.683(0.133) &  0.457 \\ 
  & & Or-LPGM & 4.660(1.206) & 0.960(1.195) & 3.340(1.206) & 0.853(0.166) & 0.583(0.151) & 0.679(0.134) &  0.065 \\ 
  & & PDN & 6.420(1.012) & 34.060(4.688) & 1.580(1.012) & 0.160(0.029) & 0.802(0.126) & 0.267(0.045) &  0.108 \\ 
  & & ODS & 1.568(0.695) & 1.114(0.993) & 6.432(0.695) & 0.639(0.280) & 0.196(0.087) & 0.293(0.120) &  3.919 \\ 
 & & PC & 2.356(1.282) & 2.222(0.974) & 5.644(1.282) & 0.499(0.193) & 0.294(0.160) & 0.364(0.174) &  0.006\\ 
  & & MMHC & 2.680(1.115) & 2.060(1.185) & 5.320(1.115) & 0.571(0.191) & 0.335(0.139) & 0.416(0.155) &  0.006 \\ 
 & &  GBiDAG & 2.894(1.088) & 3.277(1.514) & 5.106(1.088) & 0.482(0.177) & 0.362(0.136) & 0.408(0.145) & 2.125 \\
   &&&& & & & & & \\
  &500 & PKBIC & 6.540(0.838) & 0.460(0.579) & 1.460(0.838) & 0.938(0.077) & 0.818(0.105) & 0.870(0.077) &  0.163 \\ 
  & & Or-PPGM & 6.400(0.881) & 0.460(0.613) & 1.600(0.881) & 0.938(0.081) & 0.800(0.110) & 0.859(0.080) &  0.543 \\ 
  & & Or-LPGM & 6.300(0.789) & 0.420(0.609) & 1.700(0.789) & 0.944(0.079) & 0.787(0.099) & 0.854(0.069) &  0.032 \\ 
  & & PDN & 7.100(0.735) & 29.300(3.066) & 0.900(0.735) & 0.196(0.025) & 0.887(0.092) & 0.321(0.039) &  0.119 \\ 
 & &  ODS & 3.260(1.275) & 1.800(1.262) & 4.740(1.275) & 0.655(0.209) & 0.408(0.159) & 0.493(0.168) &  4.924 \\ 
  & & PC & 4.204(1.399) & 2.020(0.989) & 3.796(1.399) & 0.666(0.174) & 0.526(0.175) & 0.585(0.173) &  0.007  \\ 
  & & MMHC & 4.700(1.055) & 2.000(1.178) & 3.300(1.055) & 0.710(0.140) & 0.588(0.132) & 0.638(0.126) &  0.007 \\ 
 & & GBiDAG & 3.917(1.381) & 3.208(1.856) & 4.083(1.381) & 0.564(0.212) & 0.490(0.173) & 0.521(0.187) & 1.811 \\ 
   &&&& & & & & & \\
  &1000 & PKBIC & 7.520(0.580) & 0.360(0.598) & 0.480(0.580) & 0.960(0.065) & 0.940(0.072) & 0.947(0.046) &  0.216 \\ 
  & & Or-PPGM & 7.200(0.782) & 0.280(0.454) & 0.800(0.782) & 0.965(0.057) & 0.900(0.098) & 0.928(0.065) &  0.924 \\ 
  & & Or-LPGM & 7.020(0.820) & 0.120(0.385) & 0.980(0.820) & 0.985(0.048) & 0.877(0.103) & 0.925(0.064) &  0.057 \\ 
  & & PDN & 7.460(0.676) & 24.420(2.417) & 0.540(0.676) & 0.235(0.024) & 0.932(0.085) & 0.375(0.035) &  0.139 \\ 
  & & ODS & 4.160(1.315) & 2.800(1.414) & 3.840(1.315) & 0.602(0.186) & 0.520(0.164) & 0.556(0.169) &  6.144 \\ 
  & & PC & 5.160(1.095) & 1.640(0.631) & 2.840(1.095) & 0.753(0.111) & 0.645(0.137) & 0.693(0.124) &  0.008 \\ 
  & & MMHC & 5.420(0.992) & 2.040(1.029) & 2.580(0.992) & 0.729(0.129) & 0.677(0.124) & 0.701(0.121) &  0.009 \\ 
 & & GBiDAG & 4.571(1.720) & 3.367(1.856) & 3.429(1.720) & 0.581(0.224) & 0.571(0.215) & 0.574(0.216) & 1.528 \\ 
			\bottomrule
		\end{longtable}
	
\end{center}

\begin{center}
	\begin{scriptsize}
	\fontsize{6}{8}\selectfont	
		\begin{longtable}{l| l | l r r r r r r r}
			\caption{Simulation results from 50 replicates of the DAGs shown in Figure 2 in of the main paper  for $p= 100$ variables with Poisson node conditional distribution. Monte Carlo means (standard deviations) are shown for $TP, FP, FN, P, R$, and $F_1$. The levels of significance of tests $\alpha=2(1-\Phi(n^{0.2}))$ for $n=500,1000, 2000$, and $\alpha=2(1-\Phi(n^{0.225}))$ for $n=200$. }  \\
			\label{table100-chap2}\\
			\toprule
			Graph&$n$	& Algorithm & $TP$ & $FP$ & $FN$ & $P$ & $R$ &$F_1$&time \\
			\midrule
			\endfirsthead
			\multicolumn{9}{c}%
			{{\bfseries \tablename\ \thetable{} -- continued from previous page}} \\
			\toprule
			Graph&$n$	& Algorithm & $TP$ & $FP$ & $FN$ & $P$ & $R$ &$F_1$& time \\
			\midrule	
			
			\endhead
			&200 &PKBIC & 72.800(3.597) & 74.360(6.608) & 26.200(3.597) & 0.495(0.029) & 0.735(0.036) & 0.592(0.030) &   2.966 \\ 
  &&Or-PPGM & 53.080(2.947) & 5.880(2.336) & 45.920(2.947) & 0.901(0.036) & 0.536(0.030) & 0.672(0.027) &   3.062 \\ 
  &&Or-LPGM & 42.740(3.096) & 7.040(9.689) & 56.260(3.096) & 0.875(0.085) & 0.432(0.031) & 0.575(0.035) &   0.136 \\ 
  &&PDN & 49.220(15.953) & 277.140(163.948) & 49.780(15.953) & 0.241(0.198) & 0.497(0.161) & 0.244(0.079) &   3.876 \\ 
  &&ODS & 33.800(5.051) & 106.700(31.538) & 65.200(5.051) & 0.250(0.046) & 0.341(0.051) & 0.284(0.033) & 108.155 \\ 
  &&PC & 24.280(2.458) & 20.980(3.191) & 74.720(2.458) & 0.538(0.054) & 0.245(0.025) & 0.337(0.032) &   0.097 \\ 
  &&MMHC & 35.400(4.041) & 95.500(6.935) & 63.600(4.041) & 0.271(0.032) & 0.358(0.041) & 0.308(0.035) &   0.367 \\ 
  &&GBiDAG & 47.760(4.693) & 166.440(13.849) & 51.240(4.693) & 0.224(0.026) & 0.482(0.047) & 0.306(0.033) &  155.839 \\ 
  &&&& & & & & & \\
  Scale-free&500&PKBIC & 86.980(1.890) & 45.420(6.970) & 12.020(1.890) & 0.659(0.037) & 0.879(0.019) & 0.752(0.027) &   4.237 \\ 
  &&Or-PPGM & 74.900(3.604) & 3.540(1.705) & 24.100(3.604) & 0.955(0.021) & 0.757(0.036) & 0.844(0.025) &   3.944 \\ 
  &&Or-LPGM & 74.560(2.822) & 2.200(1.604) & 24.440(2.822) & 0.972(0.019) & 0.753(0.029) & 0.848(0.019) &   0.236 \\ 
  &&PDN & 48.520(23.784) & 111.880(87.858) & 50.480(23.784) & 0.417(0.202) & 0.490(0.240) & 0.351(0.134) &   4.831 \\ 
  &&ODS & 45.740(5.938) & 94.840(32.950) & 53.260(5.938) & 0.340(0.062) & 0.462(0.060) & 0.385(0.039) & 195.076 \\ 
  &&PC & 39.940(3.158) & 28.800(3.090) & 59.060(3.158) & 0.581(0.040) & 0.403(0.032) & 0.476(0.034) &   0.123 \\ 
  &&MMHC & 55.220(5.152) & 86.880(7.634) & 43.780(5.152) & 0.389(0.037) & 0.558(0.052) & 0.458(0.042) &   0.530 \\ 
  &&GBiDAG & 67.380(4.989) & 99.400(9.802) & 31.620(4.989) & 0.405(0.035) & 0.681(0.050) & 0.507(0.039) &   85.237 \\
  &&&& & & & & & \\
  &1000&PKBIC & 92.265(0.995) & 32.755(5.329) & 6.735(0.995) & 0.739(0.032) & 0.932(0.010) & 0.824(0.020) &   6.558 \\ 
  &&Or-PPGM & 83.286(3.234) & 1.184(1.034) & 15.714(3.234) & 0.986(0.012) & 0.841(0.033) & 0.908(0.020) &   5.066 \\ 
  &&Or-LPGM & 85.600(2.268) & 0.240(0.476) & 13.400(2.268) & 0.997(0.005) & 0.865(0.023) & 0.926(0.013) &   0.398 \\ 
  &&PDN & 45.980(26.072) & 63.160(52.492) & 53.020(26.072) & 0.538(0.197) & 0.464(0.263) & 0.399(0.162) &   6.473 \\ 
  &&ODS & 51.580(12.993) & 84.300(31.210) & 47.420(12.993) & 0.404(0.090) & 0.521(0.131) & 0.431(0.083) & 307.384 \\ 
  &&PC & 46.880(2.862) & 32.740(2.841) & 52.120(2.862) & 0.589(0.031) & 0.474(0.029) & 0.525(0.028) &   0.152 \\ 
  &&MMHC & 66.760(4.770) & 73.200(7.157) & 32.240(4.770) & 0.478(0.040) & 0.674(0.048) & 0.559(0.043) &   0.825 \\ 
  &&GBiDAG & 77.080(3.757) & 72.540(8.370) & 21.920(3.757) & 0.516(0.036) & 0.779(0.038) & 0.621(0.036) &   76.097 \\
  &&&& & & & & & \\
  &2000&PKBIC & 93.840(0.370) & 22.080(4.575) & 5.160(0.370) & 0.811(0.033) & 0.948(0.004) & 0.874(0.019) &  11.046 \\ 
  &&Or-PPGM & 89.120(2.086) & 1.500(0.735) & 9.880(2.086) & 0.984(0.008) & 0.900(0.021) & 0.940(0.012) &   9.026 \\ 
  &&Or-LPGM & 93.760(1.697) & 0.020(0.141) & 5.240(1.697) & 1.000(0.002) & 0.947(0.017) & 0.973(0.009) &   0.754 \\ 
  &&PDN & 34.540(23.652) & 33.940(28.183) & 64.460(23.652) & 0.594(0.160) & 0.349(0.239) & 0.360(0.189) &  10.075 \\ 
  &&ODS & 61.840(13.864) & 95.280(32.128) & 37.160(13.864) & 0.403(0.074) & 0.625(0.140) & 0.477(0.088) & 516.160 \\ 
  &&PC & 54.200(2.563) & 35.640(2.855) & 44.800(2.563) & 0.603(0.029) & 0.547(0.026) & 0.574(0.027) &   0.195 \\ 
  &&MMHC & 73.900(4.082) & 60.580(7.535) & 25.100(4.082) & 0.551(0.042) & 0.746(0.041) & 0.634(0.041) &   1.396 \\ 
  &&GBiDAG & 81.760(2.945) & 55.420(6.459) & 17.240(2.945) & 0.597(0.032) & 0.826(0.030) & 0.693(0.029) &   60.588 \\ 
\hline
&&&& & & & & & \\
&200&PKBIC & 44.540(3.759) & 88.800(7.635) & 50.460(3.759) & 0.335(0.028) & 0.469(0.040) & 0.390(0.030) &    2.848 \\ 
 && Or-PPGM & 19.000(3.608) & 5.180(2.396) & 76.000(3.608) & 0.787(0.096) & 0.200(0.038) & 0.318(0.052) &    2.658 \\ 
 && Or-LPGM & 11.080(2.448) & 5.920(2.687) & 83.920(2.448) & 0.659(0.116) & 0.117(0.026) & 0.197(0.040) &    0.146 \\ 
 && PDN & 40.540(4.807) & 723.700(69.012) & 54.460(4.807) & 0.054(0.009) & 0.427(0.051) & 0.095(0.013) &    3.896 \\ 
 && ODS & 8.540(3.981) & 60.700(30.999) & 86.460(3.981) & 0.138(0.073) & 0.090(0.042) & 0.104(0.046) &   86.450 \\ 
 && PC & 2.000(1.155) & 13.326(2.504) & 93.000(1.155) & 0.131(0.068) & 0.021(0.012) & 0.036(0.020) &    0.091 \\ 
 && MMHC & 18.720(3.226) & 99.040(7.546) & 76.280(3.226) & 0.159(0.029) & 0.197(0.034) & 0.176(0.031) &    0.389 \\ 
   &&GBiDAG&27.460(3.991) & 168.860(9.928) & 67.540(3.991) & 0.140(0.019) & 0.289(0.042) & 0.188(0.025) &   164.755\\
  &&&& & & & & & \\
 Hub&500& PKBIC & 66.920(2.884) & 55.360(5.713) & 28.080(2.884) & 0.548(0.025) & 0.704(0.030) & 0.616(0.021) &    3.967 \\ 
 && Or-PPGM & 49.120(4.094) & 3.040(1.340) & 45.880(4.094) & 0.942(0.024) & 0.517(0.043) & 0.667(0.037) &    3.103 \\ 
 && Or-LPGM & 37.500(4.735) & 2.700(1.632) & 57.500(4.735) & 0.934(0.039) & 0.395(0.050) & 0.553(0.051) &    0.238 \\ 
 && PDN & 60.980(3.396) & 653.320(42.419) & 34.020(3.396) & 0.086(0.005) & 0.642(0.036) & 0.151(0.008) &    4.852 \\ 
 && ODS & 18.340(8.178) & 72.520(28.367) & 76.660(8.178) & 0.211(0.101) & 0.193(0.086) & 0.194(0.079) &  145.684 \\ 
 && PC & 5.000(1.539) & 24.060(2.189) & 90.000(1.539) & 0.171(0.049) & 0.053(0.016) & 0.080(0.024) &    0.350 \\ 
 && MMHC & 37.260(4.580) & 84.720(10.236) & 57.740(4.580) & 0.307(0.046) & 0.392(0.048) & 0.344(0.046) &    9.120 \\ 
  &&GBiDAG& 48.700(5.179) & 100.980(10.129) & 46.300(5.179) & 0.326(0.038) & 0.513(0.055) & 0.398(0.044) &   565.562 \\
  &&&& & & & & & \\
 &1000& PKBIC & 75.260(1.291) & 38.440(5.588) & 19.740(1.291) & 0.663(0.034) & 0.792(0.014) & 0.722(0.023) &    5.987 \\ 
 && Or-PPGM & 67.760(3.503) & 0.320(0.513) & 27.240(3.503) & 0.995(0.007) & 0.713(0.037) & 0.830(0.026) &    4.147 \\ 
 && Or-LPGM & 57.420(3.844) & 0.240(0.431) & 37.580(3.844) & 0.996(0.008) & 0.604(0.040) & 0.751(0.032) &    0.402 \\ 
 && PDN & 69.200(2.441) & 546.600(20.849) & 25.800(2.441) & 0.112(0.005) & 0.728(0.026) & 0.195(0.008) &    6.503 \\ 
 && ODS & 26.280(9.311) & 92.760(12.697) & 68.720(9.311) & 0.221(0.076) & 0.277(0.098) & 0.245(0.085) &  232.417 \\ 
 && PC & 7.760(1.533) & 34.260(2.221) & 87.240(1.533) & 0.185(0.036) & 0.082(0.016) & 0.113(0.022) &    4.832 \\ 
 && MMHC & 53.740(4.931) & 71.900(9.429) & 41.260(4.931) & 0.429(0.047) & 0.566(0.052) & 0.488(0.047) &  358.392 \\ 
  &&GBiDAG& 60.440(4.820) & 76.940(9.801) & 34.560(4.820) & 0.442(0.044) & 0.636(0.051) & 0.521(0.045) &  4675.011 \\ 
  &&&& & & & & & \\
 &2000& PKBIC & 77.520(0.544) & 26.240(4.897) & 17.480(0.544) & 0.749(0.034) & 0.816(0.006) & 0.780(0.019) &    9.937 \\ 
 && Or-PPGM & 80.800(2.755) & 0.040(0.198) & 14.200(2.755) & 0.999(0.002) & 0.851(0.029) & 0.919(0.017) &    9.954 \\ 
 && Or-LPGM & 76.020(2.317) & 0.000(0.000) & 18.980(2.317) & 1.000(0.000) & 0.800(0.024) & 0.889(0.015) &    0.754 \\ 
 && PDN & 71.100(2.188) & 421.220(30.434) & 23.900(2.188) & 0.145(0.009) & 0.748(0.023) & 0.243(0.013) &   10.005 \\ 
 && ODS & 36.020(11.302) & 82.320(12.173) & 58.980(11.302) & 0.304(0.093) & 0.379(0.119) & 0.337(0.104) &  397.424 \\ 
 && PC & 12.460(2.111) & 46.800(2.304) & 82.540(2.111) & 0.210(0.029) & 0.131(0.022) & 0.161(0.025) &   86.424 \\ 
 && MMHC & 66.000(7.323) & 57.440(10.459) & 29.000(7.323) & 0.537(0.069) & 0.695(0.077) & 0.605(0.072) & 3351.923 \\ 
  &&GBiDAG& 73.380(4.772) & 54.580(8.013) & 21.620(4.772) & 0.575(0.049) & 0.772(0.050) & 0.659(0.049) & 13866.140 \\ 
 \hline
 &&&& & & & & & \\
  &200&PKBIC & 63.320(3.554) & 80.180(6.998) & 45.680(3.554) & 0.442(0.028) & 0.581(0.033) & 0.502(0.028) &   3.024 \\ 
   &&Or-PPGM & 33.840(4.063) & 4.900(2.063) & 75.160(4.063) & 0.875(0.046) & 0.310(0.037) & 0.457(0.042) &   2.760 \\ 
 &&  Or-LPGM & 24.500(2.589) & 6.060(2.385) & 84.500(2.589) & 0.805(0.064) & 0.225(0.024) & 0.351(0.031) &   0.140 \\ 
 &&  PDN & 61.980(4.354) & 641.620(23.913) & 47.020(4.354) & 0.088(0.007) & 0.569(0.040) & 0.153(0.012) &   3.866 \\ 
 &&  ODS & 27.900(6.112) & 108.440(32.255) & 81.100(6.112) & 0.215(0.054) & 0.256(0.056) & 0.227(0.035) &  91.828 \\ 
 &&  PC & 15.180(3.330) & 15.040(3.201) & 93.820(3.330) & 0.502(0.090) & 0.139(0.031) & 0.217(0.045) &   0.089 \\ 
  && MMHC & 28.460(3.877) & 102.080(7.331) & 80.540(3.877) & 0.218(0.029) & 0.261(0.036) & 0.238(0.031) &   0.357 \\
  &&GBiDAG & 34.440(4.630) & 167.920(11.738) & 74.560(4.630) & 0.171(0.023) & 0.316(0.042) & 0.221(0.030) & 140.079 \\ 
  &&&& & & & & & \\
Erdos-Renyi&500&  PKBIC & 88.900(3.144) & 47.780(5.144) & 20.100(3.144) & 0.651(0.025) & 0.816(0.029) & 0.724(0.022) &   4.321 \\ 
 &&  Or-PPGM & 65.520(4.137) & 3.420(1.679) & 43.480(4.137) & 0.951(0.023) & 0.601(0.038) & 0.736(0.029) &   3.224 \\ 
 &&  Or-LPGM & 64.380(3.833) & 2.580(1.527) & 44.620(3.833) & 0.962(0.022) & 0.591(0.035) & 0.731(0.029) &   0.239 \\ 
 &&  PDN & 84.820(3.652) & 494.120(20.495) & 24.180(3.652) & 0.147(0.008) & 0.778(0.034) & 0.247(0.013) &   4.825 \\ 
  && ODS & 44.040(4.458) & 84.660(10.809) & 64.960(4.458) & 0.344(0.038) & 0.404(0.041) & 0.371(0.036) & 158.765 \\ 
 &&  PC & 35.140(3.175) & 25.520(2.816) & 73.860(3.175) & 0.579(0.039) & 0.322(0.029) & 0.414(0.032) &   0.100 \\ 
 &&  MMHC & 51.500(5.632) & 96.420(9.476) & 57.500(5.632) & 0.349(0.040) & 0.472(0.052) & 0.401(0.043) &   0.576 \\
 &&GBiDAG & 57.240(5.605) & 110.220(9.599) & 51.760(5.605) & 0.342(0.035) & 0.525(0.051) & 0.414(0.041) &  79.766 \\
  &&&& & & & & & \\
 &1000&  PKBIC & 99.540(1.460) & 32.820(5.336) & 9.460(1.460) & 0.753(0.031) & 0.913(0.013) & 0.825(0.020) &   6.734 \\ 
 &&  Or-PPGM & 82.040(3.084) & 2.120(1.043) & 26.960(3.084) & 0.975(0.012) & 0.753(0.028) & 0.849(0.019) &   4.152 \\ 
 &&  Or-LPGM & 85.440(3.308) & 0.120(0.328) & 23.560(3.308) & 0.999(0.004) & 0.784(0.030) & 0.878(0.019) &   0.416 \\ 
 &&  PDN & 93.960(2.725) & 361.620(18.856) & 15.040(2.725) & 0.207(0.011) & 0.862(0.025) & 0.333(0.015) &   6.445 \\ 
 &&  ODS & 58.380(5.810) & 74.780(8.115) & 50.620(5.810) & 0.439(0.045) & 0.536(0.053) & 0.482(0.048) & 243.041 \\ 
 &&  PC & 47.620(2.996) & 31.480(3.032) & 61.380(2.996) & 0.602(0.034) & 0.437(0.027) & 0.506(0.029) &   0.117 \\
&&  MMHC & 66.840(4.892) & 79.260(7.537) & 42.160(4.892) & 0.458(0.037) & 0.613(0.045) & 0.524(0.039) &   0.943 \\ 
&&GBiDAG & 70.000(6.214) & 82.060(10.015) & 39.000(6.214) & 0.462(0.047) & 0.642(0.057) & 0.537(0.050) &  65.951 \\
  &&&& & & & & & \\
 &2000&  PKBIC & 104.140(0.783) & 22.820(4.801) & 4.860(0.783) & 0.821(0.031) & 0.955(0.007) & 0.883(0.019) &  11.455 \\ 
  && Or-PPGM & 91.900(2.323) & 2.940(1.361) & 17.100(2.323) & 0.969(0.014) & 0.843(0.021) & 0.902(0.014) &   6.229 \\ 
 &&  Or-LPGM & 98.020(2.369) & 0.040(0.198) & 10.980(2.369) & 1.000(0.002) & 0.899(0.022) & 0.947(0.012) &   0.774 \\ 
  && PDN & 97.560(1.864) & 257.700(11.438) & 11.440(1.864) & 0.275(0.010) & 0.895(0.017) & 0.421(0.012) &   9.936 \\ 
 &&  ODS & 66.440(4.625) & 70.460(7.276) & 42.560(4.625) & 0.486(0.039) & 0.610(0.042) & 0.541(0.040) & 385.807 \\ 
 &&  PC & 57.540(3.303) & 35.100(2.150) & 51.460(3.303) & 0.621(0.025) & 0.528(0.030) & 0.571(0.027) &   0.141 \\ 
 &&  MMHC & 76.400(4.412) & 64.420(7.680) & 32.600(4.412) & 0.544(0.040) & 0.701(0.040) & 0.612(0.039) &   1.797 \\ 
 &&GBiDAG & 80.120(4.893) & 61.940(6.885) & 28.880(4.893) & 0.565(0.039) & 0.735(0.045) & 0.638(0.040) &  64.460 \\
			
			\bottomrule
			
		\end{longtable}
	\end{scriptsize}
	
\end{center}

			\begin{table}[ht]
				\centering
    \fontsize{6}{8}\selectfont	
				{\scriptsize
					\caption{\label{table_spare}\scriptsize{ Monte Carlo  means of TP, PPV, Se and runtime obtained by simulating 50 samples 
							from Erdos-Renyigraphs with $p = 10$ and $p=100$ variables with Poisson node conditional distribution. The probability  of 
							edge inclusion $\pi$ runs from 0.4 to 0.9 for $p=10$, and from 0.04 to 0.09 for $p=100$. The levels of significance of tests $\alpha=0.05$ for $p=100$, and $\alpha=0.01$ for $n=100$. }}
					\begin{tabular}{c|c|rrrr|rrrr|rrrr}
						\hline
				&&\multicolumn{4}{c|} {\bf PKBIC}&\multicolumn{4}{c|}{\bf Or-PPGM }&\multicolumn{4}{c}{\bf Or-LPGM }\\
					$p$ & $\pi$ & TP  & PPV & Se & time & TP  & PPV & Se & time & TP  & PPV & Se & time \\ 
						\hline
10&0.4& 0.98& 0.93& 0.95& 0.30& 0.96& 0.92& 0.94& 0.97& 0.90& 0.96& 0.93& 0.07\\
&0.5& 0.99& 0.96& 0.98& 0.32& 0.96& 0.88& 0.92& 1.48& 0.95& 0.98& 0.97& 0.07\\
&0.6& 0.99& 0.94& 0.96& 0.43& 0.95& 0.72& 0.82& 1.15& 0.95& 0.97& 0.96& 0.08\\
&0.7& 0.99& 0.93& 0.96& 0.35& 0.96& 0.77& 0.85& 1.26& 0.98& 0.98& 0.98& 0.07\\
&0.8& 0.99& 0.83& 0.90& 0.37& 0.97& 0.62& 0.76& 1.09& 0.99& 0.87& 0.92& 0.08\\
&0.9& 1.00& 0.83& 0.90& 0.42& 0.99& 0.61& 0.75& 1.11& 0.99& 0.91& 0.95& 0.07 \\ 
&&&&&&&&&&&\\
100&0.04&0.87&  0.98&  0.92& 26.86&  0.87&  0.93&  0.90&  8.75&  0.79&  0.98&  0.87&  0.75\\
&0.05&0.90&  0.97&  0.93& 34.94&  0.91&  0.90&  0.90& 16.05&  0.83&  0.97&  0.90&  0.77\\
&0.06& 0.92&  0.94&  0.93& 37.96&  0.93&  0.94&  0.93& 90.70&  0.87&  0.96 & 0.91&  0.77\\
&0.07&0.94 & 0.95 & 0.94& 43.16 & 0.95 & 0.90 & 0.92& 41.51 & 0.88 & 0.95 & 0.91 & 0.77\\
&0.08& 0.94 &  0.93 &  0.93 & 42.54 &  0.95 &  0.90 &  0.92& 102.48  & 0.89 &  0.94 & 0.91 &  0.76\\
&0.09& 0.95  & 0.91 &  0.93 & 46.43 &  0.96 &  0.91  & 0.93& 213.32 &  0.90  & 0.95 & 0.93 &  0.76\\

						\hline
					\end{tabular}
				}
			\end{table}
\subsection{Figure}
Figure \ref{PPVSe-10}, and Figure \ref{PPVSe-100} plot Precision $ P,$ and Recall $ R$ for each of  methods considered in Section 5 of the main paper and for the three types of networks. Two different graph dimensions, i.e., $p=10, 100$, and three graph structures (see Figure 1 and Figure 2 in the main paper) are considered.
\begin{figure}[htbp]
	\begin{center}
		\includegraphics[width = 0.9\linewidth, height=0.7\textheight]{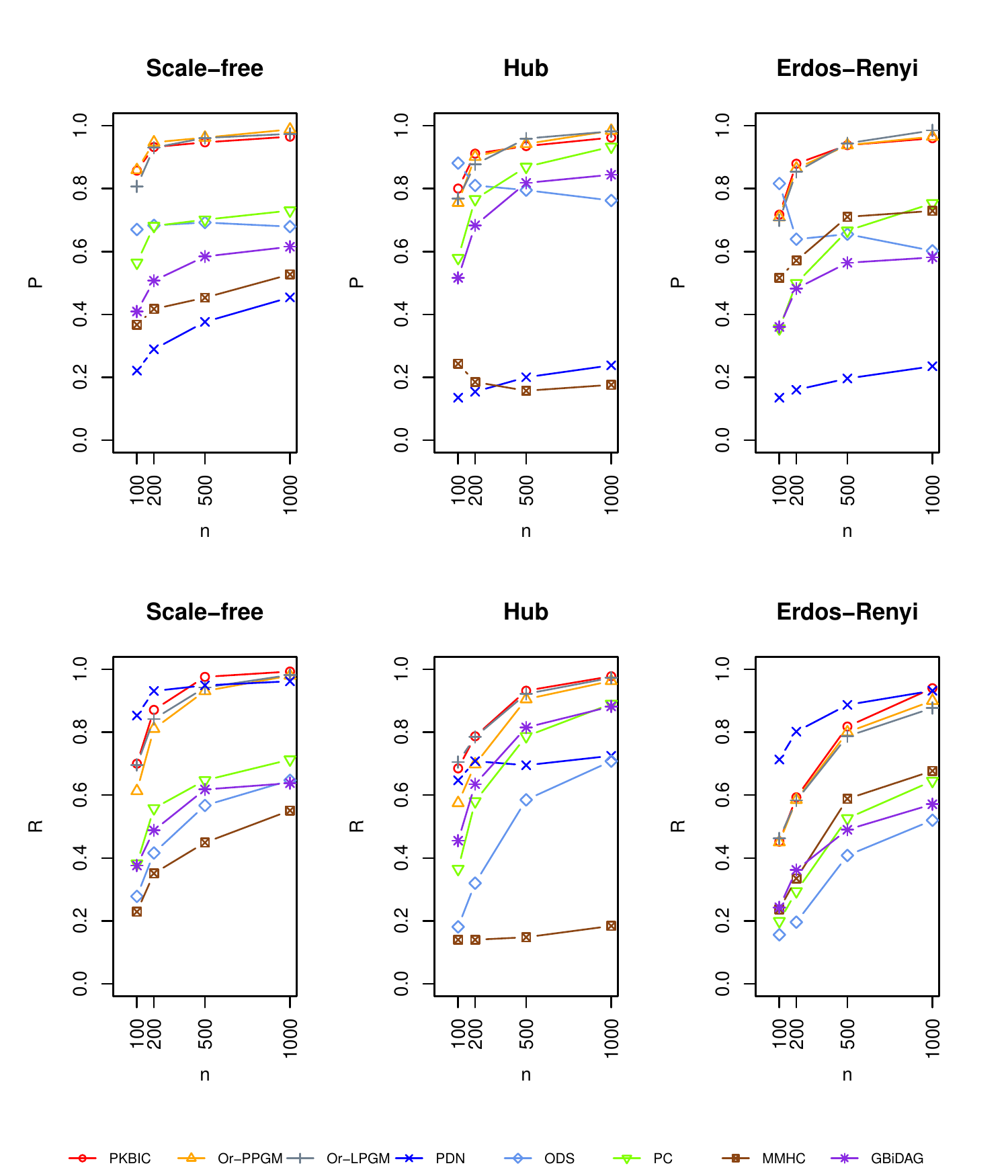}
		\caption{\scriptsize Precision $ P,$ and Recall $ R$ of the considered algorithms: PKBIC; Or-PPGM; Or-LPGM; PDN; ODS; MMHC;  PC; and BiDAG for the three types of graphs in Figure 1 of the main paper with $p=10$ and sample sizes $n=100,200,500,1000.$}
		\label{PPVSe-10}
	\end{center}
\end{figure}

\begin{figure}[htbp]
	\begin{center}
		\includegraphics[width = 0.9\linewidth, height=0.7\textheight]{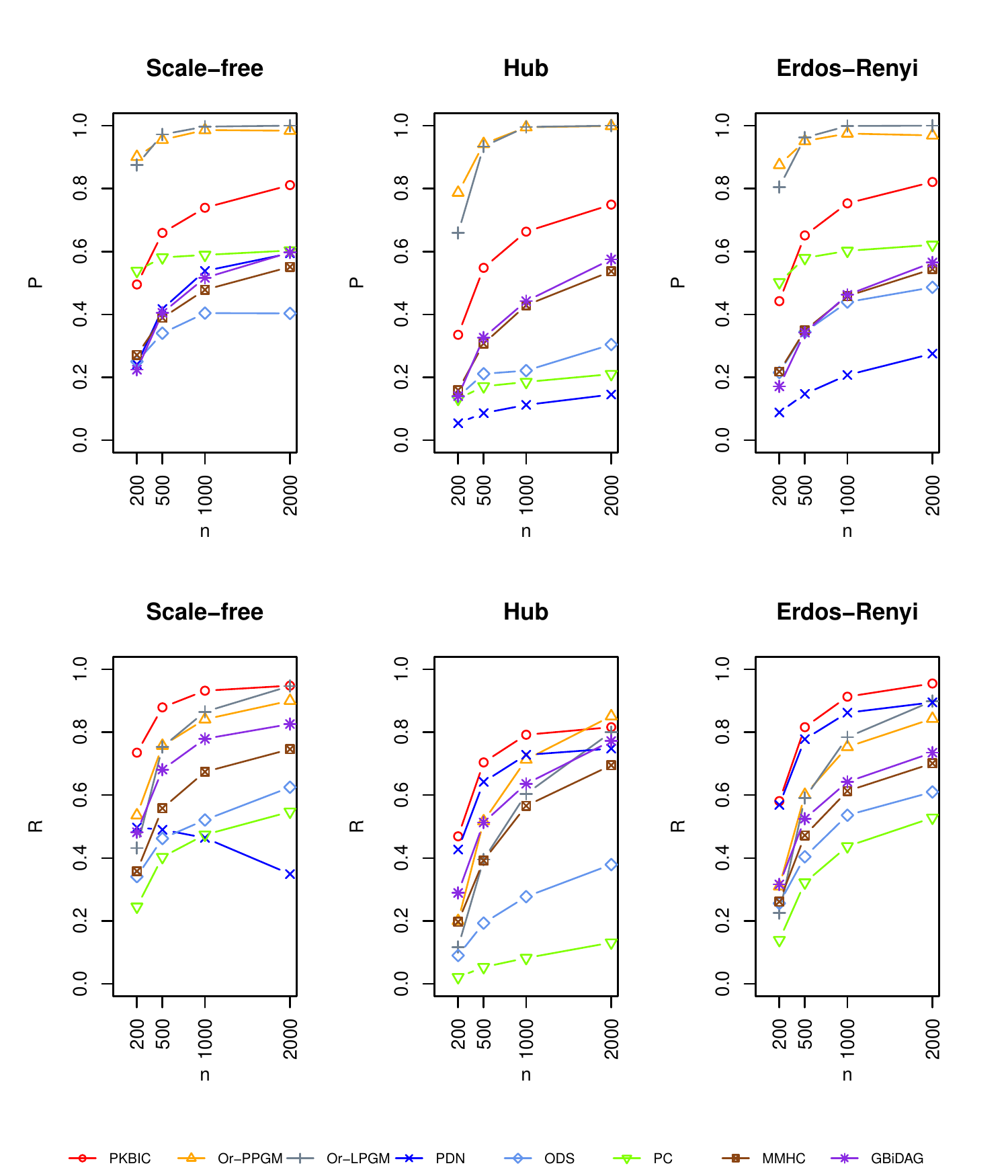}
		\caption{\scriptsize Precision $ P,$ and Recall $ R$ of the considered algorithms: PKBIC; Or-PPGM; Or-LPGM; PDN; ODS; MMHC;  PC; and BiDAG for the three types of graphs in Figure 2 of the main paper with $p=100$ and sample sizes $n=200,500,1000,2000.$}
	\label{PPVSe-100}
	\end{center}
\end{figure}

\newpage
\bibliographystyle{plainnat}

\bibliography{Supplementary}